\documentclass{article}
\usepackage[utf8]{inputenc}
\usepackage{amsfonts}
\usepackage{url,hyperref}
\usepackage{amsmath, amsthm, amssymb, subcaption}
\usepackage[capitalize]{cleveref}
\usepackage{thmtools}
\usepackage{thm-restate}
\usepackage{xcolor}
\usepackage[maxbibnames=99]{biblatex}
\usepackage{todonotes}
\usepackage{soul}
\usepackage{complexity}
\usepackage{xspace}
\usepackage{authblk}
\usepackage{comment}
\usepackage{tikz}
\usetikzlibrary{decorations.pathmorphing}

\definecolor{mypurple}{RGB}{208,134,255}
\definecolor{myblue}{RGB}{10,120,253}
\definecolor{myred}{RGB}{205,7,7}
\definecolor{mygreen}{RGB}{0,153,0}
\definecolor{myorange}{RGB}{244,154,33}
\definecolor{myyellow}{RGB}{225,225,2}
\definecolor{mydarkpurple}{RGB}{127,0,225}

\addbibresource{ref.bib}

\declaretheorem[name=Theorem]{theorem}
\declaretheorem[name=Lemma, sibling=theorem]{lemma}

\declaretheorem[name=Corollary, sibling=theorem]{corollary}

\declaretheorem[name=Definition, sibling=theorem]{definition}
\declaretheorem[name=Observation, sibling=theorem]{observation}

\DeclareMathOperator{\Left}{Left}
\def\S{\mathcal{S}} 

\newcommand{\stsep}{$s$-$t$-separator\xspace}

\newcommand{\stseps}{$s$-$t$-separators\xspace}
\newcommand{\stpath}{$s$-$t$-path\xspace}
\newcommand{\stpaths}{$s$-$t$-paths\xspace}
\newcommand{\stwalk}{$s$-$t$-walk\xspace}

\title{Minimum Separator Reconfiguration}
\author[1]{Guilherme~C.~M.~Gomes}
\author[2]{Clément~Legrand-Duchesne}
\author[3]{Reem~Mahmoud}
\author[4]{Amer~E.~Mouawad}
\author[5]{Yoshio~Okamoto}
\author[1]{Vinicius~F.~dos~Santos}
\author[6]{Tom~C.~van~der~Zanden}

\affil[1]{Department of Computer Science, Federal University of Minas Gerais, Belo Horizonte, Brazil\thanks{\{gcm.gomes,viniciussantos\}@dcc.ufmg.br}}
\affil[2]{LaBRI, CNRS, Université de Bordeaux, Bordeaux, France.\thanks{clement.legrand@labri.fr}}
\affil[3]{Virginia Commonwealth University, Richmond, VA, USA\thanks{mahmoudr@vcu.edu}}
\affil[4]{Department of Computer Science, American University of Beirut, Beirut, Lebanon\thanks{aa368@aub.edu.lb}}
\affil[5]{Graduate School of Informatics and Engineering, The University of Electro-Communications, Chofu, Japan\thanks{okamotoy@uec.ac.jp}}
\affil[6]{Department of Data Analytics and Digitalisation, Maastricht University, Maastricht, The Netherlands\thanks{T.vanderZanden@maastrichtuniversity.nl}}

\date{}

\newtheorem{question}{Open Problem}

\begin{document}

\maketitle

\begin{abstract}
We study the problem of reconfiguring one minimum \stsep $A$ into another minimum \stsep $B$ in some $n$-vertex graph $G$ containing two non-adjacent vertices $s$ and $t$. We consider several variants of the problem as we focus on both the token sliding and token jumping  models. Our first contribution is a polynomial-time algorithm that computes (if one exists) a minimum-length sequence of slides transforming $A$ into $B$. We additionally establish that the existence of a sequence of jumps (which need not be of minimum length) can be decided in polynomial time (by an algorithm that also outputs a witnessing sequence when one exists). In contrast, and somewhat surprisingly, we show that deciding if a sequence of at most $\ell$ jumps can transform $A$ into $B$ is an $\textsf{NP}$-complete problem. To complement this negative result, we investigate the parameterized complexity of what we believe to be the two most natural parameterized counterparts of the latter problem; in particular, we study the problem of computing a minimum-length sequence of jumps when parameterized by the size $k$ of the minimum \stseps and when parameterized by the number of jumps $\ell$. For the first parameterization, we show that the problem is fixed-parameter tractable, but does not admit a polynomial kernel unless $\textsf{NP} \subseteq \textsf{coNP/poly}$. We complete the picture by designing a kernel with $\mathcal{O}(\ell^2)$ vertices and edges for the length $\ell$ of the sequence as a parameter. 
\end{abstract}

\section{Introduction}
We study the problem of computing reconfiguration sequences between minimum \stseps. A set $S$ of vertices in a graph $G$ is an \emph{\stsep} if vertices $s$ and $t$ are separated in $G - S$, i.e, $s$ and $t$ belong to different components of $G - S$. A \emph{minimum \stsep} is an \stsep of minimum size. We always let $k$ denote the size of a minimum \stsep in $G$. The token jumping (TJ-) (resp. token sliding (TS-)) \textsc{Minimum Separator Reconfiguration (MSR)} problem is defined as follows. Given a graph $G$ and minimum \stseps $A$ and $B$, the goal is to determine if there exists a sequence of sets $A=S_1,S_2,\dots,S_r=B$, such that $S_i$ is a minimum \stsep, $S_i:=(S_{i-1}\setminus\{v\})\cup\{u\}$ for some $v\in S_{i-1}$, and $u\in V(G)\setminus S_{i-1}$ (resp. $u\in N_{G - S_{i-1}}(v)$) for every $i\in[r] \setminus \{1\}$.

\paragraph{Motivation.} 
Reconfiguration problems arise in various applications and, as a result, have gained considerable attention in recent literature \cite{dominating-set-recon,independent-set-recon,coloring-recon,perfect-matching-recon}. They appear in power supply problems, such as operating switches in a network to transform between different arrangements of power supply from stations to homes without causing a blackout \cite{power-supply}. They also show up in evolutionary biology, such as in the transformation of genomes via mutations \cite{genome-rearrangement}. Moreover, reconfiguration problems contribute to numerous fields of study, such as computational geometry with polygon reconfiguration \cite{polygons}, or statistical physics with the transformation of a particle's spin system \cite{statistical-physics}.
At the same time, vertex separators are useful in the factorization of sparse matrices \cite{matrix-factorization}, as well as, partitioning hypergraphs \cite{hypergraph-partitioning}. They also lend themselves to problems in cyber security and telecommunication \cite{telecom}, bioinformatics and computational biology \cite{bioinformatics}, and many divide-and-conquer graph algorithms \cite{vertex-sep}. Given the importance of vertex separators, we believe that it is a natural question to study the problem of reconfiguration between different vertex separators.

\paragraph{Related work.} Gomes, Nogueira, and dos Santos \cite{separator-recon} initiated the study of the problem of computing reconfiguration sequences between \stseps, $A$ and $B$, without restricting the size of the separators (to minimum). We call the corresponding problem \textsc{Vertex Separator Reconfiguration (VSR)}. They show that for token sliding, checking if $A$ can be transformed to $B$, i.e., \textsc{Vertex Separator Reconfiguration}, is a $\textsf{PSPACE}$-complete problem even on bipartite graphs. In contrast, under the token jumping model the problem becomes $\textsf{NP}$-complete for bipartite graphs. 

\paragraph{Our results.} Unlike the \textsc{VSR} problem, the requirement in the \textsc{MSR} problem that the separators in the reconfiguration sequence must be minimum introduces a lot of structure. In particular, we can rely on the duality between minimum separators and disjoint paths, observing that tokens are always constrained to move on a set of disjoint \stpaths, which we call \textit{canonical paths}. Using this property, we prove that, in an (optimal) solution, tokens always move ``forward'' towards their target locations and we never need to take a step back. This immediately prevents the problems from being \textsf{PSPACE}-complete since this gives a (polynomial) bound on the length of a solution. In fact, the ``always-forward'' property immediately implies a greedy algorithm that decides whether we can reconfigure one \stsep into another or not for both the token sliding and token jumping models. We then turn our attention to finding shortests reconfiguration sequences. While \textsc{TS-MSR} is still solvable in polynomial-time, finding an optimal solution for the \textsc{TJ-MSR} problem is shown to be \NP-complete by a reduction from \textsc{Vertex Cover}; finding the largest set of vertices that can be ``skipped'' by jumping over them is ``similar'' to finding a minimum vertex cover. 

We give a complete characterization of the (parameterized) complexity of the \textsc{TJ-MSR} problem for its natural parameterizations. In particular, we complement our \NP-hardness result by showing that the problem of finding a shortest sequence of token jumps is fixed-parameter tractable when parameterized by $k$, the size of a minimum separator; this is accomplished by further exploiting the structure imposed by the separators' minimality as yes-instances have pathwidth bounded by $\mathcal{O}{k}$. Unfortunately, unless $\textsf{NP} \subseteq \textsf{coNP/poly}$, the problem admits no polynomial kernel under this parameterization. Finally, we show that if we parameterize the problem by the length of the reconfiguration sequence, $\ell$, then we obtain a kernel with $\mathcal{O}(\ell^2)$ vertices and edges. 

\paragraph{Open problems.} It remains an interesting open question whether the jumping variant of the \textsc{VSR} problem is always in $\textsf{NP}$ or whether the problem is $\textsf{PSPACE}$-complete on general graphs. Recall that, when restricted to bipartite graphs, \textsc{TS-VSR} is $\textsf{PSPACE}$-complete while \textsc{TJ-VSR} is $\textsf{NP}$-complete~\cite{separator-recon}. We note that this rather intriguing disparity in complexity between the two models is an artifact of the complexity of the \textsc{Vertex Cover Reconfiguration} problem on bipartite graphs~\cite{DBLP:journals/talg/LokshtanovM19}; vertex covers in a bipartite graph 
$G = (L \cup R, E)$ correspond (one-to-one) to \stseps in the graph $G'$ obtained from $G$ by adding a vertex $s$ adjacent to every vertex in $L$ and a vertex $t$ adjacent to every vertex in $R$. 

\begin{question}
Is the \textsc{Vertex Separator Reconfiguration} problem under the token jumping model in $\textsf{NP}$? 
\end{question}

From the viewpoint of parameterized complexity, it is known that the \textsc{Vertex Cover Reconfiguration} problem (under both the jumping and sliding models) is fixed-parameter tractable when parameterized by the vertex cover size~\cite{DBLP:journals/algorithmica/MouawadN0SS17}. Unfortunately, this does not imply any positive result for the reconfiguration problem of (non-minimum) \stseps. However, it was shown in~\cite{DBLP:journals/algorithms/MouawadNRS18} that \textsc{Vertex Cover Reconfiguration} is $\textsf{W[1]}$-hard when parameterized by $\ell$, the length of a reconfiguration sequence, even when restricted to bipartite graphs. The model that the reduction is based on is the so-called token addition/removal model but the reduction can be adapted for the jumping and sliding models. We provide such adaptations in the appendix, which imply $\textsf{W[1]}$-hardness (of \textsc{Vertex Cover Reconfiguration} parameterized by $\ell$ and restricted to bipartite graphs) under the token jumping and token sliding models\footnote{There are two variants of the problem that one can consider under token sliding, i.e., either we allow multiple tokens to occupy the same vertex or not.} (Appendix~\ref{appendix-hardness}). Therefore, the reconfiguration problem for non-minimum \stseps, i.e., \textsc{Vertex Separator Reconfiguration}, is also $\textsf{W[1]}$-hard when parameterized by $\ell$ (under both models). We conclude this section with what we believe is the most relevant open question in this direction.

\begin{question}
Is the \textsc{Vertex Separator Reconfiguration} problem (under either the token jumping or token sliding model) fixed-parameter tractable when parameterized by the size of the separators?
\end{question}

We note that if we parameterize by both the size of the separators and the length of a reconfiguration sequence then, if we additionally assume that $A$ and $B$ are minimal, the problem of computing a sequence of minimal separators transforming $A$ to $B$ is fixed-parameter tractable (for both token jumps and token slides). The algorithm consists of first applying the treewidth reduction theorem of~\cite{DBLP:journals/talg/MarxOR13} followed by a reduction to the model checking problem in order to exploit Courcelle's dynamic programming machinery for graphs of bounded treewidth~\cite{DBLP:journals/iandc/Courcelle90}. We omit all the details and simply observe that the treewidth reduction theorem outputs a graph (of bounded treewidth) that preserves both the vertices participating in minimal \stseps as well as the structure of the graph induced by those vertices (which is required for token sliding). Once we obtain the aforementioned graph, we can convert the reconfiguration problem into a model checking problem by creating a monadic second-order formula that existentially quantifies over the required number of \stseps and verifies that all the necessary conditions (e.g., jumping, sliding, separation) are satisfied by the sequence of sets.

\section{Preliminaries}

We denote the set of natural numbers by $\mathbb{N}$ and, for $n \in \mathbb{N}$, we let $[n] = \{1, 2, \dots, n\}$.

\paragraph{Graphs.} We assume that each graph $G$ is finite, simple, and undirected.
We let~$V(G)$ and $E(G)$ denote the vertex set and edge set of $G$, respectively. 
The \emph{open neighborhood} of a vertex $v$ is denoted by $N_G(v) = \{u \in V(G) \mid \{u,v\} \in E(G)\}$ and the \emph{closed neighborhood} by $N_G[v] = N_G(v) \cup \{v\}$. 
For a set $S \subseteq V(G)$ of vertices, we define $N_G(S) = \bigcup_{v \in S} N_G(v) \setminus S$ and $N_G[S] = N_G(S) \cup S$. 
The subgraph of~$G$ \emph{induced} by $S$ is denoted by $G[S]$, where $G[S]$ has vertex set~$S$ and edge set $\{\{u,v\} \in E(G) \mid u,v \in S\}$. 
We let $G - S = G[V(G) \setminus S]$.

A \emph{walk} of length $q$ from $v_0$ to $v_q$ in $G$ is a vertex sequence $v_0, \dots, v_q$ such that $\{v_i,v_{i + 1}\} \in E(G)$ for all $i \in \{0, \dots, q-1\}$.
It is a \emph{path} if all vertices are distinct. 
An \emph{\stpath} is one with endpoints $s$ and $t$. Two \stpaths $P_1$ and $P_2$ are \emph{(internally) disjoint} if $V(P_1)\cap V(P_2) = \{s,t\}$. 
The following is a celebrated theorem attributed to Menger~\cite{MengerZurAK} and later generalized and made algorithmic by Ford and Fulkerson~\cite{10.5555/1942094}.

\begin{theorem}
The size of a minimum \stsep is equal to the maximum number of pairwise internally disjoint \stpaths. Moreover, a maximum set of pairwise internally disjoint \stpaths can be computed in polynomial time. 
\end{theorem}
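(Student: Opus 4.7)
The plan is to reduce the statement to the max-flow/min-cut theorem for unit-capacity networks via a standard vertex-splitting construction. First I would build a directed auxiliary network $H$ from $G$: replace every vertex $v \in V(G) \setminus \{s,t\}$ by two copies $v_{\text{in}}, v_{\text{out}}$ joined by a directed arc $v_{\text{in}} \to v_{\text{out}}$ of capacity $1$; for every edge $\{u,v\} \in E(G)$ add two arcs $u_{\text{out}} \to v_{\text{in}}$ and $v_{\text{out}} \to u_{\text{in}}$ of capacity $|V(G)|$; leave $s$ as a pure source and $t$ as a pure sink. The only bottleneck capacities then lie on the internal vertex-arcs, each of capacity one, which is exactly what forces vertex-disjointness of paths and lets cuts be translated into vertex-separators.

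Next I would establish the two directions of the equality through this network. For the easy inequality (minimum separator $\ge$ max disjoint paths), any \stsep $S$ must meet every \stpath, and in a family of pairwise internally disjoint \stpaths no vertex of $S$ lies on two paths; hence $|S|$ is at least the number of paths. For the reverse inequality I would argue that an integral maximum $s$--$t$ flow in $H$ of value $f$ decomposes (by flow decomposition applied to a $0/1$-flow on the internal arcs) into $f$ pairwise internally disjoint \stpaths of $G$. Dually, a minimum $s$--$t$ cut of capacity $f$ in $H$ can be chosen to use only unit-capacity internal arcs (since any edge arc alone has capacity exceeding the maximum flow value $\le |V(G)|-2$), and the set $S \subseteq V(G)\setminus\{s,t\}$ whose internal arcs appear in the cut is an \stsep of $G$ of size $f$. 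The max-flow/min-cut theorem then yields $\max \ge \min$, closing the equality.

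For the algorithmic part, any polynomial-time max-flow algorithm applied to $H$ produces both an integral maximum flow and, via the residual graph, a minimum $s$--$t$ cut; Ford--Fulkerson suffices since the maximum flow is at most $|V(G)|-2$, so at most that many augmenting-path iterations occur, each of which is a BFS in $H$. The flow-decomposition step that converts the flow into disjoint \stpaths is straightforward in linear time. The main technical obstacle is ensuring the vertex-disjointness constraint is faithfully encoded and that minimum cuts never need edge arcs; both points are handled by the vertex-splitting together with inflating the edge-arc capacities beyond any feasible flow value, after which the translation between cuts in $H$ and \stseps in $G$ is immediate.
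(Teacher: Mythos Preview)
Your proposal is correct and follows the standard vertex-splitting reduction to max-flow/min-cut; there is no gap. However, there is nothing to compare against: the paper does not prove this theorem at all. It is stated in the Preliminaries section as a classical result, attributed to Menger for the combinatorial equality and to Ford and Fulkerson for the algorithmic content, and is simply cited without proof. Your write-up therefore supplies strictly more than the paper does on this point.
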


\paragraph{Parameterized complexity.} 
A \emph{parameterized problem} $Q$ is a subset of $\Sigma^* \times \mathbb{N}$, where the second component denotes the \emph{parameter}. A parameterized problem is \emph{fixed-parameter tractable} with respect to a parameter $\kappa$, $\textsf{FPT}$ for short, if there exists an algorithm to decide whether $(x, \kappa) \in Q$ in time $f(\kappa) \cdot |x|^{\mathcal{O}(1)}$, where $f$ is a computable function. 

We say that two instances are \emph{equivalent} if they are both yes-instances or both no-instances. A \emph{kernelization algorithm}, or a \emph{kernelization} for short, is a polynomial-time algorithm that reduces an input instance $(x, \kappa)$ into an equivalent instance $(x', \kappa')$ such that $|x'|, \kappa' \leq f(\kappa)$, for some computable function $f$. Such $x'$ is called a \emph{kernel}. Every fixed-parameter tractable problem admits a kernel, however, possibly of exponential or worse size. For efficient algorithms it is therefore most desirable to obtain kernels of polynomial, or even linear size.
 
The \emph{$\textsf{W}$-hierarchy} is a collection of parameterized complexity classes $\textsf{FPT} \subseteq \textsf{W[1]} \subseteq \ldots \subseteq \textsf{W[t]}$. The conjecture $\textsf{FPT} \subsetneq \textsf{W[1]}$ can be seen as an analogue of the conjecture that $\textsf{P} \subsetneq \textsf{NP}$. Therefore, showing hardness in the parameterized setting is usually accomplished by establishing an $\textsf{FPT}$-reduction from a $\textsf{W}$-hard problem. We refer to the textbooks~\cite{DBLP:books/sp/CyganFKLMPPS15,DBLP:series/mcs/DowneyF99} for extensive background on parameterized complexity.

\section{Preprocessing and general observations}

We let $(G,s,t,A,B)$ denote an instance of the \textsc{ Minimum Separator Reconfiguration} problem, where $A$ and $B$ are minimum \stseps of size $k$. The model, i.e., jumping vs.\ sliding, will be clear from the context. We begin by making some general observations (that hold for both the jumping and sliding models) about the structure of sequences of minimum \stseps which we make extensive use of. We also describe some preprocessing operations: we assume they have been applied on any instance in the rest of the paper. We begin by introducing the notion of canonical paths, which describe the possible locations for each token.

\begin{definition}[Canonical paths]
Let $A$ and $B$ denote the starting and target separators, respectively. We begin by fixing a maximum set of pairwise internally disjoint \stpaths, which has size $k=|A|$ (since $A$ is a minimum separator). We may assume that all paths are \emph{chordless}, i.e., if two vertices of the same path are adjacent the edge connecting them is part of the path; if a path is not chordless we can decrease the length of the path by shortcutting along the chord. We repeat this procedure until all paths are chordless; this terminates since in each iteration the total number of vertices involved in the paths decreases. We call these $k$ chordless pairwise internally disjoint \stpaths $P_1,\ldots,P_k$ the \emph{canonical paths}.
\end{definition}

Note that canonical paths might not be uniquely defined; there may be multiple ways to choose $k$ chordless pairwise internally disjoint \stpaths. It suffices to fix any set of such paths as canonical.

\begin{lemma}\label{lemma:sep}
Let $S$ be a minimum \stsep. Then, $S$ contains exactly one vertex of each canonical path.
\end{lemma}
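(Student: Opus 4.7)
The plan is to prove the two inequalities $|S \cap V(P_i)| \geq 1$ for each $i$ and $\sum_{i=1}^k |S \cap V(P_i)| \leq k$, then combine them.

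First I would establish the lower bound: each canonical path $P_i$ must contain at least one vertex of $S$. The argument is that $P_i$ is, by definition, an \stpath in $G$, and $S$ is an \stsep, so $s$ and $t$ lie in different components of $G - S$. If $V(P_i) \cap S = \emptyset$, then $P_i$ would survive intact in $G - S$ and provide a walk from $s$ to $t$, contradicting that $S$ separates them. (Here I would note the harmless convention that $s,t \notin S$, which follows from the definition of \stseps as sets whose removal disconnects $s$ from $t$.) This yields $|S \cap V(P_i)| \geq 1$ for every $i \in [k]$.

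Next I would use disjointness to get the matching upper bound. Because the canonical paths $P_1,\ldots,P_k$ are pairwise internally disjoint and $s,t \notin S$, the sets $S \cap V(P_i) = S \cap (V(P_i) \setminus \{s,t\})$ are pairwise disjoint subsets of $S$. Therefore
\[
k \;\leq\; \sum_{i=1}^{k} |S \cap V(P_i)| \;=\; \Bigl| S \cap \bigcup_{i=1}^{k} V(P_i) \Bigr| \;\leq\; |S| \;=\; k,
\]
where the first inequality uses the previous paragraph and the last equality uses that $S$ is a minimum \stsep (so $|S|=k$ by Menger's theorem). Equality throughout forces $|S \cap V(P_i)| = 1$ for every $i$, which is the conclusion.

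I do not foresee a genuine obstacle: the lemma is a direct pigeonhole consequence of Menger's theorem together with the definition of canonical paths. The only subtlety worth flagging in the write-up is the convention that a separator excludes $s$ and $t$ (so that ``vertex of a canonical path'' always means an internal vertex), and that chordlessness of the canonical paths plays no role here — the argument only needs that they are $k$ pairwise internally disjoint \stpaths.
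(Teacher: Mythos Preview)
Your proposal is correct and follows essentially the same approach as the paper: show each canonical path must be hit (otherwise it survives as an \stpath in $G-S$), then use internal disjointness plus $|S|=k$ to force exactly one hit per path. Your write-up is simply a more detailed version of the paper's two-sentence argument, and your remarks about the $s,t\notin S$ convention and the irrelevance of chordlessness are accurate.
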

\begin{proof}
The set $S$ has to contain at least one vertex of each path, since otherwise there would be an \stpath in $G - S$. Since the number of paths is equal to the size of a minimum separator and the paths are vertex disjoint, it has to be exactly one of each.
\end{proof}

The next observation follows immediately from Lemma~\ref{lemma:sep}.

\begin{observation}\label{obs:moves}
For both token sliding and token jumping, each token is confined to its respective canonical path and in the case of sliding, a token can only slide to either one of its two neighbors along its canonical path.
\end{observation}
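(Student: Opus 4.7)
The plan is to derive the observation directly from Lemma~\ref{lemma:sep}, which already pins down the intersection of any minimum \stsep with each canonical path to a single vertex. This gives a canonical bijection between the $k$ tokens and the $k$ paths $P_1,\dots,P_k$: for every minimum \stsep $S$ appearing in a reconfiguration sequence, assign token $j$ to the unique vertex of $S \cap V(P_j)$. Once this labelling is fixed, both parts of the observation reduce to checking what an individual move is allowed to do to it.

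For the confinement claim, I would consider any valid transition $S_{i-1} \to S_i$ with $S_i = (S_{i-1} \setminus \{v\}) \cup \{u\}$. Since $S_{i-1}$ is a minimum \stsep, Lemma~\ref{lemma:sep} places $v$ on a unique canonical path $P_j$, while the remaining $k-1$ vertices of $S_{i-1}$ are distributed one per path among the canonical paths $P_h$ with $h \ne j$. The set $S_i$ is again a minimum \stsep, so it too must contain exactly one vertex of each canonical path. The $k-1$ unchanged tokens already cover every $P_h$ with $h \ne j$, so $u$ is forced to lie on $P_j$; otherwise $V(P_j) \cap S_i = \emptyset$, making $P_j$ an \stpath in $G - S_i$ and contradicting that $S_i$ is a separator. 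Hence every token stays on its own canonical path throughout the reconfiguration sequence, which handles both the jumping and the sliding cases.

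For the sliding restriction, one additionally has $u \in N_{G - S_{i-1}}(v)$, so in particular $\{u,v\} \in E(G)$. By the previous paragraph, both endpoints lie in $V(P_j)$; since the canonical paths were chosen chordless, every edge of $G$ with both endpoints in $V(P_j)$ is an edge of $P_j$ itself. Therefore $u$ is one of the (at most two) neighbors of $v$ along $P_j$, which is exactly the claim.

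There is essentially no obstacle here: Lemma~\ref{lemma:sep} carries all of the nontrivial work, and the observation is a direct bookkeeping consequence. The only subtle point worth emphasising is that the chordlessness of the canonical paths is indispensable for the sliding half, since without it a slide could jump a token to a non-consecutive vertex of $P_j$ via a chord; this is precisely why the chordless property was built into the definition of canonical paths in the first place.
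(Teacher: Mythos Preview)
Your proof is correct and follows essentially the same approach as the paper, which simply states that the observation follows immediately from Lemma~\ref{lemma:sep}; you have merely spelled out the immediate details (the pigeonhole argument forcing $u$ onto $P_j$, and the use of chordlessness for the sliding claim) that the paper leaves implicit.
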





Thus, our view of the problem is that we are sliding (resp.\ jumping) tokens along a set of $k$ paths and that each token is confined to its respective path. We now show that we can always slide (resp. jump) a token in the direction of the target separator $B$ and never have to do a ``backward'' move.

Let $L(i)$ denote the number of vertices on the canonical path $P_i$, including $s$ and $t$.
Let $u_{i,1}, \dots u_{i, L(i)}$ denote the vertices on the canonical path $P_i$ in the order in which they appear on it, with $u_{i,1} = s$ and $u_{i, L(i)} = t$.
Let $a_i$ and $b_i$ denote the indices such that $V(P_i) \cap A = \{u_{i,a_i}\}$ and $V(P_i) \cap B = \{u_{i,b_i}\}$, i.e., $a_i$ is the index of the starting vertex of the token on $P_i$ and $b_i$ the index of the goal vertex for this token. Let $l_i = \min(a_i,b_i)$ and $r_i= \max(a_i,b_i)$. We first show that, in any (shortest) reconfiguration sequence, we only need to consider configurations of tokens in which, for all $i$, the token on the path $P_i$ remains between (or on) $u_{i,l_i}$ and $u_{i,r_i}$.

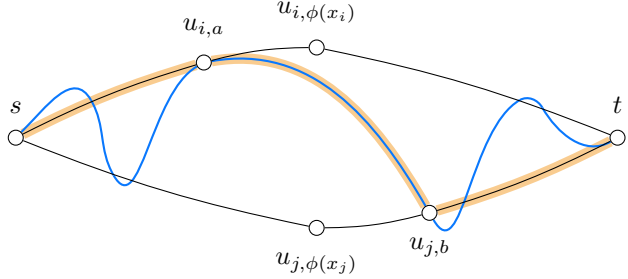
\begin{figure}[h!]
  \begin{center}
      \pgfdeclarelayer{background}
      \pgfsetlayers{background,main}
              
    \begin{tikzpicture}[decoration={snake,pre length=1mm,post length=1mm,segment length=4mm}]
      \node[draw=black,fill=white,circle,inner sep = 2pt] (s) at (0,0) {};
      \node[above=.1cm of s] {$s$};
      \node[draw=black,fill=white,circle,inner sep = 2pt] (t) at (8,0) {};
      \node[above=.1cm of t] {$t$};
      \node[draw=black,fill=white,circle,inner sep = 2pt] (uia) at (2.5,1) {};
      \node[above=.1cm of uia] {$u_{i,a}$};
      \node[draw=black,fill=white,circle,inner sep = 2pt] (uipxi) at (4,1.2) {};
      \node[above=.1cm of uipxi] {$u_{i,\phi(x_i)}$};
      \node[draw=black,fill=white,circle,inner sep = 2pt] (ujpxj) at (4,-1.2) {};
      \node[below=.1cm of ujpxj] {$u_{j,\phi(x_j)}$};
      \node[draw=black,fill=white,circle,inner sep = 2pt] (ujb) at (5.5,-1) {};
      \node[below=.1cm of ujb] {$u_{j,b}$};

      \draw[bend left=5] (s) edge (uia);
      \draw[bend left=7] (uia) edge (uipxi);
      \draw[bend left=5] (uipxi) edge (t);

      \draw[bend right=5] (s) edge (ujpxj);
      \draw[bend right=7] (ujpxj) edge (ujb);
      \draw[bend right=5] (ujb) edge (t);

      \begin{pgfonlayer}{background}
        \draw[myorange, opacity=0.5, line width=4pt] (s) edge[bend left=5] (uia) (uia) edge[out=10, in=120] (ujb) (ujb) edge[bend right=5] (t);

        \draw[myblue, thick] (s) edge[out=45, in=120, looseness=1.2] (1,0.5) (1,0.5) edge[out=300, in=120] (1.3,-0.5) (1.3,-0.5) edge[out=300, in=210] (uia) (uia) edge[out=10, in=120] (ujb) (ujb) edge[out=300, in=210] (6.7,0.5) (6.7,0.5) edge[out=30, in=120] (7,0.4) (7,0.4) edge[out=300, in=210] (t);
      \end{pgfonlayer}
    \end{tikzpicture}
  \end{center}
  \caption{$P$ is in blue and $P'$ is highlighted in orange.}
  \label{fig:phi-func}
\end{figure}  

\begin{lemma}
  For all $i \in [k]$, let $\phi_i$ be the function such that for all $1 < a < L(i)$,
  \[\phi_i(a) := \left\{
    \begin{array}{ll}
      l_i &\text{if } a <l_i,\\
      r_i &\text{if } a >r_i,\\
      a &\text{otherwise}.
    \end{array}\right.
  \]
  Let $f(u_{i,a}):= u_{i,\phi_i(a)}$. The image by $f$ of a minimum
  \stsep is a minimum \stsep.
\end{lemma}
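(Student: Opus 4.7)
The plan is to prove the statement in two parts: first, that $|f(S)| = k$, and second, that $f(S)$ actually separates $s$ from $t$. The cardinality is straightforward. By Lemma~\ref{lemma:sep}, $S$ meets each canonical path $P_i$ in a single vertex $u_{i,a_i^S}$, and the image $f(u_{i,a_i^S}) = u_{i,\phi_i(a_i^S)}$ remains an internal vertex of $P_i$ (because $l_i$ and $r_i$ are positions of the $A$- and $B$-tokens, hence themselves internal). Since distinct canonical paths share only $s$ and $t$, the $k$ images are pairwise distinct and $|f(S)| = k$ follows.

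For the separation, I would argue by contradiction: assume there is an $s$-$t$-path $P$ in $G - f(S)$ (drawn in blue in Figure~\ref{fig:phi-func}), and splice it with canonical-path segments to build an alternative $s$-$t$-walk $P'$ (in orange) that avoids $S$, contradicting that $S$ is a separator. Let $I := \{i : a_i^S < l_i\}$ and $J := \{j : a_j^S > r_j\}$; these are exactly the indices where $S$ and $f(S)$ differ on $P_i$. In the main case $I, J \neq \emptyset$, I identify $i \in I$ and $j \in J$ together with vertices $u_{i,a} \in A \cup B$ strictly on the $s$-side of $f(S)_i$ (so $a = l_i$) and $u_{j,b} \in A \cup B$ strictly on the $t$-side of $f(S)_j$ (so $b = r_j$) that are both visited by $P$. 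The walk $P'$ is then the concatenation of the prefix of $P_i$ from $s$ to $u_{i,a}$, the sub-walk of $P$ from $u_{i,a}$ to $u_{j,b}$, and the suffix of $P_j$ from $u_{j,b}$ to $t$.

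To verify that $P'$ avoids $S$, the $P_i$-prefix spans positions $1,\dots,l_i$ and misses $u_{i,a_i^S}$ because $a_i^S < l_i$; symmetrically, the $P_j$-suffix misses $u_{j,a_j^S}$ because $a_j^S > r_j$; and on every canonical path outside $I \cup J$ possibly traversed by the middle sub-walk, the sets $S$ and $f(S)$ coincide, so $P$'s avoidance of $f(S)$ automatically gives avoidance of $S$. In the degenerate cases either $I = \emptyset$ or $J = \emptyset$: if both are empty then $f(S) = S$ and the conclusion is trivial, and otherwise an analogous splicing produces a walk avoiding $A$ or $B$ instead, again a contradiction.

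The hard part will be rigorously justifying the existence of the indices $i, j$ and of the vertices $u_{i,a}, u_{j,b}$ that are actually visited by $P$. The intuition is that $A$ is a separator, so $P$ must visit some vertex of $A$; combining this with the fact that $f(S)$ lies strictly between $A$ and $B$ on each canonical path forces $P$ to cross a canonical path of $I$ on the $s$-side of $f(S)$ and a canonical path of $J$ on the $t$-side. Some additional care is needed to pick the ``latest'' occurrence of $u_{i,a}$ and the ``earliest'' occurrence of $u_{j,b}$ along $P$ so that the concatenation is a well-defined walk, which can then be shortened to a simple $s$-$t$-path avoiding $S$.
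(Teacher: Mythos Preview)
Your overall strategy (assume $f(S)$ fails to separate, take an $s$-$t$-path $P$ avoiding $f(S)$, and splice it with canonical-path segments to contradict that $S$, $A$, or $B$ separates) is exactly the paper's. However, the specific splice you propose does not work.

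\textbf{The chosen vertex is not on $P$.} For $i\in I$ you set $a=l_i$. But for such $i$ we have $\phi_i(a_i^S)=l_i$, so $u_{i,l_i}$ is precisely the vertex of $f(S)$ on $P_i$. Since $P$ lives in $G-f(S)$, it never visits $u_{i,l_i}$; you cannot use it as the junction point. (The parenthetical ``strictly on the $s$-side of $f(S)_i$'' already contradicts $a=l_i$.)

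\textbf{The prefix hits $S$, not misses it.} Even granting the junction, the $P_i$-prefix you describe runs through positions $1,\dots,l_i$. For $i\in I$ we have $a_i^S<l_i$, so $u_{i,a_i^S}\in S$ lies \emph{inside} that prefix. Your sentence ``misses $u_{i,a_i^S}$ because $a_i^S<l_i$'' has the inequality pointing the wrong way. The symmetric claim for the $P_j$-suffix fails identically.

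\textbf{The middle segment is not controlled.} You argue that on canonical paths outside $I\cup J$ the sets $S$ and $f(S)$ agree, but the $P$-segment from $u_{i,a}$ to $u_{j,b}$ may well cross canonical paths \emph{inside} $I\cup J$, where they differ; nothing you wrote rules this out.

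The paper avoids all three issues by a different choice of junction points: it takes $u_{i,a}$ to be the \emph{last} vertex of $P$ lying in $\Left(f(S))=\bigcup_m\{u_{m,c}:c<\phi_m(x_m)\}$, and $u_{j,b}$ to be the very next canonical-path vertex on $P$. This guarantees $a<\phi_i(x_i)$ and $b>\phi_j(x_j)$, and crucially makes the middle segment disjoint from all canonical paths, so the spliced walk touches only $P_i$ and $P_j$. One then rules out $i=j$ and the two sign patterns $(a<x_i,\,b>x_j)$ versus $(a\ge x_i$ or $b\le x_j)$ by exhibiting, in each case, an $s$-$t$-path avoiding $S$, $A$, or $B$. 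No $I/J$ case split is needed.
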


\begin{proof}
  Given a set $X = \{u_{i,x_i} \mid i \in [k]\}$, let
  $\Left(X) = \bigcup_{i} \{u_{i,a} \mid a < x_i\}$.

  Let $X = \{u_{i,x_i} \mid i \in [k]\}$ be an \stsep. Assume that
  $Y = f(X)$ is not an \stsep. Let $P$ be an \stpath in
  $G - Y$. 

  Let $u_{i,a}$ be the last vertex of $P$ belonging to $\Left(Y)$. Let $u_{j,b}$ be
  the first vertex of $P$ lying on a canonical path after $u_{i,a}$. We have
  $u_{j,b} \notin \Left(Y)$ and thus $b > \phi_j(x_j)$. Note that $a \neq \phi_i(x_i)$ and $b \neq \phi_j(x_j)$ by definition of $P$. Let $P'$ be the path in $G$ going
  from $s$ to $u_{i,a}$ via $P_i$, then to $u_{j,b}$ via $P$ and finally to $t$
  via $P_j$; see Figure~\ref{fig:phi-func}.

  We have $i \neq j$ since otherwise $P'$ would be an \stpath in $G - A$
  (respectively $G - B$ or $G - X$) if $\phi_i(x_i) = a_i$
  (respectively $\phi_i(x_i) = b_i$ or $\phi_i(x_i) = x_i$).

  We cannot have $a < x_i $ and $x_j < b$ at the same time since otherwise $P'$ would be an \stpath
  in $G - X$. Without loss of generality, assume that $x_i \le a$. As a
  result, $x_i \le a < \phi_i(x_i) = l_i$. We must have $b < l_j$ since otherwise $P'$
  would be an \stpath in $G - A$ or $G - B$. Thus, $b < l_j \le \phi_j(x_j)$ 
  which contradicts
  the aformentioned property that $b > \phi_j(x_j)$
  and proves that $f(X)$ is an \stsep.
\end{proof}

\begin{corollary}\label{cor:bounded_region}
  In both the token jumping and token sliding models, if there exists a  reconfiguration sequence from $A$ to $B$, then there exists a shortest sequence such that, for any $i$, the $i^\textrm{th}$ token remains between $l_i$ and $r_i$ at all times.
  As a result, deleting all vertices on canonical paths that are not beteween $l_i$ and $r_i$ and replacing them by the edges $\{\{s,l_i\} \mid 1 \leq i \leq k\} \cup \{\{r_i,t\} \mid 1 \leq i \leq k\}$ yields an equivalent instance.
\end{corollary}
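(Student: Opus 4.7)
The plan is to leverage the preceding lemma: since applying $f$ to any minimum \stsep yields another minimum \stsep, I will project an arbitrary reconfiguration sequence into the bounded region and verify that nothing breaks in the process.

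For the first part, take any shortest reconfiguration sequence $A = S_1, S_2, \dots, S_r = B$ and consider its pointwise image $f(S_1), f(S_2), \dots, f(S_r)$. Note first that $f(A) = A$ and $f(B) = B$, since for every $i$ the token of $A$ (resp.\ $B$) sits at index $a_i$ (resp.\ $b_i$), which lies in $[l_i, r_i]$ by the very definition of these quantities; by the preceding lemma, each $f(S_j)$ is also a minimum \stsep. It remains to check that consecutive images either coincide or differ by a legal move. If $S_{j+1}$ is obtained from $S_j$ by moving the token on $P_i$ from $u_{i,a}$ to $u_{i,a'}$, then by Observation~\ref{obs:moves} the two sets differ only on $P_i$, and the images differ at most on $P_i$ with new positions $u_{i,\phi_i(a)}$ and $u_{i,\phi_i(a')}$. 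For token jumping this is immediately a legal jump (or a no-op). For token sliding we have $|a - a'| = 1$, and a short case analysis on whether $a$ and $a'$ lie to the left of $l_i$, inside $[l_i, r_i]$, or to the right of $r_i$ shows that $|\phi_i(a) - \phi_i(a')| \in \{0,1\}$, since $\phi_i$ is monotone and only flattens on the two sides of $[l_i,r_i]$. Collapsing consecutive duplicates then yields a valid reconfiguration sequence from $A$ to $B$ of length at most $r$; minimality of $r$ forces length exactly $r$, with every token confined to indices in $[l_i, r_i]$.

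For the second part I would argue the equivalence between $G$ and the truncated graph $G'$ directly. Any \stpath in $G'$ either consists only of original edges, hence is an \stpath in $G$, or uses one of the new edges $\{s, u_{i,l_i}\}$ or $\{u_{i,r_i}, t\}$, which can be expanded into the corresponding prefix or suffix of $P_i$ in $G$. This expansion traverses only vertices $u_{i,a}$ with $a < l_i$ or $a > r_i$, none of which ever belong to a minimum \stsep appearing in a bounded-region sequence. Consequently, a set of $k$ vertices with exactly one on each truncated canonical path separates $s$ from $t$ in $G'$ if and only if it does so in $G$, so minimum \stseps of $G'$ coincide with those minimum \stseps of $G$ whose tokens lie in $[l_i, r_i]$. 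Under token sliding, moreover, slides in $G'$ between two interior vertices of a truncated canonical path use only edges already present in $G$, so sliding moves transfer between the two instances; combined with the first part, this yields equivalence.

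The only genuinely delicate point is the slide case analysis verifying that the projection under $\phi_i$ of a sliding step is again either a sliding step or a no-op; everything else follows directly from the preceding lemma and from the observation that the truncated canonical paths together with the two new edges incident to $s$ and $t$ still constitute $k$ internally disjoint \stpaths in $G'$.
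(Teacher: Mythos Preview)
Your proof of the first sentence is correct and is exactly the paper's argument: push a shortest sequence through $f$, invoke the preceding lemma to see that each image is again a minimum \stsep, observe that consecutive images are either equal or differ by a single legal move (your monotonicity remark on $\phi_i$ cleanly handles the slide case), and collapse repeats. The paper's two-line proof says precisely this and nothing more; in particular it does not justify the ``equivalent instance'' clause at all.

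Your extra paragraph on that clause already goes beyond the paper, but it only supplies one direction. Lifting \stpaths from $G'$ to $G$ shows that a bounded-region separator of $G$ is a separator of $G'$, hence a bounded-region sequence in $G$ transfers to $G'$. For equivalence you also need the converse: that every minimum \stsep of $G'$ is a separator in $G$, so that a sequence in $G'$ is valid in $G$. This does hold, but it requires the observation (which follows from $A$ and $B$ being separators) that any edge of $G$ incident to a deleted vertex $u_{i,a}$ with $a<l_i$ has its other endpoint either also left-deleted or equal to some $u_{j,l_j}$ (and symmetrically on the right), so an \stpath in $G$ through deleted vertices can be shortcut to one in $G'$ via the new edges. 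The paper never spells this out here either, and effectively redoes the reduction carefully in Lemma~\ref{lemma:preprocesspaths}.
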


\begin{proof}
  Two \stseps differing only by a token jump (resp. slide) are mapped by $f$ to two \stseps that are either equal or differing by only a token jump (resp. slide). Thus, applying $f$ to all separators in the reconfiguration sequence, we get a reconfiguration sequence with the claimed property.
\end{proof}

Given a vertex $u_{i,x}$ on a canonical path $P_i$, we define the set
$F(u_{i,x})$ as the set of vertices of $P_i$ between $u_{i,x}$ and $u_{i,b_i}$ (see \cref{fig:F}).
We say that a jump (resp. slide) from $u_{i,a}$ to $u_{i,b}$ is \emph{forward} if
$u_{i,b} \in F(u_{i,a})$. Intuitively, this means that a jump (resp. slide) is forward if it moves the token closer to its target location (along the canonical path) without going past it.

\begin{figure}[h!]
  \begin{center}
      \pgfdeclarelayer{background}
      \pgfsetlayers{background,main}
              
    \begin{tikzpicture}[decoration={snake,pre length=1mm,post length=1mm,segment length=4mm}]
      \node[draw=black,fill=white,circle,inner sep = 2pt] (s) at (0,0) {};
      \node[above=.1cm of s] {$s$};
      \node[draw=black,fill=white,circle,inner sep = 2pt] (t) at (8,0) {};
      \node[above=.1cm of t] {$t$};
      \node[draw=black,fill=white,circle,inner sep = 2pt] (uisi) at (2,.9) {};
      \node[above=.1cm of uisi] {$u_{i,a_i}$};
      \node[draw=black,fill=white,circle,inner sep = 2pt] (uia) at (4,1.2) {};
      \node[above=.1cm of uia] {$u_{i,x}$};
      \node[draw=black,fill=white,circle,inner sep = 2pt] (uiti) at (6,.9) {};
      \node[above=.1cm of uiti] {$u_{i,b_i}$};
      \node[draw=black,fill=white,circle,inner sep = 2pt] (ujtj) at (2,-.9) {};
      \node[below=.1cm of ujtj] {$u_{j,b_j}$};
      \node[draw=black,fill=white,circle,inner sep = 2pt] (ujb) at (4,-1.2) {};
      \node[below=.1cm of ujb] {$u_{j,y}$};
      \node[draw=black,fill=white,circle,inner sep = 2pt] (ujsj) at (6,-.9) {};
      \node[below=.1cm of ujsj] {$u_{j,a_j}$};

      \draw[bend left=5] (s) edge (uisi);
      \draw[bend left=7] (uisi) edge (uia);
      \draw[bend left=7] (uia) edge (uiti);
      \draw[bend left=5] (uiti) edge (t);

      \draw[bend right=5] (s) edge (ujtj);
      \draw[bend right=7] (ujtj) edge (ujb);
      \draw[bend right=7] (ujb) edge (ujsj);
      \draw[bend right=5] (ujsj) edge (t);

      \begin{pgfonlayer}{background}
        \draw[bend left=7,line width = 4pt,opacity = .5,myorange] (uia) edge (uiti);
        \draw[bend right=7,line width = 4pt,opacity = .5,myblue] (ujtj) edge (ujb);        
      \end{pgfonlayer}
    \end{tikzpicture}
  \end{center}
  \caption{$F(u_{i,x})$ is highlighted in orange and $F(u_{j,y})$ in blue.}
  \label{fig:F}
\end{figure}
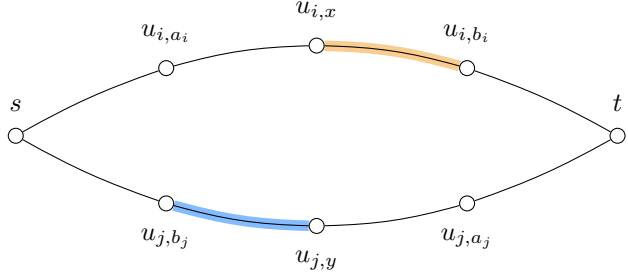  

\begin{lemma}[Forward-moving lemma]\label{lemma:only_forward}
If there exists a reconfiguration sequence from $A$ to $B$, then there exists a shortest sequence $\S$ of jumps (resp.\ slides) going from $A$ to $B$ containing only forward jumps (resp.\ slides).
\end{lemma}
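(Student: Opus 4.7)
I propose to prove the lemma via a forward-envelope construction, inspired by the projection $f$ used in the proof preceding Corollary~\ref{cor:bounded_region}. Given any reconfiguration sequence $\S = S_0, \ldots, S_r$ from $A$ to $B$, for each token $i$ and each time $t$ I set $\bar x_i(t)$ to be the position in $\{x_i(0), \ldots, x_i(t)\}$ furthest along $P_i$ in the forward direction (i.e.\ closest to $u_{i, b_i}$), and define $\bar S_t := \{u_{i, \bar x_i(t)} : i \in [k]\}$. The goal is to show that $\bar\S := \bar S_0, \ldots, \bar S_r$ is a valid forward-only reconfiguration sequence from $A$ to $B$ of length at most $r$; applying this construction to a shortest $\S$ then produces a shortest forward-only sequence, proving the lemma.

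The endpoint and transition properties are straightforward. By Corollary~\ref{cor:bounded_region}, every visited position of token $i$ lies in $[l_i, r_i]$, so the forward-most position ever attained by token $i$ is exactly $b_i$; hence $\bar S_0 = A$ and $\bar S_r = B$. Two consecutive sets $\bar S_{t-1}$ and $\bar S_t$ differ only in the coordinate of the token $i$ that moves at step $t$ in $\S$, and only when $x_i(t)$ strictly overtakes $\bar x_i(t-1)$ in the forward direction, which by definition is a forward jump. Under token sliding, a short index computation shows that in such an update one necessarily has $\bar x_i(t-1) = x_i(t-1)$, so the envelope's move is itself a single forward slide along $P_i$.

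The nontrivial part, and the \textbf{main obstacle}, is that each $\bar S_t$ must be a minimum \stsep. In the pure regime where $a_i \le b_i$ for every $i$, the envelope coincides with the running Picard--Queyranne lattice join $\bigvee_{t' \le t} S_{t'}$ and is therefore a minimum \stsep; the pure decreasing case is symmetric via the meet. The delicate case is the mixed regime, where some paths have $a_i < b_i$ and others $a_i > b_i$: the envelope then combines coordinate-wise maxima on some paths with minima on others, a ``forward-max'' operation that is not closed on arbitrary minimum \stseps. To handle this case I plan to argue inductively that each $\bar S_t$ separates $s$ from $t$, using the same style of path-chasing as in the proof preceding Corollary~\ref{cor:bounded_region}: assuming $\bar S_{t-1}$ is a minimum \stsep and supposing towards contradiction that some \stpath $P$ exists in $G - \bar S_t$, I would splice $P$ with initial and terminal segments of appropriate canonical paths to produce an \stpath witnessing that one of $\bar S_{t-1}$, $S_t$, $A$, or $B$ fails to separate. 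The crucial ingredient is that Corollary~\ref{cor:bounded_region} confines every configuration — including $\bar S_t$ — to the box $\prod_i [l_i, r_i]$, which eliminates the ``diagonal'' extremal combinations that obstruct the forward-max operation in general. This box-restricted path-chasing is where the bulk of the work lies.
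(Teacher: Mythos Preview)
Your approach is correct but takes a considerably longer route than the paper's. The paper dispatches the lemma in a few lines by induction on the length of a shortest sequence: applying Corollary~\ref{cor:bounded_region} to the pair $(A,B)$ forces the \emph{first} move of any box-confined shortest sequence to be forward (each token starts at an endpoint $a_i$ of its interval $[l_i,r_i]$, so any move that stays in the box goes towards $b_i$); the tail is then a shortest sequence from the resulting separator $A'$ to $B$, to which the induction hypothesis applies directly. No fresh path-chasing is required.

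Your envelope construction also works, and the ``main obstacle'' you flag has a cleaner resolution than your outline suggests. Observe that $\bar S_t$ is exactly $f_{\bar S_{t-1},B}(S_t)$, where $f_{C,D}$ denotes the projection map of the lemma preceding Corollary~\ref{cor:bounded_region} with the clamping interval on path $i$ taken to be $[\min(C_i,D_i),\max(C_i,D_i)]$ rather than $[l_i,r_i]$. That lemma's proof uses only that the two reference sets and the input set $X$ are minimum \stseps, so it applies verbatim with $\bar S_{t-1}$, $B$, $S_t$ in those three roles; by induction on $t$ this gives that every $\bar S_t$ is a minimum \stsep. Your planned path-chasing is precisely this argument rediscovered, and the contradiction would be with one of $\bar S_{t-1}$, $B$, or $S_t$ --- the mention of $A$ is superfluous. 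The payoff of your route over the paper's is that it yields an explicit length-non-increasing map from arbitrary reconfiguration sequences to forward-only ones, whereas the inductive argument is purely existential; the cost is that you essentially re-prove the projection lemma instead of simply invoking its corollary once per inductive step.
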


\begin{proof}
  We proceed by induction on the length of a shortest sequence. We prove the case of jumps only, and the case of slides is similar.
  
  If $A = B$, there is nothing to prove.
  Otherwise, let $\S$ be a shortest sequence of jumps going from $A$ to $B$. By
  \cref{cor:bounded_region}, we can assume that the first jump of $\S$ is a
  forward jump. Let $A'$ be the separator obtained after this jump. The tail of $\S$
  is a shortest sequence of jumps between $A'$ and $B$ and by induction, it may be replaced with a shortest sequence that only contains forward jumps.
\end{proof}

\section{Polynomial-time algorithms}

The forward-moving lemma immediately implies that several problems can be solved in polynomial time by a greedy algorithm.

\begin{theorem}
A minimum-length sequence of token slides reconfiguring one minimum \stsep to another can be computed in polynomial time.
\end{theorem}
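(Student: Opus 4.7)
The plan is to combine Corollary~\ref{cor:bounded_region} and Lemma~\ref{lemma:only_forward} with a simple greedy procedure. After preprocessing, every slide moves some token from $u_{i,a}$ to $u_{i,a\pm 1}$ along its canonical path; and by the forward-moving lemma, we may restrict attention to sequences in which each slide is forward, so the token on $P_i$ visits the vertices $u_{i,a_i}, u_{i,a_i\pm 1},\dots,u_{i,b_i}$ in order (sliding by exactly one index per step). This yields the lower bound $\sum_{i=1}^{k}|b_i-a_i|$ on the length of any reconfiguration sequence, and it reduces the task to producing a schedule of at most this length (equivalently, any interleaving of the per-token ``forward by one'' moves such that every intermediate set is a minimum \stsep).

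First I would formalize this lower bound and then describe the algorithm: maintain the current separator $S$, starting from $A$; while $S\neq B$, iterate over all tokens, and for each token at position $u_{i,x}$ with $x\neq b_i$, test whether replacing $u_{i,x}$ by the next canonical neighbor $u_{i,x'}$ (where $x'=x+1$ if $a_i<b_i$ and $x'=x-1$ otherwise) yields a minimum \stsep. Checking minimality/separation is polynomial via a max-flow computation (Menger's theorem). If some such forward slide exists, perform it; otherwise, output \emph{infeasible}. Since each step strictly decreases $\sum_i|b_i-\text{pos}_i|$ by one, the procedure terminates in at most $\sum_i|b_i-a_i|$ iterations, matching the lower bound; each iteration runs in polynomial time, so the overall running time is polynomial.

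The main obstacle is correctness of the ``declare infeasible'' branch: I need to show that whenever a reconfiguration sequence exists, at least one legal forward slide is available at every intermediate configuration produced by the greedy. I would argue by an exchange argument. Suppose $\S^\star$ is a shortest forward-only sequence from the current $S$ to $B$ (which exists by Lemma~\ref{lemma:only_forward}), and let the first move of $\S^\star$ slide the token on $P_j$ forward. If the greedy instead slides the token on some other path $P_i$ forward to obtain $S'$, then since the two moves involve distinct canonical paths, they are independent; in particular, performing the $P_j$-slide from $S'$ lands in the same configuration as performing the $P_i$-slide from the second configuration of $\S^\star$, and both intermediate sets are minimum \stseps (by Lemma~\ref{lemma:sep} and the fact that $S'$ itself is one). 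Hence $\S^\star$ can be rewritten so that its first move matches the greedy's choice, without increasing its length. Induction on $\sum_i|b_i-\text{pos}_i|$ then shows that the greedy reproduces an optimal sequence of length exactly $\sum_i|b_i-a_i|$, and conversely that if the greedy gets stuck with some token not yet at its target then $\S^\star$ could not exist. This establishes both feasibility and optimality, yielding the theorem.
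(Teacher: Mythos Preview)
Your overall strategy---greedy forward slides, with $\sum_i|b_i-a_i|$ as the lower bound---matches the paper's, but your correctness argument has a real gap in the exchange step. You assert that after the greedy's $P_i$-slide to $S'$ and $\S^\star$'s $P_j$-slide, the common result $S''$ is a minimum \stsep ``by Lemma~\ref{lemma:sep} and the fact that $S'$ itself is one.'' But Lemma~\ref{lemma:sep} is a one-way implication: it says a minimum separator contains exactly one vertex per canonical path, not that every such set is a separator. So nothing you cite certifies that $S''$ actually separates $s$ from $t$. Even granting that $S''$ is a separator, a single swap does not yet ``rewrite $\S^\star$ so that its first move matches the greedy's choice'': the tail of $\S^\star$ proceeds from $S_1$, not from $S''$, so you would still need to bubble the $P_i$-slide past every move of $\S^\star$ preceding $\S^\star$'s own first $P_i$-slide, verifying separator-ness at each intermediate configuration.

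The paper sidesteps all of this with a one-line observation you omit: slides are reversible, so performing any legal slide can never turn a yes-instance into a no-instance (one can always slide back). Hence after the greedy moves to $S'$, some reconfiguration sequence from $S'$ to $B$ still exists; applying Lemma~\ref{lemma:only_forward} to the pair $(S',B)$ then yields a forward-only such sequence, necessarily of length $\sum_i|b_i-\text{pos}'_i|$, one less than before. This gives both feasibility and optimality of the greedy without any exchange argument.
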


\begin{proof}
Since slides are reversible, doing any slide can never turn a yes-instance into a no-instance.
Thus, we can greedily apply moves that slide a token forward. Since we never need to do a backward slide, this always finds a solution if one exists. Moreover, this is optimal: the paths are chordless, so any slide can only advance a token one step towards its target position, and since there are no backward slides, the solution is optimal.
\end{proof}

\begin{theorem}\label{thm:tjdecide}
A (feasible, but not necessarily minimum-length) sequence of token jumps reconfiguring one minimum \stsep to another can be computed in polynomial time.
\end{theorem}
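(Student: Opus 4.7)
The plan is to mirror the greedy approach used in the preceding token-sliding theorem, but with an explicit feasibility check on each candidate move. Unlike a forward slide along a chordless canonical path, a forward \emph{jump} may land on a vertex that destroys the minimum-separator property, so the algorithm must actively test which forward jumps are available. Concretely, starting from $S := A$, I would loop as follows. If $S = B$, output the sequence. Otherwise, for each token $i$ with $x_i \ne b_i$ and each vertex $u_{i,y} \in F(u_{i,x_i}) \setminus \{u_{i,x_i}\}$, test in polynomial time (e.g.\ by a single $s$-$t$ reachability check in the complement) whether $(S \setminus \{u_{i,x_i}\}) \cup \{u_{i,y}\}$ is still a minimum \stsep; if some candidate passes, perform it and update $S$. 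If no candidate passes and $S \ne B$, report that no sequence exists.

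Termination and polynomiality follow from the potential $d(S) := \sum_{i=1}^{k} |x_i - b_i|$, which is bounded above by $|V(G)|$, strictly decreases with every forward jump, and vanishes exactly when $S = B$. Each iteration inspects only $\mathcal{O}(|V(G)|)$ candidate jumps, each tested in polynomial time, so the overall procedure is polynomial.

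The step that needs actual justification is completeness: that this greedy policy never stalls before reaching $B$ on a yes-instance. For this I would combine \cref{lemma:only_forward} with the reversibility of jumps between minimum \stseps. If $S$ is any configuration produced by the loop, then $S$ was reached from $A$ by jumps, each of which is reversible, so $A$ is reachable from $S$; concatenating with the hypothetical $A \to B$ sequence yields a reconfiguration $S \to B$. Applying \cref{lemma:only_forward} to the pair $(S,B)$ then produces a shortest forward-only sequence from $S$ to $B$, whose first move is a valid forward jump from $S$ that the enumeration in the loop will find. Hence the greedy loop can always advance on a yes-instance, and the ``no'' answer on a stuck configuration is justified by the contrapositive. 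This completeness argument is the main (really the only) subtlety; everything else is routine bookkeeping on top of the forward-moving lemma.
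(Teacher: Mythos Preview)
Your proposal is correct and follows essentially the same approach as the paper's proof: greedily apply forward jumps, using reversibility to argue that a yes-instance stays a yes-instance after any jump, and invoking \cref{lemma:only_forward} to guarantee that some forward jump is always available until $B$ is reached. Your write-up is more explicit about the per-step feasibility check and the termination potential, but the underlying argument is identical to the paper's two-line proof.
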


\begin{proof}
Jumps are also reversible, so doing a jump can never turn a yes-instance into a no-instance. Again, we can greedily apply forward jumps. Since a solution never needs to contain a backwards jump, this yields a feasible solution if one exists.
\end{proof}

In the case of token jumping, the solution produced by the greedy algorithm is not necessarily optimal (not guaranteed to be a shortest sequence of jumps); by choosing a different order for the jumps, it might be possible to make ``longer'' jumps, i.e.,  jumping over more vertices. In fact, we show that deciding whether a sequence of at most $\ell$ jumps can tranform one minimum \stsep into another is an $\textsf{NP}$-complete problem.

\section{Hardness of finding short sequences of jumps}
First, we note that the problem of deciding whether a sequence of at most $\ell$ jumps between two minimum \stseps exists is in $\textsf{NP}$. Indeed, Lemma~\ref{lemma:only_forward} implies that the length of a reconfiguration sequence cannot exceed $|V(G)|$; we can therefore directly use a reconfiguration sequence as a certificate. 

We show $\textsf{NP}$-hardness by reducing the \textsc{Vertex Cover} problem. Given a graph $G=(V,E)$ and the size $\kappa$ of a desired vertex cover, we construct our instance $(G',s,t,A,B)$ as follows. We first create a copy of the graph $G$, and for every $v\in V(G)$ we add two additional vertices $s_v,t_v$ and edges $\{s_v,v\},\{v,t_v\}$. We further add vertices $s,t$ and for all $v\in V(G)$, edges $\{s,s_v\},\{t_v,t\}$; see Figure~\ref{fig:np}. We ask whether we can reconfigure the \stsep $A = \{s_v\mid v\in V(G)\}$ to the separator $B = \{t_v\mid v\in V(G)\}$ using at most $|V(G)| + \kappa$ token jumps. 

Note that the canonical paths are of the form $\{s,s_v,v,s_t,t \mid v\in V(G)\}$.

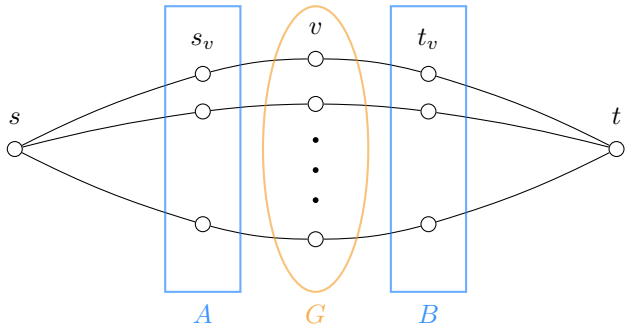
\begin{figure}[h!]
  \begin{center}
      \pgfdeclarelayer{background}
      \pgfsetlayers{background,main}
              
    \begin{tikzpicture}[decoration={snake,pre length=1mm,post length=1mm,segment length=4mm}]
      \node[draw=black,fill=white,circle,inner sep = 2pt] (s) at (0,0) {};
      \node[above=.1cm of s] {$s$};
      \node[draw=black,fill=white,circle,inner sep = 2pt] (t) at (8,0) {};
      \node[above=.1cm of t] {$t$};

      \draw (2.5,1) node[draw=black,fill=white,circle,inner sep = 2pt] (sv1) {} (4,1.2) node[draw=black,fill=white,circle,inner sep = 2pt] (v1) {} (5.5,1) node[draw=black,fill=white,circle,inner sep = 2pt] (tv1) {};
      \node[above=.1cm of sv1] {$s_v$}; \node[above=.1cm of v1] {$v$}; \node[above=.1cm of tv1] {$t_v$};

      \draw (2.5,0.5) node[draw=black,fill=white,circle,inner sep = 2pt] (sv2) {} (4,0.6) node[draw=black,fill=white,circle,inner sep = 2pt] (v2) {} (5.5,0.5) node[draw=black,fill=white,circle,inner sep = 2pt] (tv2) {};
      \draw (2.5,-1) node[draw=black,fill=white,circle,inner sep = 2pt] (sv3) {} (4,-1.2) node[draw=black,fill=white,circle,inner sep = 2pt] (v3) {} (5.5,-1) node[draw=black,fill=white,circle,inner sep = 2pt] (tv3) {};
      \draw (4,0.1) node {\huge{$\cdot$}} (4,-0.3) node {\huge{$\cdot$}} (4,-0.7) node {\huge{$\cdot$}};

      \foreach \i/\j/\k/\l in {1/left/7/5, 2/left/3/3, 3/right/7/5}
      {\draw[bend \j=\l] (s) edge (sv\i);
      \draw[bend \j=\k] (sv\i) edge (v\i);
      \draw[bend \j=\k] (v\i) edge (tv\i);
      \draw[bend \j=\l] (tv\i) edge (t);}

      \draw[thick, opacity=0.6, myorange] (4,0) ellipse (0.7 and 1.9);
      \draw[thick, opacity=0.6, myblue] (2,-1.9) rectangle (3,1.9) (5,-1.9) rectangle (6,1.9);
      \draw[opacity=0.8, myorange] (4,-2.2) node {$G$};
      \draw[opacity=0.8, myblue] (2.5,-2.2) node {$A$} (5.5,-2.2) node {$B$};
    \end{tikzpicture}
  \end{center}
  \caption{The graph $G'$ formed from $G$ (highlighted in orange), along with the initial and target separators $A$ and $B$ (highlighted in blue), respectively.}
  \label{fig:np}
\end{figure}  

\begin{lemma}\label{lem-hard-1}
If there exists a reconfiguration sequence from $A$ to $B$ in $G'$ consisting of at most $|V(G)|+\kappa$ token jumps, then $G$ has a vertex cover of size at most $\kappa$.
\end{lemma}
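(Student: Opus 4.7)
The plan is to invoke \cref{lemma:only_forward} together with the rigid structure of the canonical paths $P_v = s, s_v, v, t_v, t$ to reduce each token's behavior to a binary choice, and then to extract a vertex cover of $G$ from exactly those tokens that actually pass through their middle vertex.

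Assuming a reconfiguration sequence of length at most $|V(G)| + \kappa$ exists, I would first apply \cref{lemma:only_forward} to obtain a shortest such sequence $\S$ consisting exclusively of forward jumps. By \cref{obs:moves} and \cref{cor:bounded_region}, the token on $P_v$ is confined to $\{s_v, v, t_v\}$ throughout $\S$; since it starts at $s_v$, must end at $t_v$, and may only move forward, it either performs the single jump $s_v \to t_v$ or the pair of jumps $s_v \to v \to t_v$. Let $X \subseteq V(G)$ be the set of vertices whose tokens take the two-jump route. Then $|\S| = |V(G)| + |X|$, which forces $|X| \le \kappa$.

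The key step is to show that $X$ is a vertex cover of $G$, which I would prove by contradiction. Suppose $\{u,v\} \in E(G)$ is not covered by $X$, so both of the corresponding tokens take single jumps. Without loss of generality, the jump on $P_u$ (from $s_u$ to $t_u$) occurs before the jump on $P_v$ in $\S$; let $S'$ be the configuration immediately after the $P_u$ jump. At this moment the token on $P_v$ has not moved yet and still sits on $s_v$, while the token on $P_u$ sits on $t_u$; consequently none of $s_u, u, v, t_v$ lies in $S'$. The sequence $s, s_u, u, v, t_v, t$ is then an \stpath in $G' - S'$: four of its edges come directly from the gadget construction, and $\{u,v\}$ is an edge of $G \subseteq G'$. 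This contradicts $S'$ being an \stsep, so $X$ covers every edge of $G$.

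The main obstacle, as I see it, is the bookkeeping that justifies the claim ``at the moment of the $P_u$ jump, the token on $P_v$ has not yet moved.'' This is immediate once one observes that $v \notin X$ means only a single jump in $\S$ ever touches $P_v$, and that jump has been placed strictly after the $P_u$ jump; the positions of tokens on the other canonical paths $P_w$ are irrelevant, since the exhibited $s$-$t$-path visits no internal vertex outside of $P_u \cup P_v$.
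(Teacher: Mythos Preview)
Your proof is correct and follows essentially the same approach as the paper: both arguments exhibit the \stpath $s, s_u, u, v, t_v, t$ to show that for every edge $\{u,v\}$ of $G$ at least one endpoint must be visited by its token, and then count the extra jumps. The only cosmetic difference is that you explicitly normalize the sequence via \cref{lemma:only_forward} so that each token makes exactly one or two jumps, whereas the paper argues directly that a visited middle vertex forces at least two jumps for that token; your version makes the counting $|\S| = |V(G)| + |X|$ slightly cleaner.
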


\begin{proof}
Since $A$ and $B$ are disjoint, every token must jump at least once, implying that we must use at least $|V(G)|$ jumps if every token jumps directly to its destination. Now let $\{u,v\} \in E(G)$. We claim that at some point during the reconfiguration sequence there should be a token on either $u$ or $v$. Otherwise, both tokens on $s_v$ and $s_u$ must jump directly to their destinations (since they are confined to move along their respective canonical paths). Without loss of generality, assume that the token on $s_u$ jumps to $t_u$ first. Then, there is an \stpath $s,s_u,u,v,t_v,v$; so, the configuration of tokens is not a separator and we have a contradiction. Therefore, for each edge in $G$, at least one of its endpoints requires that its corresponding token in $A$ performs an extra jump to an intermediate vertex in $G'$, i.e., to that endpoint (before jumping to its final destination). Since at most $\kappa$ tokens can jump twice, $G$ must have a vertex cover of size at most $\kappa$. 
\end{proof}

\begin{lemma}\label{lem-hard-2}
If $G$ has a vertex cover of size at most $\kappa$, then there exists a reconfiguration sequence from $A$ to $B$ in $G'$ consisting of at most $|V(G)| + \kappa$ token jumps.
\end{lemma}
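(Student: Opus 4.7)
The plan is to exhibit an explicit three-phase reconfiguration sequence guided by a vertex cover $C \subseteq V(G)$ of size at most $\kappa$. Recall from the construction that the canonical paths of $G'$ are exactly $P_v = s, s_v, v, t_v, t$ for each $v \in V(G)$, so each token on the path $P_v$ can only occupy $s_v$, $v$, or $t_v$. I will move tokens as follows:

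\textbf{Phase 1} ($|C|$ jumps): For each $v \in C$, jump the token on $s_v$ to $v$.

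\textbf{Phase 2} ($|V(G)| - |C|$ jumps): For each $u \in V(G) \setminus C$, jump the token on $s_u$ directly to $t_u$ (skipping the middle vertex $u$).

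\textbf{Phase 3} ($|C|$ jumps): For each $v \in C$, jump the token on $v$ to $t_v$.

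The total length is $2|C| + (|V(G)| - |C|) = |V(G)| + |C| \leq |V(G)| + \kappa$, so the counting is immediate.

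The main obstacle—and essentially the only nontrivial step—is to verify that every intermediate configuration is a minimum \stsep. By \cref{lemma:sep}, since every token stays on its canonical path, we always have exactly one token per canonical path; it therefore suffices to show that every \stpath in $G'$ is blocked. Any such path has the form $Q = s, s_{v_1}, v_1, v_2, \dots, v_m, t_{v_m}, t$ where $v_1 v_2 \cdots v_m$ is a walk in $G$ (the only ``long'' edges in $G'$ lie inside $G$). I intend to argue a blocker exists on $Q$ in each phase by a short case analysis on $v_1$ and $v_m$.

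The crucial point is that since $C$ is a vertex cover, whenever $m \geq 2$ every edge $\{v_i, v_{i+1}\}$ on $Q$ has at least one endpoint in $C$, so at least one vertex of $Q \cap V(G)$ lies in $C$. Throughout all three phases, every $v \in C$ carries a token (either on $v$ itself during Phases~1 and~2 after it has been lifted from $s_v$, or on $t_v$ during Phase~3 once it has descended), and so does $s_v$ until it is lifted and $t_v$ once the lift occurs; hence paths with $m \geq 2$ are blocked by some $v_i \in C$. For $m = 1$ the path $Q = s, s_v, v, t_v, t$ visits all three token positions on $P_v$ and is trivially blocked by whichever one currently holds the token. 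Working through the short case analysis for the endpoints $v_1$ and $v_m$ in each phase (in particular, noting that in Phase~2 a path starting at some $u \notin C$ already moved to $t_u$ must continue into some $v_2 \in N_G(u) \subseteq C$, which still carries a token) completes the verification, and hence proves the lemma.
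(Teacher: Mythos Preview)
Your proposal is correct and follows essentially the same three-phase construction as the paper: move cover tokens to their middle vertices, jump non-cover tokens straight across, then finish the cover tokens. Your verification differs only cosmetically---you do a direct case analysis on the shape of an \stpath, while the paper argues by contradiction that whenever some token sits on a $t_v$ and another on an $s_u$ the tokens on $V(G)$ still form a vertex cover---but the content is the same; just note that your sentence ``paths with $m\geq 2$ are blocked by some $v_i\in C$'' is not literally true in Phases~1 and~3 (there the blocker may be $s_{v_1}$ or $t_{v_m}$), which is exactly what your final endpoint case analysis is meant to handle.
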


\begin{proof}
Let $C$ be a vertex cover of $G$ of size at most $\kappa$. We first jump all tokens on vertices $\{s_v\mid v\in C\}$ to their corresponding intermediate vertices $\{v\mid v\in C\}$. Next, for all vertices $v \in V(G) \setminus C$, the tokens at $s_v$ directly jump to their destinations, followed by the remaining tokens jumping from the intermediate vertices to their destinations. Since tokens always stay on their corresponding canonical paths, any \stpath must use at least the vertices $s,s_v,v,u,t_u,t$ for some $u,v\in V(G)$. Suppose (towards a contradiction) that at some point there are no tokens on any of the vertices $s_v,v,u,t_u$. This means that there must be a token on $t_v$ (since there is no token on $s_v,v$) and on $s_u$ (since there is no token on $t_u$ nor on $u$). However, due to the way the sequence of jumps was defined, whenever there is a token on some vertex $t_v$ and a token on some vertex $s_u$, there are tokens forming a vertex cover of the subgraph induced by $V(G)$. This implies that there is a token on at least one of $u$ or $v$, which is a contradiction.
\end{proof}

\begin{theorem}\label{thm-jumping-nphard}
Deciding whether a sequence of at most $\ell$ token jumps can tranform one minimum \stsep into another is an $\textsf{NP}$-complete problem.
\end{theorem}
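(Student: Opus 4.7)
The plan is to combine the machinery the authors have already built. Membership in $\textsf{NP}$ follows essentially for free: Lemma~\ref{lemma:only_forward} guarantees that a shortest reconfiguration sequence consists only of forward jumps, and since each canonical path has at most $|V(G)|$ vertices and each token can only move forward on its path, the length of any shortest sequence is bounded by $|V(G)|$. Thus, given a candidate sequence we can verify in polynomial time that each intermediate set is a minimum \stsep, that consecutive sets differ by exactly one token jump, and that its length is at most $\ell$. This provides a polynomial-size certificate, placing the problem in $\textsf{NP}$.

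For $\textsf{NP}$-hardness, I would simply invoke the reduction from \textsc{Vertex Cover} already described in this section. The construction $(G,\kappa) \mapsto (G',s,t,A,B,\ell)$ with $\ell = |V(G)| + \kappa$ is clearly computable in polynomial time, since $|V(G')| = |V(G)| + 2|V(G)| + 2$ and $|E(G')| = |E(G)| + 4|V(G)|$. The forward direction of correctness is exactly Lemma~\ref{lem-hard-1}: any reconfiguration sequence from $A$ to $B$ of length at most $|V(G)| + \kappa$ exhibits a vertex cover of $G$ of size at most $\kappa$ (obtained from the at most $\kappa$ tokens that jump twice). The reverse direction is Lemma~\ref{lem-hard-2}: from any vertex cover $C$ of size at most $\kappa$, the explicit jumping schedule constructed there (first jump tokens from $s_v$ to $v$ for $v\in C$, then jump tokens from $s_v$ to $t_v$ for $v\notin C$, then finish by jumping the tokens at the intermediate vertices to their targets) uses exactly $|V(G)|+|C| \leq |V(G)|+\kappa$ jumps.

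Chaining these two lemmas, the reduction produces a yes-instance of \textsc{TJ-MSR} with bound $\ell = |V(G)|+\kappa$ if and only if $(G,\kappa)$ is a yes-instance of \textsc{Vertex Cover}. Since \textsc{Vertex Cover} is $\textsf{NP}$-complete, this establishes $\textsf{NP}$-hardness, and combined with membership in $\textsf{NP}$ we conclude that the problem is $\textsf{NP}$-complete. There is no real obstacle left once the two lemmas are in place; the only thing to be slightly careful about is that the bound $\ell = |V(G)| + \kappa$ is written out correctly so that the arithmetic of ``each token jumps once, plus the $\kappa$ tokens corresponding to the cover jump one extra time'' matches on both sides of the equivalence.
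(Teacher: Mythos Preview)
Your proposal is correct and follows exactly the same approach as the paper's proof, which simply cites Lemma~\ref{lemma:only_forward} for membership in $\textsf{NP}$ and Lemmas~\ref{lem-hard-1} and~\ref{lem-hard-2} for $\textsf{NP}$-hardness. You merely spell out a few more details (the polynomial bound on certificate length, the size of $G'$, and the arithmetic behind $\ell = |V(G)|+\kappa$) than the paper's one-line proof, but the argument is identical.
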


\begin{proof}
The proof follows from the fact that the problem is both $\textsf{NP}$-hard (Lemmas~\ref{lem-hard-1} and~\ref{lem-hard-2}) and in $\textsf{NP}$ (Lemma~\ref{lemma:only_forward}).
\end{proof}

\section{Preprocessing for token jumping}

In this section, we describe some preprocessing rules that can be applied to the token jumping variant of {\sc Minimum Separator Reconfiguration}. In the remainder of this paper, we assume that all instances are preprocessed according to these rules. We first show that we can reduce the graph so that it contains no vertices which are not on the canonical \stpaths.

\begin{lemma}\label{lemma:preprocesspaths}
Given an instance $(G,s,t,A,B)$ of {\sc Minimum Separator Reconfiguration} in the token jumping model, it is possible to compute in polynomial time an equivalent instance $(G',s,t,A,B)$ in which all vertices are on the (chordless) canonical paths and the minimum \stsep size is preserved. Moreover, all vertices in $A \cup B$ are adjacent to either $s$ or $t$.
\end{lemma}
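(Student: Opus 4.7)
The plan is to apply two preprocessing reductions in sequence. First, I would invoke Corollary~\ref{cor:bounded_region} to trim every canonical path $P_i$ to the subpath between $u_{i,l_i}$ and $u_{i,r_i}$, deleting the vertices of $P_i$ that lie outside the index range $[l_i,r_i]$ and adding the shortcut edges $\{s, u_{i,l_i}\}$ and $\{u_{i,r_i}, t\}$. Since $a_i,b_i \in \{l_i, r_i\}$, the vertex $u_{i,a_i} \in A$ becomes adjacent to $s$ when $a_i = l_i$ and to $t$ when $a_i = r_i$, and the same holds for $u_{i,b_i} \in B$; this already establishes the ``moreover'' clause of the lemma, while the equivalence of the instance and the preservation of the minimum separator size $k$ are provided by the corollary.

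Second, I would eliminate every vertex that still does not lie on any canonical path. To do so without altering the set of minimum \stseps, before deleting a connected component $C$ of the subgraph induced on the non-canonical vertices, I would add the edge $\{u,u'\}$ for every pair of distinct canonical vertices $u,u'$ that both have a neighbor in $C$, and then delete the vertices of $C$. The rationale is that, for any token configuration $S$ with exactly one vertex on each canonical path (so in particular $S \cap V(C) = \emptyset$), an $s$-$t$-path through $C$ exists in $G - S$ iff $C$ has two canonical neighbors $u,u' \notin S$ such that $s$ reaches $u$ and $u'$ reaches $t$ using only canonical vertices outside $S$. This is exactly the condition under which the shortcut edge $\{u,u'\}$ is usable in the reduced graph, so the set of minimum separators of size $k$ supported by the canonical paths is preserved, and with it the set of legal reconfiguration sequences. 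The canonical paths $P_1,\dots,P_k$ remain $k$ internally vertex-disjoint $s$-$t$-paths in the reduced graph, hence the minimum separator size is still $k$ and $A,B$ remain minimum \stseps.

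The main obstacle is maintaining chordlessness. A shortcut edge $\{u,u'\}$ introduced in the second step may happen to be a chord of some canonical path $P_i$ when both its endpoints lie on $P_i$. In that case, I would apply the same shortcutting operation used in the definition of canonical paths, replacing the subpath of $P_i$ between $u$ and $u'$ with the new edge and discarding the interior vertices that become non-canonical. Because each shortcutting operation strictly decreases the total length of the canonical paths, the procedure terminates in polynomially many rounds; at termination, every remaining vertex lies on a chordless canonical path and no shortcut edge is a chord, which is the structural property demanded by the lemma.
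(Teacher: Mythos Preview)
Your plan is correct and mirrors the paper's proof: first trim each canonical path to the segment between its $A$- and $B$-vertex via Corollary~\ref{cor:bounded_region} (which immediately yields the adjacency of $A\cup B$ to $\{s,t\}$), then replace every connection through off-path vertices by a direct edge between its canonical endpoints, and iterate whenever a new edge is a chord. The paper phrases the second step as adding an edge between any two retained vertices that are joined in $G$ by a path whose interior avoids $V(G')$---equivalent to your component-wise formulation---and handles chords identically by ``repeating the above process'' until the total length of the canonical paths stops decreasing.
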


\begin{proof}
 We define a graph $G'$ on a subset of the vertex set of $G$; in the new instance of {\sc Minimum Separator Reconfiguration} that we construct $s$, $t$, $A$, and $B$ stay the same.
 
By Corollary~\ref{cor:bounded_region}, we can assume that in an (optimal) solution, the token on canonical path $P_i$ only moves between $u_{i,l_i}$ and $u_{i,r_i}$. We build the graph $G'$ as follows. We take vertices $s$ and $t$ and, for each $i$, we take the vertices of the canonical path $P_i$ between (and including) $u_{i,l_i}$ and $u_{i,r_i}$ and add them to $G'$.
%
Now, whenever two non-adjacent vertices $u,v$ of $G'$ are connected by a path in $G$ and its internal vertices (i.e., the vertices which are not the endpoints of the path) are disjoint from the vertices of $G'$, we connect $u$ and $v$ by an edge in $G'$. Note that in particular this implies that we add the edges of the canonical paths; see Figure~\ref{fig:preprocessing}.

\begin{figure}[h!]
  \begin{center}
      \pgfdeclarelayer{background}
      \pgfsetlayers{background,main}
              
    \begin{tikzpicture}[decoration={snake,pre length=1mm,post length=1mm,segment length=4mm}]
      \node[draw=black,fill=white,circle,inner sep = 2pt] (s) at (0,0) {};
      \node[above=.1cm of s] {$s$};
      \node[draw=black,fill=white,circle,inner sep = 2pt] (t) at (8,0) {};
      \node[above=.1cm of t] {$t$};

      \draw (2.5,1) node[draw=black,fill=white,circle,inner sep = 2pt] (ul1) {} (4,1.2) node[draw=black,fill=white,circle,inner sep = 2pt] (v1) {} (5.5,1) node[draw=black,fill=white,circle,inner sep = 2pt] (ur1) {};
      \node[above=.1cm of ul1] {$u_{1,l_1}$}; \node[above=.1cm of ur1] {$u_{1,r_1}$};

      \draw (2.5,0.5) node[draw=black,fill=white,circle,inner sep = 2pt] (ul2) {} (4,0.6) node[draw=black,fill=white,circle,inner sep = 2pt] (v2) {} (5.5,0.5) node[draw=black,fill=white,circle,inner sep = 2pt] (ur2) {};
      \node[below=.1cm of ul2] {$u_{2,l_2}$}; \node[below=.1cm of ur2] {$u_{2,r_2}$};
      
      \draw (2.5,-1) node[draw=black,fill=white,circle,inner sep = 2pt] (ul3) {} (3.5,-1.2) node[draw=black,fill=white,circle,inner sep = 2pt] (v3) {} (5.5,-1) node[draw=black,fill=white,circle,inner sep = 2pt] (ur3) {};
      \node[below=.1cm of ul3] {$u_{k,l_k}$}; \node[below=.1cm of ur3] {$u_{k,r_k}$};
      \draw (4,0.1) node {\huge{$\cdot$}} (4,-0.3) node {\huge{$\cdot$}} (4,-0.7) node {\huge{$\cdot$}};

      \foreach \i/\j/\k/\l in {1/left/7/5, 2/left/3/3, 3/right/7/5}
      {\draw[bend \j=\l] (s) edge (ul\i);
      \draw[bend \j=\k] (ul\i) edge (v\i);
      \draw[bend \j=\k] (v\i) edge (ur\i);
      \draw[bend \j=\l] (ur\i) edge (t);}
      
      \node[draw=black, circle, fill=white, inner sep=2pt] at (3.2,0.57) {};
      \node[draw=black, circle, fill=white, inner sep=2pt] at (4.7,0.57) {};
      \node[draw=black, circle, fill=white, inner sep=2pt] at (4.5,-1.2) {};

      \node[draw=gray, circle, fill=gray!50!white, inner sep=1.5pt] at (1.1,0.5) {};
      \node[draw=gray, circle, fill=gray!50!white, inner sep=1.5pt] at (1.6,0.7) {};
      \node[draw=gray, circle, fill=gray!50!white, inner sep=1.5pt] at (6.9,-0.5) {};
      \node[draw=gray, circle, fill=gray!50!white, inner sep=1.5pt] at (6.4,-0.7) {};
      \node[draw=gray, circle, fill=gray!50!white, inner sep=1.5pt] at (1.1,-0.5) {};
      \node[draw=gray, circle, fill=gray!50!white, inner sep=1.5pt] at (6.9,0.5) {};
      \node[draw=gray, circle, fill=gray!50!white, inner sep=1.5pt] at (6.4,0.7) {};
      \node[draw=gray, circle, fill=gray!50!white, inner sep=1.5pt] at (6.6,0.3) {};
      \draw[draw=gray!80!white, thick] (ul3) edge[out=-40, in=-150, decorate, decoration={snake, amplitude=0.7mm}, looseness=1.2] (4.42,-1.27);
      \draw[draw=gray!80!white, thick] (3.3,0.6) edge[out=30, in=150, decorate, decoration={snake, amplitude=0.7mm}, looseness=1.2] (4.6,0.6);
    \end{tikzpicture}
\end{center}
\end{figure}

\begin{figure}[h!]
  \begin{center}
      \pgfdeclarelayer{background}
      \pgfsetlayers{background,main}
              
    \begin{tikzpicture}[decoration={snake,pre length=1mm,post length=1mm,segment length=4mm}]
      \node[draw=black,fill=white,circle,inner sep = 2pt] (s) at (0,0) {};
      \node[above=.1cm of s] {$s$};
      \node[draw=black,fill=white,circle,inner sep = 2pt] (t) at (8,0) {};
      \node[above=.1cm of t] {$t$};

      \draw (2.5,1) node[draw=black,fill=white,circle,inner sep = 2pt] (ul1) {} (4,1.2) node[draw=black,fill=white,circle,inner sep = 2pt] (v1) {} (5.5,1) node[draw=black,fill=white,circle,inner sep = 2pt] (ur1) {};
      \node[above=.1cm of ul1] {$u_{1,l_1}$}; \node[above=.1cm of ur1] {$u_{1,r_1}$};

      \draw (2.5,0.5) node[draw=black,fill=white,circle,inner sep = 2pt] (ul2) {} (4,0.6) node[draw=black,fill=white,circle,inner sep = 2pt] (v2) {} (5.5,0.5) node[draw=black,fill=white,circle,inner sep = 2pt] (ur2) {};
      \node[below=.1cm of ul2] {$u_{2,l_2}$}; \node[below=.1cm of ur2] {$u_{2,r_2}$};
      
      \draw (2.5,-1) node[draw=black,fill=white,circle,inner sep = 2pt] (ul3) {} (3.5,-1.2) node[draw=black,fill=white,circle,inner sep = 2pt] (v3) {} (5.5,-1) node[draw=black,fill=white,circle,inner sep = 2pt] (ur3) {};
      \node[below=.1cm of ul3] {$u_{k,l_k}$}; \node[below=.1cm of ur3] {$u_{k,r_k}$};
      \draw (4,0.1) node {\huge{$\cdot$}} (4,-0.3) node {\huge{$\cdot$}} (4,-0.7) node {\huge{$\cdot$}};

      \foreach \i/\j/\k/\l in {1/left/7/5, 2/left/3/3, 3/right/7/5}
      {\draw[bend \j=\l] (s) edge (ul\i);
      \draw[bend \j=\k] (ul\i) edge (v\i);
      \draw[bend \j=\k] (v\i) edge (ur\i);
      \draw[bend \j=\l] (ur\i) edge (t);}
      
      \node[draw=black, circle, fill=white, inner sep=2pt] at (3.2,0.57) {};
      \node[draw=black, circle, fill=white, inner sep=2pt] at (4.7,0.57) {};
      \node[draw=black, circle, fill=white, inner sep=2pt] at (4.5,-1.2) {};

      \draw (ul3) edge[bend right=45] (4.42,-1.27) (3.3,0.6) edge[bend left=45] (4.6,0.6);
    \end{tikzpicture}
  \end{center}
  \caption{Top: The graph $G$ (vertices and paths colored gray are deleted in $G'$). Bottom: The graph $G'$ formed from $G$ (gray paths in $G$ are replaced by edges in $G'$).}
  \label{fig:preprocessing}
\end{figure}
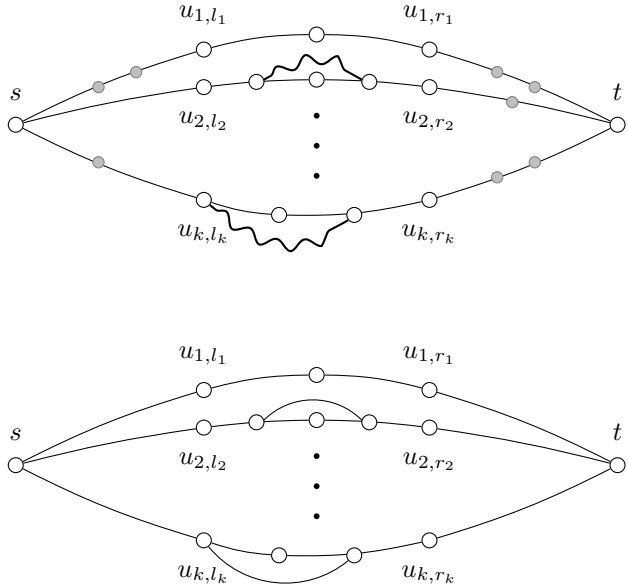


We claim that a configuration of tokens in which all tokens are between $u_{i,l_i}$ and $u_{i,r_i}$ on their  respective canonical paths $P_i$ is an \stsep in $G$ if and only if it is one in $G'$. Consider an \stpath $P$ in $G$. This path corresponds to an \stpath $P'$ in $G'$: if we consider only those vertices on the path that are in $G'$, that sequence of vertices is an \stpath in $G'$ since the consecutive vertices in that sequence are connected by a path in $G$ whose internal vertices are disjoint from $G'$ and thus there exist edges containing the consecutive vertices in the sequence. Thus, a configuration of tokens which is an \stsep in $G'$ is also an \stsep in $G$, since any configuration of tokens that hits $P'$ also hits $P$ since the vertices of $P'$ are a subset of the vertices of $P$.

Conversely, consider an \stpath $P'$ in $G'$: recall that the edges of $G'$ (which are not in $G$) correspond to paths in $G$ so the \stpath $P'$ corresponds to an \stwalk $W$ of $G$. Any configuration of tokens in which all tokens are between $u_{i,l_i}$ and $u_{i,r_i}$ on their  respective canonical paths that hits $W$ must do so in a vertex that is also in $P'$ since the vertices that are in $G'$ are precisely the vertices on canonical paths between $u_{i,l_i}$ and $u_{i,r_i}$. Thus, any configuration of tokens (in which all tokens are between $u_{i,l_i}$ and $u_{i,r_i}$ on their  respective canonical paths) that is an \stsep in $G$ is also an \stsep in $G'$.

Clearly, a token jump in $G$ (in which tokens remain between $u_{i,l_i}$ and $u_{i,r_i}$ on their respective canonical paths) is also a token jump in $G'$ and vice versa. Note that this does not hold under the token sliding model; adding an edge between two vertices of the same canonical path that are non-adjacent in $G$ (but connected by some paths) will allow for a slide in $G'$ that is not possible in $G$. The minimum separator size in $G'$ has not decreased since we still have the same number of disjoint paths (and also not increased since we have shown that the separators of $G$ are also separators of $G'$).

It is possible that this process creates chords on the canonical paths, making them no longer canonical. In this case, we can shorcut the canonical paths along the chords, and repeat the above process. Since the number of vertices decreases in each iteration, this eventually terminates. We eventually end up with the claimed, equivalent instance. This takes polynomial time.
\end{proof}

Going forward, we assume that all graphs are preprocessed according to these rules. We next observe that to ensure that a configuration of tokens forms a valid \stsep it suffices to check the existence of relatively simple \stpaths.

\begin{lemma}\label{lemma:crossonce}
To check whether a configuration of tokens that assigns exactly one token to each canonical path forms an \stsep, it suffices to check whether there exists an \stpath that from $s$, follows one of the canonical paths, then follows one edge (a \emph{crossing edge}) from that canonical path to another, and then follows that canonical path to $t$.
\end{lemma}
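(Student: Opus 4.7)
The ``only if'' direction is trivial: any $s$-$t$-path of the described form in $G - S$ certifies that $S$ is not an $s$-$t$-separator. The plan is to prove the converse by starting from an arbitrary $s$-$t$-path $P$ in $G - S$ (which exists since $S$ is not a separator) and extracting from it a path of the prescribed shape. The key structural ingredient is that, after preprocessing (Lemma~\ref{lemma:preprocesspaths}), every internal vertex of $G$ lies on a unique canonical path while $s$ and $t$ belong to all of them, and the token $\tau_i$ on canonical path $P_i$ splits $P_i \setminus \{\tau_i\}$ into two subpaths, an ``$s$-side'' (containing $s$) and a ``$t$-side'' (containing $t$).

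Given $P$, I would decompose it into maximal segments $\sigma_1, \dots, \sigma_m$, where each $\sigma_l$ is a contiguous subpath of $P$ lying on a single canonical path $P_{i_l}$, and where consecutive segments are joined by a single crossing edge. First, $m \ge 2$: otherwise $P$ would be fully contained in $P_{i_1}$, and since $P_{i_1}$ is a chordless \stpath in the preprocessed graph, $P$ would have to coincide with $P_{i_1}$ and thus pass through $\tau_{i_1}$, contradicting $P \subseteq G - S$. Because $P$ avoids every token, each segment $\sigma_l$ is confined to a single side of $\tau_{i_l}$, so each segment has a well-defined side. The segment $\sigma_1$ contains $s$ and is therefore on the $s$-side of $\tau_{i_1}$, while $\sigma_m$ contains $t$ and is on the $t$-side of $\tau_{i_m}$; hence there exists a smallest index $l^*$ with $2 \le l^* \le m$ such that $\sigma_{l^*}$ is on the $t$-side, and by minimality $\sigma_{l^*-1}$ is on the $s$-side.

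Writing $u$ for the last vertex of $\sigma_{l^*-1}$ and $v$ for the first vertex of $\sigma_{l^*}$, the candidate simple path is
\[ Q \colon s \to (\text{along } P_{i_{l^*-1}}) \to u \to v \to (\text{along } P_{i_{l^*}}) \to t, \]
using canonical-path edges up to $u$, the crossing edge $\{u,v\}$, and canonical-path edges from $v$ to $t$. The portion from $s$ to $u$ stays on the $s$-side of $\tau_{i_{l^*-1}}$ and therefore avoids that token; symmetrically, the portion from $v$ to $t$ stays on the $t$-side of $\tau_{i_{l^*}}$. All other tokens lie on canonical paths different from $P_{i_{l^*-1}}$ and $P_{i_{l^*}}$, and internal disjointness of canonical paths ensures that $Q$ does not meet them and that $s$ and $t$ each appear only once in $Q$, so $Q$ is a genuine simple path of the required form.

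The main bookkeeping care will be in setting up the segment decomposition so that it behaves correctly at the shared vertices $s$ and $t$ (for instance, when the first edge of $P$ from $s$ is itself a crossing edge in the preprocessed graph), and in formally checking the existence of the side-transition index $l^*$; both follow from internal disjointness of canonical paths together with the fact that tokens are confined to canonical paths by Lemma~\ref{lemma:sep}. Once these details are settled, $Q$ has exactly the form described in the lemma, completing the proof.
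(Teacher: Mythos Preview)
Your proof is correct but takes a different route from the paper's. The paper argues by repeated reduction: given an \stpath $P$ with at least two crossing edges, it isolates an intermediate maximal segment on some $P_j$ (from $u_{j,m}$ to $u_{j,n}$) and forms two shorter-crossing paths $P'$ and $P''$ by splicing $P$ with the two halves of $P_j$; since the single token on $P_j$ lies either before $u_{j,m}$ or after $u_{j,n}$, at least one of $P',P''$ is still token-free, and iterating brings the crossing count down to one. Your argument instead performs a single global construction: decompose $P$ into maximal canonical-path segments, observe that each segment lies entirely on one side of its token, and locate the first $s$-side/$t$-side transition to extract the desired one-crossing path $Q$ directly. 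Your approach avoids the induction and yields the witness in one shot; the paper's reduction step is slightly more local and does not need to reason about the global sequence of side labels. Both rely on the same preprocessing (every internal vertex lies on a unique chordless canonical path), and your acknowledged bookkeeping at $s$ and $t$ is harmless once chordlessness forces the first and last edges of $P$ to be canonical-path edges.
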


\begin{proof}
Suppose we have an \stpath $P$ of the form $$s,\ldots,u_{i,l},u_{j,m},\dots,u_{j,n},u_{q,o},\dots,t$$ where $i\neq j$, $j\neq q$, i.e., an \stpath that has more than one crossing edge. Let $P'$ be the \stpath that goes from $s$ along the canonical path $P_j$ to $u_{j,m}$ and then follows $P$ to $t$. Let $P''$ be the \stpath that follows $P$ from $s$ to $u_{j,m}$ and then runs along the canonical path $P_j$ to $t$; see Figure~\ref{fig:crossing-edges} (left). If some configuration of tokens (in which exactly one token is on each canonical path) does not hit $P$, then it also does not hit either $P'$ or $P''$: there is only one token on the path $P_j$, so it must be either before $u_{j,m}$ or after $u_{j,n}$. Therefore, it suffices to check only $P'$ and $P''$ and both $P'$ and $P''$ have fewer crossing edges than $P$. Repeatedly applying this procedure, we see that it is sufficient to check only those \stpaths with at most one crossing edge. Since exactly one token is assigned to each canonical path, \stpaths with no crossing edges are automatically covered and thus it suffices to check only those \stpaths with exactly one crossing edge.
\end{proof}

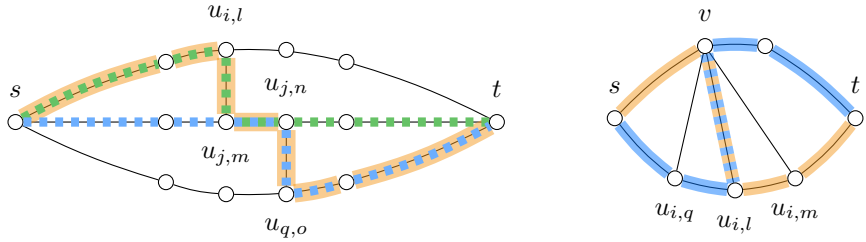
\begin{figure}[h!]
\begin{minipage}{0.6\textwidth}
  \begin{center}
      \pgfdeclarelayer{background}
      \pgfsetlayers{background,main}
              
    \begin{tikzpicture}[decoration={snake,pre length=1mm,post length=1mm,segment length=4mm}, scale=0.8]
      \node[draw=black,fill=white,circle,inner sep = 2pt] (s) at (0,0) {};
      \node[above=.1cm of s] {$s$};
      \node[draw=black,fill=white,circle,inner sep = 2pt] (t) at (8,0) {};
      \node[above=.1cm of t] {$t$};

      \draw (2.5,1) node[draw=black,fill=white,circle,inner sep = 2pt] (ul1) {} (3.5,1.2) node[draw=black,fill=white,circle,inner sep = 2pt] (v1) {} (5.5,1) node[draw=black,fill=white,circle,inner sep = 2pt] (ur1) {};

      \draw (2.5,0) node[draw=black,fill=white,circle,inner sep = 2pt] (ul2) {} (3.5,0) node[draw=black,fill=white,circle,inner sep = 2pt] (v2) {} (5.5,0) node[draw=black,fill=white,circle,inner sep = 2pt] (ur2) {};
      
      \draw (2.5,-1) node[draw=black,fill=white,circle,inner sep = 2pt] (ul3) {} (3.5,-1.2) node[draw=black,fill=white,circle,inner sep = 2pt] (v3) {} (5.5,-1) node[draw=black,fill=white,circle,inner sep = 2pt] (ur3) {};

      \foreach \i/\j/\k/\l in {1/left/7/5, 2/left/0/0, 3/right/7/5}
      {\draw[bend \j=\l] (s) edge (ul\i);
      \draw[bend \j=\k] (ul\i) edge (v\i);
      \draw[bend \j=\k] (v\i) edge (ur\i);
      \draw[bend \j=\l] (ur\i) edge (t);}
      
      \node[draw=black, circle, fill=white, inner sep=2pt] (w1) at (4.5,1.2) {};
      \node[draw=black, circle, fill=white, inner sep=2pt] (w2) at (4.5,0) {};
      \node[draw=black, circle, fill=white, inner sep=2pt] (w3) at (4.5,-1.2) {};
      \draw (v1) -- (v2) (w2) -- (w3);

      \node[above=.1cm of v1] {$u_{i,l}$}; 
      \node[below=.1cm of v2] {$u_{j,m}$}; \node[above=.1cm of w2] {$u_{j,n}$};
      \node[below=.1cm of w3] {$u_{q,o}$};

      \draw[opacity=0.5, line width=7pt, myorange] (s) edge[bend left=7] (ul1) (ul1) edge[bend left=5] (v1) (v1) -- (v2) -- (w2) -- (w3) (w3) edge[bend right=5] (ur3) (ur3) edge[bend right=7] (t);
      \draw[line width=3.5pt, mygreen!60!white, dashed] (s) edge[bend left=7] (ul1) (ul1) edge[bend left=5] (v1) (v1) -- (v2) (w2) -- (ur2) -- (t);
      \draw[line width=3.5pt, myblue!60!white, dashed] (s) -- (ul2) -- (v2) (w2) -- (w3) (w3) edge[bend right=5] (ur3) (ur3) edge[bend right=7] (t);
      \draw[mygreen!60!white,dash pattern= on 3pt, dash phase=3pt, line width=3.5pt] (v2) -- (w2);
      \draw[myblue!60!white,dash pattern= on 3pt, line width=3.5pt] (v2) -- (w2);
    \end{tikzpicture}
  \end{center}
\end{minipage}%
\begin{minipage}{0.45\textwidth}
  \begin{center}
      \pgfdeclarelayer{background}
      \pgfsetlayers{background,main}
              
    \begin{tikzpicture}[decoration={snake,pre length=1mm,post length=1mm,segment length=4mm}, scale=0.8]
      \node[draw=black,fill=white,circle,inner sep = 2pt] (s) at (0,0) {};
      \node[above=.1cm of s] {$s$};
      \node[draw=black,fill=white,circle,inner sep = 2pt] (t) at (4,0) {};
      \node[above=.1cm of t] {$t$};

      \draw (1.5,1.2) node[draw=black,fill=white,circle,inner sep = 2pt] (v1) {} (2.5,1.2) node[draw=black,fill=white,circle,inner sep = 2pt] (w1) {} (1,-1) node[draw=black,fill=white,circle,inner sep = 2pt] (ul3) {} (2,-1.2) node[draw=black,fill=white,circle,inner sep = 2pt] (v3) {} (3,-1) node[draw=black,fill=white,circle,inner sep = 2pt] (ur3) {};

      \draw[bend left=7] (s) edge (v1);
      \draw[bend left=5] (v1) edge (w1);
      \draw[bend left=7] (w1) edge (t);
      \draw[bend right=7] (s) edge (ul3);
      \draw[bend right=5] (ul3) edge (v3);
      \draw[bend right=5] (v3) edge (ur3);
      \draw[bend right=7] (ur3) edge (t);
      \draw (ul3) -- (v1) -- (v3) (v1) -- (ur3);
      
      \node[above=.1cm of v1] {$v$}; 
      \node[below=.1cm of ul3] {$u_{i,q}$}; \node[below=.1cm of v3] {$u_{i,l}$};
      \node[below=.1cm of ur3] {$u_{i,m}$};

      \draw[opacity=0.5, line width=5pt, myorange] (s) edge[bend left=7] (v1) (v3) edge[bend right=5] (ur3) (ur3) edge[bend right=7] (t);
      \draw[opacity=0.5, line width=5pt, myblue] (s) edge[bend right=7] (ul3) (ul3) edge[bend right=5] (v3) (v1) edge[bend left=5] (w1) (w1) edge[bend left=7] (t);
      \draw[opacity=0.5, line width=5pt, myblue, dash pattern= on 3pt] (v1) -- (v3);
      \draw[opacity=0.5, line width=5pt, myorange, dash pattern= on 3pt, dash phase= 3pt] (v1) -- (v3);
    \end{tikzpicture}
  \end{center}
  \end{minipage}
  \caption{Left: An example of Lemma~\ref{lemma:crossonce}. The path $P$ is highlighted in orange, $P'$ is the dashed blue path, and $P''$ is the dashed green path. Note that $P,P',P''$ all overlap between $u_{j,m}$ and $u_{j,n}$. Right: An example of Lemma~\ref{lemma:degthree}. If the \stpath highlighted in orange (resp. blue) is not hit by a token, replacing the edge $\{v,u_{i,l}\}$ with $\{v,u_{i,m}\}$ (resp. $\{v,u_{i,q}\}$) in that path gives another \stpath not hit by a token.}
  \label{fig:crossing-edges}
\end{figure}

We now show that it suffices to consider instances in which all vertices have degree at least $3$ and at most $2k$, where $k \geq 3$ is the size of the minimum separator (for $k \leq 2$, the problem can be solved in polynomial time by a simple algorithm that computes a shortest path in an auxiliary graph $H$ having one vertex for each of the at most $n^2$ \stseps and where two vertices of $H$ share an edge whenever the corresponding \stseps are one reconfiguration step away from each other). 

\begin{lemma}\label{lemma:degthree}
Given an instance $(G,s,t,A,B)$ of {\sc Minimum Separator Reconfiguration} in the token jumping model, it is possible to compute in polynomial time an equivalent instance $(G',s,t,A,B)$ in which all vertices have degree at least $3$ and at most $2k$, where $k=|A|=|B|$ (assuming $k \geq 3$). In particular, every vertex can have at most two neighbors on each canonical path. 
\end{lemma}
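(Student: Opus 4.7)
The plan is to reduce the instance in two independent passes. First, I bound vertex degrees above by $2k$ via Lemma~\ref{lemma:crossonce}-justified edge deletions. Second, I boost degrees to at least $3$ by suppressing degree-$2$ internal vertices, handling the delicate case when such a vertex lies in $A \cup B$ via the Forward-moving Lemma~\ref{lemma:only_forward}.

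For the upper bound, let $v$ lie on canonical path $P_i$. Chordlessness gives $v$ at most two neighbors on $P_i$, so it suffices to show $v$ has at most two neighbors on every other canonical path $P_j$. If $v$ has three such neighbors $u_{j,q}, u_{j,l}, u_{j,m}$ with $q<l<m$, I delete the middle edge $\{v,u_{j,l}\}$ and verify equivalence. One direction is immediate. For the converse, suppose some set $S$ separates in the reduced graph but not in $G$; then Lemma~\ref{lemma:crossonce} produces an $S$-avoiding \stpath in $G$ with a single crossing edge, necessarily the deleted one, of the form $s\to\cdots\to v\to u_{j,l}\to\cdots\to t$. For $S$ to miss this path, the $P_i$-token lies strictly past $v$ and the $P_j$-token lies at some position $y<l$. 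If $y<q$, the analogous \stpath through the still-present edge $\{v,u_{j,q}\}$ is $S$-avoiding; if $q\le y<l$, the analogous \stpath through $\{v,u_{j,m}\}$ is $S$-avoiding. Both contradict $S$ separating in the reduced graph. Iterating until no vertex has three neighbors on the same path yields total degree at most $2+2(k-1)=2k$. For $s$ and $t$, chordlessness of canonical paths forces $\deg(s)=\deg(t)=k\le 2k$.

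For the lower bound, $s$ and $t$ satisfy $\deg=k\ge 3$ automatically. For an internal vertex $v$ on some $P_i$ with $\deg(v)\le 2$, its path-predecessor $u$ and path-successor $w$ are already neighbors, so $v$'s neighborhood equals $\{u,w\}$ and $v$ has no crossing edges. Assuming first that $v\notin A\cup B$, I suppress $v$: delete $v$ and add the edge $\{u,w\}$ if absent. Every minimum \stsep of the reduced graph lifts to $G$ (the new edge represents the subpath $u,v,w$); any $G$-separator with the $P_i$-token off $v$ transfers directly; any with the token at $v$ is one jump from the shifted separator $(T\setminus\{v\})\cup\{w\}$, which is itself a separator, because every \stpath through $v$ in $G$ must use both edges $\{u,v\}$ and $\{v,w\}$ and hence passes through $w$. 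Thus the reconfiguration-graph component containing $A$ is preserved.

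Finally, for $v\in A\cup B$ with $\deg(v)=2$: by Lemma~\ref{lemma:preprocesspaths}, say $v\in A$ and $v$ is adjacent to $s$, with unique path-successor $w$. By Lemma~\ref{lemma:only_forward} I may take any shortest sequence to consist entirely of forward jumps, and by the swap argument just made (using that $A':=(A\setminus\{v\})\cup\{w\}$ is a separator one jump from $A$), I may further assume the first jump of the sequence is $v\to w$. I absorb this forced jump into preprocessing: update the initial configuration to $A'$ and decrement $\ell$ by one. Now $v\notin A'$ and the standard suppression applies; the case $v\in B$ is handled symmetrically by post-processing. All operations run in polynomial time, and the final graph has every vertex of degree in $[3,2k]$ with at most two neighbors per canonical path. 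The main obstacle is exactly this last subcase: naively suppressing a degree-$2$ vertex lying in $A\cup B$ would modify the endpoints of the reconfiguration instance, but the Forward-moving Lemma lets us absorb the mandatory first jump into preprocessing and reduce to the straightforward case.
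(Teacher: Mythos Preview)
Your upper-bound argument (delete the ``middle'' crossing edge when a vertex has three neighbours on another canonical path) is exactly the paper's argument, just spelled out in more detail. One cosmetic omission: the single-crossing \stpath supplied by Lemma~\ref{lemma:crossonce} can traverse the crossing edge in either direction, and you only treat the case $s\to\cdots\to v\to u_{j,l}\to\cdots\to t$; the symmetric case $s\to\cdots\to u_{j,l}\to v\to\cdots\to t$ is handled identically.

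For the lower bound, the paper simply says ``any vertex of degree exactly two is deleted and replaced by an edge connecting its two neighbours'' and does not single out the case $v\in A\cup B$. You are right to flag this as delicate, but your proposed fix is incorrect. The assertion that in a shortest sequence ``the first jump is $v\to w$'' is false: the $P_i$-token may jump from $v$ directly past $w$ in one move, and inserting a stop at $w$ lengthens the sequence. Concretely, take $k=3$ with $P_1=s,v,w,b_1,t$, $v\in A$, $b_1\in B$, and both $v,w$ of degree two (no crossing edges on $P_1$). The $P_1$-token needs exactly one jump ($v\to b_1$), so the optimum is $1$ plus whatever $P_2,P_3$ require. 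Your reduction replaces $A$ by $A'$ (token at $w$) with $\ell'=\ell-1$; since $w$ again has degree two you recurse to $A''$ (token at $b_1$) with $\ell''=\ell-2$. But from $A''$ the $P_1$-token is already home, so the optimum has dropped by $1$, not $2$: when $\ell$ equals the original optimum, the reduced instance is a no-instance while the original is a yes-instance. In general one can only show $\mathrm{OPT}(A',B)\in\{\mathrm{OPT}(A,B)-1,\mathrm{OPT}(A,B)\}$, so ``decrement $\ell$ by one'' is not a sound reduction rule. (Separately, the lemma as stated outputs an instance with the \emph{same} $A,B$ and contains no $\ell$, so modifying them already departs from what is to be proved.)

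For what it is worth, in the paper's downstream uses the bound $\deg\ge 3$ is only ever invoked for vertices strictly between the $A$- and $B$-positions on their canonical path, so the paper's silence on $v\in A\cup B$ is harmless there; but your argument as written does not establish the lemma.
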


\begin{proof}
Consider a vertex $v$ that is adjacent to three distinct vertices $u_{i,q}, u_{i,l}, u_{i,m}$ that belong to the same canonical path (different from the canonical path of $v$) and where $q<l<m$. Then, the edge $\{v,u_{i,l}\}$ can be deleted; if there is an \stpath that uses $\{v,u_{i,l}\}$ as a crossing edge and this path is not hit by some configuration of tokens, then, depending on the location of the token on path $P_i$, that configuration of tokens fails to hit either the \stpath that uses $\{v,u_{i,q}\}$ as a crossing edge or the \stpath that uses $\{v,u_{i,m}\}$ as a crossing edge; see Figure~\ref{fig:crossing-edges} (right). Thus, we can assume that each vertex is adjacent to at most $2$ vertices on each canonical path, and thus, each vertex has degree at most $2k$. We can also assume, without loss of generality, that there are no vertices of degree $0$ or $1$ and that any vertex $v$ of degree exactly two is deleted and replaced by an edge connecting the two neighbors of $v$ on its canonical path. 
\end{proof}

We proceed by showing that we can always assume that the source and target minimum  \stseps, i.e., $A$ and $B$, are disjoint. 

\begin{lemma}\label{lemma:disjoint}
Given an instance $(G,s,t,A,B)$ of {\sc Minimum Separator Reconfiguration} in the token jumping model, it is possible to compute in polynomial time an equivalent instance $(G',s,t,A',B')$ in which $A' \cap B' = \emptyset$.
\end{lemma}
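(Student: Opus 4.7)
The plan is to remove the vertices of $A \cap B$ from $G$: set $C := A \cap B$, $G' := G - C$, $A' := A \setminus C$, and $B' := B \setminus C$, and show that the resulting instance is equivalent to the original. First, I would invoke \cref{lemma:sep} to observe that every vertex $v \in C$ lies on a unique canonical path $P_i$ and satisfies $a_i = b_i$ (equivalently $l_i = r_i$). By \cref{cor:bounded_region}, if any reconfiguration sequence from $A$ to $B$ exists, then there exists a shortest one in which, for each such $i$, the token on $P_i$ stays put at $u_{i,a_i}$ throughout the sequence. This says that the tokens on $C$ are ``dead weight'' and should be peeled off.

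Next, I would verify that $(G',s,t,A',B')$ is a valid MSR instance, i.e., that $A'$ and $B'$ are minimum \stseps of $G'$ of the same (common) size $k - |C|$. One direction is immediate: $A'$ separates $s$ from $t$ in $G'$ because every \stpath in $G' = G - C$ is also an \stpath in $G$ avoiding $C$, and hence must hit $A \setminus C = A'$. The other direction is the main place where care is needed: I would use Menger's theorem, noting that exactly $|C|$ of the $k$ canonical paths in $G$ are destroyed by removing $C$, while the remaining $k - |C|$ survive intact as internally disjoint \stpaths in $G'$; this pins the minimum \stsep size of $G'$ to exactly $k - |C|$.

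Finally, I would set up a length-preserving bijection between reconfiguration sequences in the two instances. In one direction, given a sequence $A = S_1, \ldots, S_r = B$ in $G$ with the tokens on $C$ stationary (which exists by the first step), the sets $S_i' := S_i \setminus C$ all have size $k - |C|$, separate $s$ from $t$ in $G'$, and consecutive sets differ by a single token jump (since the jump between $S_i$ and $S_{i+1}$ does not involve any vertex of $C$). In the reverse direction, given $A' = S_1', \ldots, S_r' = B'$ in $G'$, the sets $S_i := S_i' \cup C$ are minimum \stseps in $G$ because any \stpath in $G$ either visits $C$ (hit by $C \subseteq S_i$) or lies in $G'$ (hit by $S_i'$). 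Jumps are clearly preserved in both directions.

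I expect the Menger computation in the second paragraph to be the main point requiring care, since it is what guarantees that ``minimum'' is preserved when passing between the two instances; the rest is largely bookkeeping using the stationary-token property afforded by \cref{cor:bounded_region}.
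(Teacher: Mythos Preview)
Your proposal is correct and follows essentially the same approach as the paper: remove $A\cap B$ from the graph, invoke the fact that tokens on those vertices never move in a shortest sequence, and observe that the minimum separator size drops by exactly $|A\cap B|$. The paper cites the forward-moving lemma (\cref{lemma:only_forward}) rather than \cref{cor:bounded_region} for the ``tokens stay put'' step and removes the common vertices one at a time, but these are cosmetic differences; your Menger argument for minimality and your explicit two-way correspondence are more detailed than the paper's terse treatment, not different in substance.
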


\begin{proof}
Let $v$ be a vertex in both $A$ and $B$. By the forward-moving lemma, i.e., Lemma~\ref{lemma:only_forward}, we know that in any shortest reconfiguration sequence from $A$ to $B$ the token on $v$ does not move. 
Hence, we can simply delete the vertex $v$ from the graph (reducing the minimum separator size by one) and we adjust $A$ and $B$ accordingly (by deleting $v$ from each) to obtain an equivalent instance $(G -\{v\},s,t,A \setminus \{v\},B \setminus \{v\})$. We repeat this procedure until the source and target \stseps are disjoint. Since each iteration takes polynomial time and we can repeat the process at most $k$ times, the whole procedure takes polynomial time.  
\end{proof}

We conclude this section by formalizing the notion of unskippable vertices; a notion that will be useful in many of our subsequent results (and that was already implicity used in Lemma~\ref{lem-hard-1}). Given an instance $(G,s,t,A,B)$, we say that $v \in V(G) \setminus (A \cup B \cup \{s,t\})$ is \emph{unskippable} if for every sequence (if any exist) of minimum \stseps that transforms $A$ to $B$, there exists at least one \stsep $S$ in the sequence such that $v \in S$. In other words, there is no transformation from $A$ to $B$ that can skip over $v$ and not jump a token onto $v$ at some point. 
Similarly, we say that a set $U \subseteq V(G) \setminus (A \cup B \cup \{s,t\})$ of vertices is \emph{unskippable} whenever there exists at least one \stsep $S$ in every reconfiguration sequence (from $A$ to $B$) such that $|U \cap S| \geq 1$.

\begin{lemma}\label{lemma:unskip}
Let $(G,s,t,A,B)$ be an instance of {\sc Minimum Separator Reconfiguration} in the token jumping model. 
A vertex $v \in V(G) \setminus (A \cup B \cup \{s,t\})$ having two neighbors on a canonical path other than its own is unskippable. If $u,v \in  V(G) \setminus (A \cup B \cup \{s,t\})$, $u,v$ belong to two different canonical paths, and $\{u,v\} \in E(G)$, then $\{u,v\}$ is unskippable.
\end{lemma}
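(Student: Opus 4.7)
The plan is to prove both statements by contradiction using the crossing-edge structure provided by Lemma~\ref{lemma:crossonce}. Suppose a reconfiguration sequence from $A$ to $B$ avoids placing any token on $v$ (part 1), or on both $u$ and $v$ (part 2). For each step of such a sequence we will derive a linear constraint on the positions of the two relevant tokens, and then a discrete single-step ``intermediate value'' argument will yield the desired contradiction.

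For part 1, write $v = u_{j,x}$ and let $u_{i,l}, u_{i,m}$ with $l < m$ be the two neighbors of $v$ on $P_i$. I will exhibit two single-crossing \stpaths: $P^{(1)}$ goes from $s$ along $P_j$ through $v$, then across the edge $\{v, u_{i,m}\}$, then along $P_i$ to $t$; $P^{(2)}$ goes from $s$ along $P_i$ up to $u_{i,l}$, then across $\{u_{i,l}, v\}$ to $v$, then along $P_j$ to $t$. Every vertex on either path lies on $P_i \cup P_j$, so only the tokens on these two canonical paths can possibly block them. Writing $p_i, p_j$ for the current token indices and using the assumption $p_j \neq x$, the two blocking requirements become: $p_j < x \Rightarrow p_i \leq l$ and $p_j > x \Rightarrow p_i \geq m$, at every step of the sequence. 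The preprocessing of Corollary~\ref{cor:bounded_region} together with Lemma~\ref{lemma:preprocesspaths} retains $v$ only if $l_j \leq x \leq r_j$; combined with $v \notin A \cup B$ this gives the strict inclusion $l_j < x < r_j$, so $a_j$ and $b_j$ lie on opposite sides of $x$. Since $p_j$ starts at $a_j$, ends at $b_j$, and is never equal to $x$, there must exist some single step where $p_j$ switches from $<x$ to $>x$. Only one token moves per step, so $p_i$ is unchanged across that step, yet the two implications force $p_i \leq l$ immediately before and $p_i \geq m$ immediately after, contradicting $l < m$.

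Part 2 follows the same template. With $u = u_{i,x}$, $v = u_{j,y}$, and the edge $\{u,v\}$ playing the role of the single crossing edge, I consider the \stpath $P^{(1)}$ going $s$ along $P_i$ to $u$, across $\{u,v\}$ to $v$, then along $P_j$ to $t$, and the symmetric \stpath $P^{(2)}$ going $s$ along $P_j$ to $v$, across to $u$, then along $P_i$ to $t$. Under the assumption that no separator in the sequence contains $u$ or $v$, blocking both paths distills into the implications $p_i < x \Rightarrow p_j < y$ and $p_i > x \Rightarrow p_j > y$. By the same preprocessing argument, $l_i < x < r_i$ and $l_j < y < r_j$, so WLOG $a_i < x < b_i$; at the step where $p_i$ crosses $x$ the token on $P_j$ does not move, so $p_j$ would have to simultaneously satisfy $p_j < y$ and $p_j > y$, a contradiction.

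The main (and essentially only) subtlety is ensuring that the two auxiliary \stpaths are genuine simple paths and that, after preprocessing, the indices of $u$ and $v$ sit strictly inside the movement intervals of their canonical paths; both points reduce to invoking Corollary~\ref{cor:bounded_region}, Lemma~\ref{lemma:preprocesspaths} and the assumption $v \notin A \cup B$ (resp.\ $u, v \notin A \cup B$). Note that the intermediate-value step is purely combinatorial: since exactly one token moves per step and $p_j$ never equals $x$ (resp.\ $p_i$ never equals $x$), a sign change in $p_j - x$ between the two endpoints of the sequence must be realized at a single step, which is all the argument requires and therefore covers \emph{all} reconfiguration sequences, not only forward-moving ones.
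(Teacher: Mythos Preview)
Your proof is correct and follows essentially the same approach as the paper: construct two single-crossing \stpaths through the relevant crossing edge(s), derive positional constraints on the two tokens involved, and then use the fact that only one token moves per step to obtain a contradiction at the moment one token crosses the critical index. Your ``intermediate value'' phrasing and the explicit implications $p_j<x\Rightarrow p_i\le l$, $p_j>x\Rightarrow p_i\ge m$ make the argument a bit more systematic and explicitly cover arbitrary (not only forward) sequences, but the underlying idea is identical to the paper's ``first separator after $v$'' argument.
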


\begin{proof}
Let $v$ be a vertex on some canonical path $P$ having two neighbors on canonical path $P' \neq P$. Let $u_1$ and $u_2$ denote the neighbors of $v$ on $P'$ such that $u_1$ appears before $u_2$ on $P'$ ($u_1$ is closer to $s$ on $P'$). Assume that there exists a sequence of jumps transforming $A$ to $B$ such that the token on $P$ never jumps to vertex $v$. Consider the first separator $S$ of the sequence in which the token on $P$ is after $v$ (on some vertex of $P$ that is closer to $t$ along $P$). If the token on $P'$ is before $u_2$ then we have an \stpath using the crossing edge $\{v,u_2\}$; see Figure~\ref{fig:skippable-case1}. Hence, the token on $P'$ must be on or after $u_2$. But in that case, consider the separator $S'$ that occurs right before $S$ in the sequence. In $S'$, the token on $P$ is before $v$ and the token on $P'$ is on or after $u_2$. We again get a contradiction as we can now construct an \stpath using the crossing edge $\{v,u_1\}$. We therefore conclude that $v$ is unskippable. 

Now, consider $u,v \in  V(G) \setminus (A \cup B \cup \{s,t\})$ such that $u$ and $v$ belong to different canonical paths, say $P_u$ and $P_v$, and $\{u,v\} \in E(G)$. Assume that there exists a sequence transforming $A$ to $B$ such that no token ever jumps on neither $u$ nor $v$. Then, there must exist a first \stsep in the transformation such that either the token on $P_u$ is after $u$ and the token on $P_v$ is before $v$ or vice versa. In either case, we get an \stpath and the required contradiction; see Figure~\ref{fig:skippable-case2}. We then conclude that $\{u,v\}$ is unskippable. 
\end{proof}

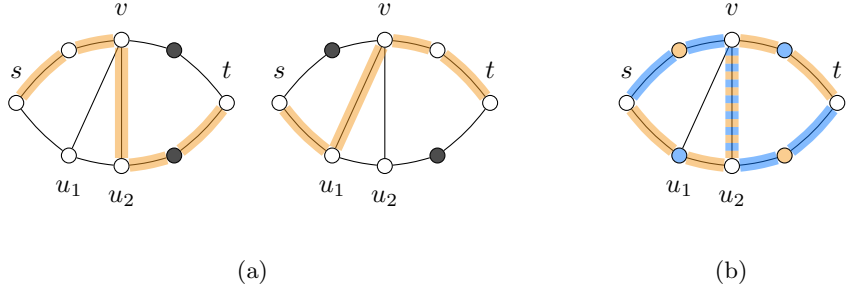
\begin{figure}[h!]
\begin{subfigure}{0.65\textwidth}
  \begin{center}
      \pgfdeclarelayer{background}
      \pgfsetlayers{background,main}
              
    \begin{tikzpicture}[decoration={snake,pre length=1mm,post length=1mm,segment length=4mm}, scale=0.7]
      \node[draw=black,fill=white,circle,inner sep = 2pt] (s) at (0,0) {};
      \node[above=.1cm of s] {$s$};
      \node[draw=black,fill=white,circle,inner sep = 2pt] (t) at (4,0) {};
      \node[above=.1cm of t] {$t$};

      \draw (1,1) node[draw=black,fill=white,circle,inner sep = 2pt] (ul1) {} (2,1.2) node[draw=black,fill=white,circle,inner sep = 2pt] (v1) {} (3,1) node[draw=black,fill=black!70!white,circle,inner sep = 2pt] (ur1) {} (1,-1) node[draw=black,fill=white,circle,inner sep = 2pt] (ul2) {} (2,-1.2) node[draw=black,fill=white,circle,inner sep = 2pt] (v2) {} (3,-1) node[draw=black,fill=black!70!white,circle,inner sep = 2pt] (ur2) {};

      \draw[bend left=7] (s) edge (ul1);
      \draw[bend left=5] (ul1) edge (v1);
      \draw[bend left=5] (v1) edge (ur1);
      \draw[bend left=7] (ur1) edge (t);
      \draw[bend right=7] (s) edge (ul2);
      \draw[bend right=5] (ul2) edge (v2);
      \draw[bend right=5] (v2) edge (ur2);
      \draw[bend right=7] (ur2) edge (t);
      \draw (ul2) -- (v1) -- (v2);
      
      \node[above=.1cm of v1] {$v$};
      \node[below=.1cm of ul2] {$u_1$};
      \node[below=.1cm of v2] {$u_2$};

      \draw[opacity=0.5, line width=5pt, myorange] (s) edge[bend left=7] (ul1) (ul1) edge[bend left=5] (v1) (v1) -- (v2) (v2) edge[bend right=5] (ur2) (ur2) edge[bend right=7] (t);

      \node[draw=black,fill=white,circle,inner sep = 2pt] (s) at (5,0) {};
      \node[above=.1cm of s] {$s$};
      \node[draw=black,fill=white,circle,inner sep = 2pt] (t) at (9,0) {};
      \node[above=.1cm of t] {$t$};

      \draw (6,1) node[draw=black,fill=black!70!white,circle,inner sep = 2pt] (ul1) {} (7,1.2) node[draw=black,fill=white,circle,inner sep = 2pt] (v1) {} (8,1) node[draw=black,fill=white,circle,inner sep = 2pt] (ur1) {} (6,-1) node[draw=black,fill=white,circle,inner sep = 2pt] (ul2) {} (7,-1.2) node[draw=black,fill=white,circle,inner sep = 2pt] (v2) {} (8,-1) node[draw=black,fill=black!70!white,circle,inner sep = 2pt] (ur2) {};

      \draw[bend left=7] (s) edge (ul1);
      \draw[bend left=5] (ul1) edge (v1);
      \draw[bend left=5] (v1) edge (ur1);
      \draw[bend left=7] (ur1) edge (t);
      \draw[bend right=7] (s) edge (ul2);
      \draw[bend right=5] (ul2) edge (v2);
      \draw[bend right=5] (v2) edge (ur2);
      \draw[bend right=7] (ur2) edge (t);
      \draw (ul2) -- (v1) -- (v2);
      
      \node[above=.1cm of v1] {$v$};
      \node[below=.1cm of ul2] {$u_1$};
      \node[below=.1cm of v2] {$u_2$};

      \draw[opacity=0.5, line width=5pt, myorange] (s) edge[bend right=7] (ul2) (ul2) -- (v1) (v1) edge[bend left=5] (ur1) (ur1) edge[bend left=7] (t);
    \end{tikzpicture}
  \end{center}
  \caption{}
  \label{fig:skippable-case1}
  \end{subfigure}%
  \begin{subfigure}{0.4\textwidth}
  \begin{center}
      \pgfdeclarelayer{background}
      \pgfsetlayers{background,main}
              
    \begin{tikzpicture}[decoration={snake,pre length=1mm,post length=1mm,segment length=4mm}, scale=0.7]
      \node[draw=black,fill=white,circle,inner sep = 2pt] (s) at (0,0) {};
      \node[above=.1cm of s] {$s$};
      \node[draw=black,fill=white,circle,inner sep = 2pt] (t) at (4,0) {};
      \node[above=.1cm of t] {$t$};

      \draw (1,1) node[draw=black,fill=myorange!50!white,circle,inner sep = 2pt] (ul1) {} (2,1.2) node[draw=black,fill=white,circle,inner sep = 2pt] (v1) {} (3,1) node[draw=black,fill=myblue!50!white,circle,inner sep = 2pt] (ur1) {} (1,-1) node[draw=black,fill=myblue!50!white,circle,inner sep = 2pt] (ul2) {} (2,-1.2) node[draw=black,fill=white,circle,inner sep = 2pt] (v2) {} (3,-1) node[draw=black,fill=myorange!50!white,circle,inner sep = 2pt] (ur2) {};

      \draw[bend left=7] (s) edge (ul1);
      \draw[bend left=5] (ul1) edge (v1);
      \draw[bend left=5] (v1) edge (ur1);
      \draw[bend left=7] (ur1) edge (t);
      \draw[bend right=7] (s) edge (ul2);
      \draw[bend right=5] (ul2) edge (v2);
      \draw[bend right=5] (v2) edge (ur2);
      \draw[bend right=7] (ur2) edge (t);
      \draw (ul2) -- (v1) -- (v2);
      
      \node[above=.1cm of v1] {$v$};
      \node[below=.1cm of ul2] {$u_1$};
      \node[below=.1cm of v2] {$u_2$};

      \draw[opacity=0.5, line width=5pt, myorange] (s) edge[bend right=7] (ul2) (ul2) edge[bend right=5] (v2) (v1) edge[bend left=5] (ur1) (ur1) edge[bend left=7] (t);
      \draw[opacity=0.5, line width=5pt, myblue] (s) edge[bend left=7] (ul1) (ul1) edge[bend left=5] (v1) (v2) edge[bend right=5] (ur2) (ur2) edge[bend right=7] (t);
      \draw[opacity=0.5, line width=5pt, myblue, dash pattern= on 3pt] (v1) -- (v2);
      \draw[opacity=0.5, line width=5pt, myorange, dash pattern= on 3pt, dash phase= 3pt] (v1) -- (v2);
    \end{tikzpicture}
  \end{center}
  \caption{}
  \label{fig:skippable-case2}
  \end{subfigure}
  \caption{(a) Black vertices represent tokens. Left: The position of tokens in $S$. The bottom token must come after $u_2$ in $S$ to hit the \stpath highlighted in orange. Right: The position of tokens in $S'$. The \stpath highlighted in orange is not hit by any token. (b) Orange vertices represent one possible token configuration and blue vertices represent another. For the orange (resp.\ blue) configuration, we get the \stpath in orange (resp.\ blue).}
\end{figure}   

\section{Fixed-parameter tractability for parameter $k$}

In this section, we show that finding a shortest sequence for the token jumping variant of {\sc Minimum Separator Reconfiguration} is fixed-parameter tractable with respect to the size $k$ of minimum \stseps. We do so by constructing a path decomposition of the graph from a not necessarily shortest  sequence of token jumps, and then proceed by dynamic programming on that path decomposition. We shall work towards proving the following.

\begin{theorem}\label{thm:mainFPT}
The optimization version of {\sc Minimum Separator Reconfiguration} in the token jumping model is fixed-parameter tractable with respect to the size $k$ of a minimum separator.
\end{theorem}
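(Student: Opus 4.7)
The strategy is to reduce the problem to dynamic programming on a path decomposition whose width depends only on $k$. The plan has three main steps: first, use the polynomial-time greedy algorithm of Theorem~\ref{thm:tjdecide} to decide feasibility and, if feasible, extract a (not necessarily shortest) reconfiguration sequence $\sigma = S_1, \ldots, S_r$; second, use $\sigma$ together with the preprocessing of Section~6 to construct a path decomposition of $G$ of width $f(k)$; third, perform standard dynamic programming over this decomposition to find a shortest sequence. Feasibility of the first step is immediate, and by Lemma~\ref{lemma:only_forward} we may assume $\sigma$ consists only of forward jumps, so $r$ is bounded by the total length of the canonical paths.

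After the preprocessing of Lemmas~\ref{lemma:preprocesspaths}, \ref{lemma:degthree}, and~\ref{lemma:disjoint}, every vertex of $G$ lies on some canonical path $P_i$ between $u_{i,l_i}$ and $u_{i,r_i}$, has degree at most $2k$, and $A \cap B = \emptyset$. For each $t \in \{1, \ldots, r-1\}$ I define a bag $B_t$ consisting of $S_t \cup S_{t+1}$ together with, for every canonical path $P_j$, all vertices of $P_j$ adjacent (in $G$) to some vertex of $S_t \cup S_{t+1}$. Since each vertex has at most two neighbors on each canonical path (Lemma~\ref{lemma:degthree}) and there are $k$ paths, each bag has size $O(k^2)$. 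Verifying that this yields a path decomposition requires three checks: every vertex and every edge of $G$ appears in some bag, and the set of bags containing any fixed vertex is contiguous. Vertex coverage follows from the sweep of $\sigma$ touching every position of each canonical path; coverage of crossing edges uses the unskippable lemma (Lemma~\ref{lemma:unskip}) to guarantee that for every crossing edge $\{u,v\}$ some step places a token on one of the endpoints, whereupon the other endpoint is included in the enlarged bag by construction; contiguity follows from the forward-moving property, which prevents the sweep from revisiting previously-swept regions.

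With a path decomposition of width $g(k) = O(k^2)$ in hand, a standard DP computes, for each bag $B_t$ and each ``local separator state'' (an assignment of one token per canonical path intersecting $B_t$ together with an indication of which vertices of the bag lie ``before'' or ``after'' the token on their path), the minimum number of jumps needed to reach that state from $A$. Transitions between consecutive bags encode either forgetting or introducing vertices, or performing a single token jump; validity of the resulting separator is verified locally using Lemma~\ref{lemma:crossonce}, which reduces separator-checking to inspecting crossing edges whose endpoints both lie in the current bag. The total number of local states per bag is bounded by a function of $k$, and the final answer is read off from the DP value at the last bag for the target state $B$, yielding an overall running time of $f(k) \cdot n^{O(1)}$.

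The main obstacle I anticipate is pinning down the precise $f(k)$ bound on bag size and rigorously verifying the coverage and contiguity properties of the proposed path decomposition. The subtle point is that the unskippable lemma only guarantees that a token sits on \emph{one} endpoint of a crossing edge at some step of $\sigma$, not that both endpoints simultaneously belong to the same separator; enlarging each bag to include neighborhoods of $S_t \cup S_{t+1}$ handles this, but I expect that verifying this enlargement still preserves contiguity of each vertex's bag set, and in particular that no vertex is reintroduced after being forgotten, will be the most delicate part of the argument and will lean heavily on the forward-moving lemma.
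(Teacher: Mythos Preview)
Your high-level strategy matches the paper's: run the greedy algorithm of Theorem~\ref{thm:tjdecide} to obtain a feasible forward-only sequence, turn that sequence into a path decomposition of width bounded in $k$, and do dynamic programming on the decomposition. Two parts of your proposal, however, do not go through as written.

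First, your path decomposition differs from the paper's and your justification for it is incorrect. The paper does not enlarge bags by neighborhoods; instead, for a jump from $u_{j,q}$ to $u_{j,l}$ it ``fills in'' the intermediate bags $X\cup\{u_{j,q},u_{j,q+1}\}$, $X\cup\{u_{j,q+1},u_{j,q+2}\},\ldots$, obtaining width exactly $k$. Your $O(k^2)$ construction can in fact be made to work, but not for the reason you give: the greedy sequence $\sigma$ does \emph{not} touch every position of each canonical path, since jumps may skip vertices. The correct argument is that any skipped vertex $u_{j,m}$ has a cross-path neighbor $u_{i,p}$, and applying the separator condition to both $S_t$ and $S_{t+1}$ forces the token on $P_i$ to sit exactly at $p$ during the jump; hence $u_{j,m}$ lands in $N(S_t)$. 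Contiguity requires a similar (and not entirely trivial) monotonicity argument for each neighbor's token interval. The paper's filled-in decomposition avoids all of this.

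Second, and more seriously, the dynamic programming is far from ``standard,'' and a single local-state table is not enough. Your proposed DP records, at each bag, one token configuration restricted to the bag and the minimum number of jumps to reach it. But we are computing a reconfiguration \emph{sequence}: when a new vertex $v$ is introduced you must decide \emph{where in the already-built sequence} the token on $v$'s path lands on $v$ and later leaves it, and that choice interacts with the timing of every other token's moves through the crossing-edge constraints. Two partial solutions ending in the same local snapshot can allow entirely different legal insertion points for $v$. The paper resolves this by making the DP state a \emph{sequence} of configuration characteristics (one per distinct local view along the partial solution), and by allowing a token to be ``removed'' from its path as a placeholder for a jump whose destination has not yet been introduced; when $v$ appears, that removal is retroactively replaced by a jump to $v$ at the recorded point in the sequence. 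This bookkeeping is the crux of the FPT algorithm, and without it (or an equivalent mechanism) your DP cannot correctly account for jumps whose targets lie outside the current bag.
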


\begin{proof}
Recall that, by Lemma~\ref{lemma:preprocesspaths}, we can preprocess the graph so that all vertices are on the canonical paths and the vertices in $A$ and $B$ are adjacent to (at least one of) $s$ or $t$. By Lemma~\ref{lemma:degthree}, each vertex in $G$ has degree at least $3$ and at most $2k$ and, by Lemma~\ref{lemma:disjoint}, we know that $A \cap B = \emptyset$. 

For simplicity, we remove the vertices $s$ and $t$ from the graph and in this section view the problem as follows. Given two disjoint sets of vertices $A$ and $B$ which are connected by $k=|A|=|B|$ disjoint paths, the aim is to reconfigure the tokens from $A$ to $B$ under the constraint that at no point should a path from any vertex in $A$ to any vertex in $B$ open up (i.e., not be hit by some token).

First observe that this is a sufficient condition. This is because, as $A$ and $B$ are disjoint \stseps, any \stpath must pass through both $A$ and $B$ (and thus, includes a path from some vertex in $A$ to some vertex in $B$ as a subpath).

To see that this is also necessary, suppose some configuration of tokens (that is an \stsep) does not hit some path $P$ from some vertex $a\in A$ to some vertex $b\in B$. We can assume this path only intersects $A$ in $a$ and only intersects $B$ in $b$ since otherwise, we could take a subpath of $P$. Then, without loss of generality, both $a$ and $b$ should be adjacent to $s$ (they could also both be adjacent to $t$, but since the configuration of tokens is an \stsep, it cannot be the case that one of them is adjacent to $s$ while the other is to $t$). This implies $a$ and $b$ are on different canonical paths (due to the preprocessing and the chordless nature of the canonical paths, the vertices in $A$ and $B$ are the only vertices that can connect to $s$ and $t$, so if $a\in A$ and $b\in B$ are on the same canonical path, one of them must connect to $s$ and the other to $t$ since otherwise, the canonical path would not be an \stpath), and there is a path $P'$ (a subpath of $P$) connecting these canonical paths that does not pass through any vertex of $A$ or $B$. Let $a'$ be the vertex on the canonical path of $a$ that is in $B$. We can now find an \stpath that will contradict the assumption that $A$ is an \stsep: $t$ is adjacent to $a'$; we walk along the canonical path of $a'$ towards $a$ until we reach $P'$ (which must happen before we reach the token on $A$), then follow $P'$ to the canonical path of $b$, then walk along it towards $b$ and finally reach $s$. This \stpath is not hit by $A$ and we reach a contradiction: $A$ is not an \stsep. Therefore, we can conclude that any configuration of tokens that is an \stsep hits every path from any vertex in $A$ to any vertex in $B$.

Note that similarly to Lemma~\ref{lemma:crossonce}, it suffices to check for paths from some vertex in $A$ to some vertex in $B$ with exactly one crossing edge.

Because we have gotten rid of $s$ and $t$, we can now pick one \emph{consistent} ordering of the vertices of the canonical paths. Previously, some vertices of $A$ may have been adjacent to $t$ and some others adjacent to $s$, which means that the order of vertices as they appear on the paths from $A$ to $B$ might not have been the same order in which they appear from $s$ to $t$. Going forward, we assume the vertices of the canonical paths are numbered from $A$ to $B$, i.e., $u_{i,j}$ is the $j^\textrm{th}$ vertex of the $i^\textrm{th}$ canonical path from $A$ to $B$, in particular $u_{i,1}\in A$ is the first (leftmost) vertex of the path, $u_{i,L(i)}\in B$ the last (rightmost) vertex of the path.

Since we only need to consider forward jumps, jumps are always from some vertex $u_{i,j}$ to another vertex $u_{i,q}$ with $q > j$. We say a token on vertex $u_{i,j}$ is before, on or after vertex $u_{i,q}$ if $j<q, j=q$, or $j>q$,  respectively.

\paragraph{Path decomposition.}
We now construct a path decomposition of $G$. We start by computing a (not necessarily shortest) token jumping solution using Theorem~\ref{thm:tjdecide}. By Lemma~\ref{lemma:only_forward}, we can assume this solution only contains forward jumps. The token jumping solution directly gives a path decomposition of $G$: if $S_i$ and $S_{i+1}$ are consecutive separators in the reconfiguration sequence, then they differ in the replacement of one vertex $u_{j,q}$ by one vertex $u_{j,l}$ which are both from the same canonical path. That is, $S_i=X\cup\{u_{j,q}\}, S_{i+1}=X\cup\{u_{j,l}\}$ for some $X$. We replace this jump by the sequence of bags $X\cup \{u_{j,q}\}, X\cup \{u_{j,q},u_{j,q+1}\}, X\cup \{u_{j,q+1},u_{j,q+2}\},\ldots, X\cup \{u_{j,l-1},u_{j,l}\}, X\cup \{u_{j,l}\}$. The largest bag in this decomposition contains $k+1$ vertices, and hence the width of this path decomposition is $k$.

Since the token jumping sequence moves all tokens from one endpoint of each canonical path to the other, we see that the bags of the decomposition contain all vertices of each path (and thus all vertices of the graph) and that (since there are only forward jumps) the bags containing each specific vertex induce a connected subpath of the decomposition. By construction, each edge of a canonical path is contained in the decomposition. Note that the canonical paths are chordless and therefore the only remaining edges we have to concern ourselves with are the edges connecting vertices on distinct canonical paths.

For an edge $\{u,v\}$ that connects two vertices $u,v$ that are on different canonical paths, recall that a token cannot move past $u$ before the other token has moved at least onto $v$ or vice versa (Lemma~\ref{lemma:unskip}). This implies that either $u,v$ occur together in some separator of the reconfiguration sequence (and hence $u,v$ appear together in some bag of the decomposition) or that at some point in the reconfiguration sequence, either $u$ or $v$ appears in some separator and then the other vertex is jumped over (we say a vertex $v$ is \emph{jumped over} if a token jumps from vertex $x$ to vertex $y$ and the vertex $v$ appears along the canonical path from $x$ to $y$). In this case, there is also a bag containing both $u$ and $v$.

Note that the path decomposition is constructed in such a way that
\begin{itemize}
    \item The first bag is equal to $A$;
    \item The last bag is equal to $B$;
    \item Vertices are introduced in the order they appear on the canonical paths (in particular, the jump from a previously introduced vertex to a newly introduced one would be a forward jump).
\end{itemize}

\paragraph{Dynamic programming.}

As is customary for dynamic programming on path decompositions, we process the bags of the decomposition in order, and for each bag $X$, we consider the graph induced by the vertices in $X$ and all vertices in bags to the left of $X$. We call this graph $G_X$. As usual, we assume the decomposition is given in \emph{nice} form, i.e., each bag differs from the previous either by the addition of one vertex (we say that such a vertex is \emph{introduced}) or the removal of one vertex (we say such a vertex is \emph{forgotten}). We refer the reader to~\cite{DBLP:journals/tcs/Bodlaender98, DBLP:journals/jgt/HarveyW17, DBLP:journals/jal/RobertsonS86} for an extensive background on pathwidth, (nice) path decompositions, and their applications. 


A \emph{partial solution} $\Pi$ is a sequence of configurations of tokens $C_1,\ldots, C_l$, starting with $A$, so that each configuration $C_i$ is obtained from the previous one by one of the following operations:
\begin{enumerate}
    \item removing a token from a canonical path and replacing it by another token further along the same canonical path (i.e., a forward jump);
    \item removing a token from a canonical path completely,
\end{enumerate}
under the condition that if $v$ is the rightmost vertex of a canonical path in $G_X$ (note that this implies $v\in X$), then $v$ may not be reachable from any vertex of $A$ in the subgraph $G_X - C_i$ unless there is no token on the canonical path of $v$ in $C_i$. Intuitively, $C_i$ should separate $A$ from $X$, except that vertices on canonical paths from which the token has been removed may be reachable (and, in the case where the bag contains two vertices from the same canonical path, only the rightmost one needs to be unreachable).

The operation (1) is just a normal (forward) jump operation of reconfiguration via token jumping, whereas the operation (2) is necessary to represent a potential future jump to a vertex that has not yet been introduced.

We now define the \emph{configuration characteristic} of a configuration of tokens $C$ with respect to $X$ as follows. For every vertex $v$ in the closed neighborhood of $X$, we store its \emph{state}: whether the token on the canonical path of $v$ is before, on, or after $v$.

Since by Lemma~\ref{lemma:degthree} every vertex has degree at most $2k$ there are at most $3^{2k^2+2k}$ distinct characteristics of token configurations.\footnote{If for a vertex $v\in X$ both of its neighbors on the canonical path are also present in $X$, the state of $v$ itself is implied and so the bound is valid.}


The \emph{solution characteristic} $\mathcal{C}$ of a partial solution $\Pi$ is now defined as the sequence of characteristics of the token configurations occurring in $\Pi$, with runs of consecutive duplicate configuration characteristics removed and replaced with one occurence of the configuration characteristic. Note that since tokens can only jump forward, we will never repeat a characteristic. Thus, the number of distinct characteristics of partial solutions is at most $(3^{2k^2+2k}+1)!$.

Our dynamic programming table will store for each solution characteristic the minimum number of configurations of a partial solution having that characteristic. We now show how to, given the dynamic programming table for the previous bag, the corresponding table for the next bag (which may be either an introduce or forget bag) may be computed.

\paragraph{Introduce.} Let $X'$ be the previous bag in the decomposition and let $X=X'\cup\{v\}$ be the next bag where vertex $v$ is being introduced. Consider a partial solution $\Pi$ for $G_{X}$. If $\Pi$ does not contain any jump to $v$, then $\Pi$ is also a partial solution for $G_{X'}$: $G_{X'}$ differs from $G_X$ only by the deletion of $v$; deleting $v$ cannot make any vertex reachable. Moreover, if $u$ is the previous vertex on the canonical path of $v$, then $u$ cannot have been reachable in any configuration in which the token on its canonical path had not been removed since then we could have reached $v$ from $u$.

Otherwise, if $\Pi$ contains a jump to $v$, then, because all jumps are forward, this is the last jump operation on vertex $v$ (though it may possibly be followed by the removal of the token on $v$). Consider $\Pi'$ that is obtained from $\Pi$ by replacing the last jump of the token to $v$ with a removal of the token (and ignoring the possible duplicate subsequent removal of the token on $v$). We claim that $\Pi'$ is a partial solution for $G_{X'}$: if in $\Pi'$, at any time a path opens up (in $G_{X'}$) from some vertex in $A$ to some vertex in $X'$ (excluding the canonical path of $v$), then, since $X'\subseteq X$, this path would also be an $A$-$X$-path in $\Pi$ (and $G_{X}$).

Thus, we can obtain all possible partial solutions for $G_{X}$ by considering the partial solutions for $G_{X'}$ and either
\begin{enumerate}
    \item doing nothing;
    \item replacing the removal of the token on the canonical path of $v$ with a jump to $v$; or
    \item replacing the removal of the token on the canonical path of $v$ with a jump to $v$ \emph{and} at some point in the future, adding a removal of (the token on) $v$.
\end{enumerate}

Intuitively, (1) corresponds to jumping over $v$, (3) corresponds to jumping onto it and later moving away and (2) is necessary to represent that the token on $v$ has reached its target location (i.e., $v\in B$) and will not make further jumps.

\begin{figure}[h!]
  \begin{center}
      \pgfdeclarelayer{background}
      \pgfsetlayers{background,main}
              
    \begin{tikzpicture}[decoration={snake,pre length=1mm,post length=1mm,segment length=4mm}, scale=0.8]

      \draw (0,0.5) -- (5,0.5) (0,-0.5) -- (4,-0.5);
      \draw (0,0) ellipse (0.8cm and 1.2cm) (3,0) ellipse (0.8cm and 1.2cm) (-1,-1.5) rectangle (4,1.5);
      \draw (0,0) node {$A$} (3,0) node {$X'$} (0,0.5) node[draw=black, circle, inner sep=1.2pt, fill=white] {} (0,-0.5) node[draw=black, circle, inner sep=1.2pt, fill=white] {} (3,0.5) node[draw=black, circle, inner sep=1.2pt, fill=white] {$u$} (3,-0.5) node[draw=black, circle, inner sep=1.2pt, fill=white] {} (4.5,0.5) node[draw=black, circle, inner sep=1.2pt, fill=white] {$v$} (1.5,-2) node {$G_{X'}$}; 
      
      \draw (5.8,0) edge[->] (7.8,0) (6.8,0.3) node {Introduce};
      
      \draw (10,0.5) -- (14.2,0.5) (10,-0.5) -- (14.2,-0.5);
      \draw (10,0) ellipse (0.8cm and 1.2cm) (13,0) ellipse (0.8cm and 1.2cm) (9,-1.5) rectangle (14,1.5);
      \draw (10,0) node {$A$} (13,0) node {$X$} (10,0.5) node[draw=black, circle, inner sep=1.2pt, fill=white] {} (10,-0.5) node[draw=black, circle, inner sep=1.2pt, fill=white] {} (12.65,0.5) node[draw=black, circle, inner sep=1.2pt, fill=white] {$u$} (13,-0.5) node[draw=black, circle, inner sep=1.2pt, fill=white] {} (13.35,0.5) node[draw=black, circle, inner sep=1.2pt, fill=white] {$v$} (11.5,-2) node {$G_X$};
    \end{tikzpicture}

    \vspace{1cm}

    \begin{tikzpicture}[decoration={snake,pre length=1mm,post length=1mm,segment length=4mm}, scale=0.8]

      \draw (0,0.5) -- (4.2,0.5) (0,-0.5) -- (4.2,-0.5);
      \draw (0,0) ellipse (0.8cm and 1.2cm) (3,0) ellipse (0.8cm and 1.2cm) (-1,-1.5) rectangle (4,1.5);
      \draw (0,0) node {$A$} (3,0) node {$X'$} (0,0.5) node[draw=black, circle, inner sep=1.2pt, fill=white] {} (0,-0.5) node[draw=black, circle, inner sep=1.2pt, fill=white] {} (2.65,0.5) node[draw=black, circle, inner sep=1.2pt, fill=white] {$u$} (3,-0.5) node[draw=black, circle, inner sep=1.2pt, fill=white] {} (3.35,0.5) node[draw=black, circle, inner sep=1.2pt, fill=white] {$v$} (1.5,-2) node {$G_{X'}$}; 
      
      \draw (5.8,0) edge[->] (7.8,0) (6.8,0.3) node {Forget};
      
      \draw (10,0.5) -- (14.2,0.5) (10,-0.5) -- (14.2,-0.5);
      \draw (10,0) ellipse (0.8cm and 1.2cm) (13,0) ellipse (0.8cm and 1.2cm) (9,-1.5) rectangle (14,1.5);
      \draw (10,0) node {$A$} (13,0) node {$X$} (10,0.5) node[draw=black, circle, inner sep=1.2pt, fill=white] {} (10,-0.5) node[draw=black, circle, inner sep=1.2pt, fill=white] {} (11.7,0.5) node[draw=black, circle, inner sep=1.2pt, fill=white] {$u$} (13,-0.5) node[draw=black, circle, inner sep=1.2pt, fill=white] {} (13,0.5) node[draw=black, circle, inner sep=1.2pt, fill=white] {$v$} (11.5,-2) node {$G_X$};
    \end{tikzpicture}
  \end{center}
  \caption{An overview of the process of introducing and forgetting a vertex.}
  \label{fig:introduce-forget}
\end{figure}
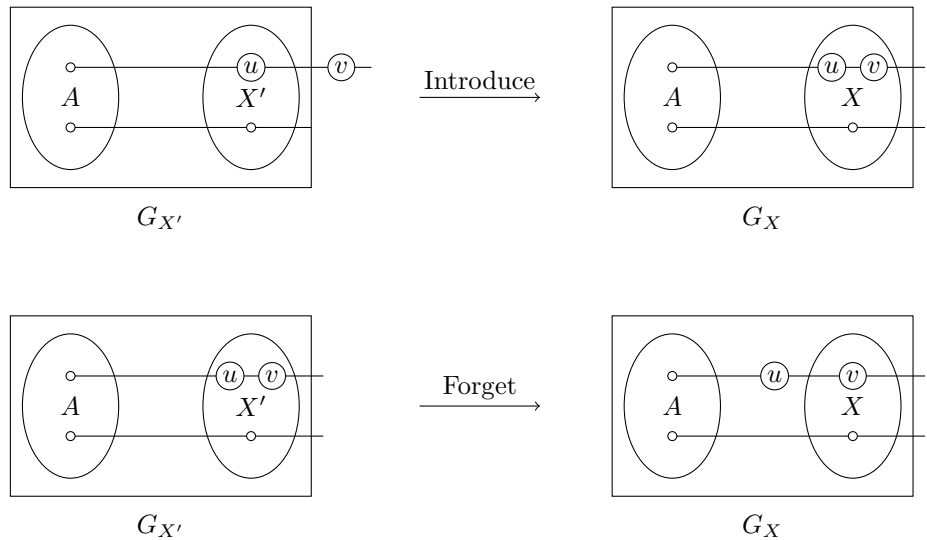  

\begin{lemma}
    Let $\mathcal{C}'$ be a characteristic of a partial solution $\Pi'$ with respect to $G_{X'}$. Then, the characteristic $\mathcal{C} = \mathcal{C}'$ and the legality of the partial solution $\Pi$ with respect to $G_{X}$ obtained from $\Pi'$ by doing nothing depends only on, and can be computed from, $\mathcal{C}'$.
\end{lemma}

\begin{proof}
Consider the characteristic for $\Pi'$ (with respect to $G_{X'}$). Since (by the basic properties of path decompositions) $v$ only has neighbors in $X'$, the only information in the characteristic for $\Pi$ (w.r.t. $G_X=G_{X\cup \{v\}}$) that we are missing is the state of $v$. However, the vertex preceding $v$ on its canonical path is also in $X$ (because the vertices are introduced in the order they appear on the canonical paths) and we can infer the state of $v$ from it: if its state is ``before'' or ``on'' then $v$ is ``before'', otherwise the token has been removed and $v$ is ``after''.

To check whether the partial solution is legal when we extend it by doing nothing, we have to check whether an $A$-$X$-path opens up at any point. We only have to worry about paths that use $v$ since other paths are already precluded by $\Pi'$ being a (legal) partial solution. For this it suffices to check that in no characteristic, $v$ gets state ``after'' while one of its neighbors still has state ``before.''
\end{proof}

\begin{figure}[h!]
  \begin{center}
      \pgfdeclarelayer{background}
      \pgfsetlayers{background,main}
              
    \begin{tikzpicture}[decoration={snake,pre length=1mm,post length=1mm,segment length=4mm}]

      \draw (0,0.7) -- (6,0.7) (0,-0.7) -- (6,-0.7);
      \draw (3,0.7) -- (6,-0.7) -- (4,0.7) (1.5,-0.7) -- (6,0.7) -- (4.5,-0.7);
      \draw (0,0) ellipse (0.8cm and 1.2cm) (6,0) ellipse (0.8cm and 1.2cm);
      \draw (0,0) node {$A$} (6,0) node {$X$} (2,0.7) node[draw=black, circle, inner sep=1.2pt, fill=white] {$v_1$} (3,0.7) node[draw=black, circle, inner sep=1.2pt, fill=myorange!50!white] {$v_2$} (4,0.7) node[draw=black, circle, inner sep=1.2pt, fill=white] {$v_3$} (1.5,-0.7) node[draw=black, circle, inner sep=1.2pt, fill=myorange!50!white] {$v_5$} (2.5,-0.7) node[draw=black, circle, inner sep=1.2pt, fill=white] {$v_6$} (3.5,-0.7) node[draw=black, circle, inner sep=1.2pt, fill=white] {$v_7$} (4.5,-0.7) node[draw=black, circle, inner sep=1.2pt, fill=white] {$v_8$} (6,0.7) node[draw=black, circle, inner sep=1.2pt, fill=white] {$v_4$} (6,-0.7) node[draw=black, circle, inner sep=1.2pt, fill=white] {$v_9$};
      \draw (0,0.7) node[draw=black, circle, inner sep=1.2pt, fill=white] {} (0,-0.7) node[draw=black, circle, inner sep=1.2pt, fill=white] {};
    \end{tikzpicture}

    \vspace{1cm}

    \begin{tikzpicture}[decoration={snake,pre length=1mm,post length=1mm,segment length=4mm}]

      \draw (0,0.7) -- (6,0.7) (0,-0.7) -- (6,-0.7);
      \draw (3,0.7) -- (6,-0.7) -- (4,0.7) (1.5,-0.7) -- (6,0.7) -- (4.5,-0.7);
      \draw (0,0) ellipse (0.8cm and 1.2cm) (6,0) ellipse (0.8cm and 1.2cm);
      \draw (0,0) node {$A$} (6,0) node {$X$} (2,0.7) node[draw=black, circle, inner sep=1.2pt, fill=white] {$v_1$} (3,0.7) node[draw=black, circle, inner sep=1.2pt, fill=myorange!50!white] {$v_2$} (4,0.7) node[draw=black, circle, inner sep=1.2pt, fill=white] {$v_3$} (1.5,-0.7) node[draw=black, circle, inner sep=1.2pt, fill=white] {$v_5$} (2.5,-0.7) node[draw=black, circle, inner sep=1.2pt, fill=myorange!50!white] {$v_6$} (3.5,-0.7) node[draw=black, circle, inner sep=1.2pt, fill=white] {$v_7$} (4.5,-0.7) node[draw=black, circle, inner sep=1.2pt, fill=white] {$v_8$} (6,0.7) node[draw=black, circle, inner sep=1.2pt, fill=white] {$v_4$} (6,-0.7) node[draw=black, circle, inner sep=1.2pt, fill=white] {$v_9$};
      \draw (0,0.7) node[draw=black, circle, inner sep=1.2pt, fill=white] {} (0,-0.7) node[draw=black, circle, inner sep=1.2pt, fill=white] {};
    \end{tikzpicture}

    \vspace{1cm}

    \begin{tikzpicture}[decoration={snake,pre length=1mm,post length=1mm,segment length=4mm}]

      \draw (0,0.7) -- (6,0.7) (0,-0.7) -- (6,-0.7);
      \draw (3,0.7) -- (6,-0.7) -- (4,0.7) (1.5,-0.7) -- (6,0.7) -- (4.5,-0.7);
      \draw (0,0) ellipse (0.8cm and 1.2cm) (6,0) ellipse (0.8cm and 1.2cm);
      \draw (0,0) node {$A$} (6,0) node {$X$} (2,0.7) node[draw=black, circle, inner sep=1.2pt, fill=white] {$v_1$} (3,0.7) node[draw=black, circle, inner sep=1.2pt, fill=myorange!50!white] {$v_2$} (4,0.7) node[draw=black, circle, inner sep=1.2pt, fill=white] {$v_3$} (1.5,-0.7) node[draw=black, circle, inner sep=1.2pt, fill=white] {$v_5$} (2.5,-0.7) node[draw=black, circle, inner sep=1.2pt, fill=white] {$v_6$} (3.5,-0.7) node[draw=black, circle, inner sep=1.2pt, fill=myorange!50!white] {$v_7$} (4.5,-0.7) node[draw=black, circle, inner sep=1.2pt, fill=white] {$v_8$} (6,0.7) node[draw=black, circle, inner sep=1.2pt, fill=white] {$v_4$} (6,-0.7) node[draw=black, circle, inner sep=1.2pt, fill=white] {$v_9$};
      \draw (0,0.7) node[draw=black, circle, inner sep=1.2pt, fill=white] {} (0,-0.7) node[draw=black, circle, inner sep=1.2pt, fill=white] {};
    \end{tikzpicture}

    \vspace{1cm}

    \begin{tikzpicture}[decoration={snake,pre length=1mm,post length=1mm,segment length=4mm}]

      \draw (0,0.7) -- (6,0.7) (0,-0.7) -- (6,-0.7);
      \draw (3,0.7) -- (6,-0.7) -- (4,0.7) (1.5,-0.7) -- (6,0.7) -- (4.5,-0.7);
      \draw (0,0) ellipse (0.8cm and 1.2cm) (6,0) ellipse (0.8cm and 1.2cm);
      \draw (0,0) node {$A$} (6,0) node {$X$} (2,0.7) node[draw=black, circle, inner sep=1.2pt, fill=white] {$v_1$} (3,0.7) node[draw=black, circle, inner sep=1.2pt, fill=myorange!50!white] {$v_2$} (4,0.7) node[draw=black, circle, inner sep=1.2pt, fill=white] {$v_3$} (1.5,-0.7) node[draw=black, circle, inner sep=1.2pt, fill=white] {$v_5$} (2.5,-0.7) node[draw=black, circle, inner sep=1.2pt, fill=white] {$v_6$} (3.5,-0.7) node[draw=black, circle, inner sep=1.2pt, fill=white] {$v_7$} (4.5,-0.7) node[draw=black, circle, inner sep=1.2pt, fill=white] {$v_8$} (6,0.7) node[draw=black, circle, inner sep=1.2pt, fill=white] {$v_4$} (6,-0.7) node[draw=black, circle, inner sep=1.2pt, fill=myorange!50!white] {$v_9$};
      \draw (0,0.7) node[draw=black, circle, inner sep=1.2pt, fill=white] {} (0,-0.7) node[draw=black, circle, inner sep=1.2pt, fill=white] {};
    \end{tikzpicture}
  \end{center}
  \caption{The above sequence of token configurations (partial solution) has the following solution characteristic. 1\textsuperscript{st} (Top): $v_2=$ on, $v_3=$ before, $v_4=$ before, $v_5=$ at, $v_8=$ before, $v_9=$ before. 2\textsuperscript{nd} \& 3\textsuperscript{rd}: $v_2=$ on, $v_3=$ before, $v_4=$ before, $v_5=$ after, $v_8=$ before, $v_9=$ before. 4\textsuperscript{th} (Bottom): $v_2=$ on, $v_3=$ before, $v_4=$ before, $v_5=$ after, $v_8=$ after, $v_9=$ on. Since the 2\textsuperscript{nd} and 3\textsuperscript{rd} configuration characteristic are the same, the solution characteristic will consist of just three configuration characteristics. Note that this figure is only intended to illustrate the notion of solution characteristic and does not represent a legal partial solution, since $v_4$ becomes reachable when we jump from $v_5$ to $v_6$.}
  \label{fig:characteristic}
\end{figure} 

\begin{lemma}
   Let $\mathcal{C}'$ be a characteristic of a partial solution $\Pi'$ with respect to $G_{X'}$ that contains the removal of the token on the canonical path of $v$. Then, the characteristic $\mathcal{C}$, and the legality of the partial solution $\Pi$ with respect to $G_{X}$ obtained from $\Pi'$ by replacing the removal with a jump to $v$ depends only on, and can be computed from, $\mathcal{C}'$.
\end{lemma}

\begin{proof}
As we argued with the ``do nothing'' case, the only missing information is the state of $v$ itself. The state of $v$ can again be inferred from the state of its predecessor on the path (which is again in $X'$): it starts out at ``before'' and changes to ``on'' at the moment the state of its predecessor changes to ``after'' (this corresponds to the moment of the removal, which is replaced by a token jump to vertex $v$).

The characteristics again provide us with enough information that no $A$-$X$-path is created: similarly to the ``do nothing'' case, it suffices to check that no neighbor of $v$ gets state ``after'' before $v$ has gotten state ``on.''
\end{proof}

Note that doing this operation only makes sense if $v\in B$. Otherwise, it will yield a partial solution that, although valid, can never be extended to a complete solution to the problem since $v$ will never be able to move again. 

\begin{lemma}
   Let $\mathcal{C}'$ be a characteristic of a partial solution $\Pi'$ with respect to $G_{X'}$ that contains the removal of the token on the canonical path of $v$. Then, the characteristic $\mathcal{C}$ and the legality of a partial solution $\Pi$ with respect to $G_{X}$ obtained from $\Pi'$ by replacing the removal with a jump to $v$ and adding a subsequent removal of $v$ depends only on, and can be computed from, $\mathcal{C}'$.
\end{lemma}

\begin{proof}
Again, the only missing information is the state of $v$ itself, which we can compute as we did before, with the exception that at any point after $v$ has changed to ``on,'' we can insert a configuration characteristic into the sequence in which it changes to ``after'' (and accordingly, change its state in all configuration characteristics occuring later on in the sequence to ``after'' as well).

To ensure that the newly created configuration is legal, we must (in addition to the checks previously described) ensure that the state of $v$ does not change to ``after'' while the state of one or more of its neighbors is still ``before.''
\end{proof}

Note that in the characteristic of $\Pi'$, two sucessive configuration characteristics can correspond to a sequence of multiple moves resulting in token configurations having the same configuration characteristic. It does not matter at what exact point in this sequence we insert the jump to $v$, since this results in the same characteristic for $\Pi$. 

This allows us to compute the dynamic programming table after $v$ is introduced: we go over all solution characteristics in the previous table, try all ways of extending them as described above, which yields new solution characteristics together with the number of moves required to reach them (for accounting purposes, we can charge the jump to the addition of a removal). The same solution characteristic may be created in several ways (i.e., from applying different operations to different solution characteristics), and in this case, in the final table, we store the lowest number of moves required for any of them.

The optimality of the solution is guaranteed as follows: if $\Pi$ is an optimal partial solution for $G_X$ with characteristic $\mathcal{C}$, then $\Pi'$, which is obtained by replacing the jump to $v$ (if it is present) with a removal, is optimal for $G_{X'}$ with characteristic $\mathcal{C'}$. Otherwise, if some partial solution $\Pi''$ for $G_{X'}$ would also have characteristic $\mathcal{C'}$ but use fewer moves, it could be extended in the same way as $\Pi'$ can be extended to obtain a partial solution for $G_X$ with the same characteristic as $\Pi$, contradicting the optimality of $\Pi$. 

\paragraph{Forget.} The forget case can be implemented using a simple projection of the stored information to the neighborhood of $X$, merging configuration characteristics that occur repeatedly and taking the minimum among partial solutions that end up getting the same characteristic when projected. Since $X\subseteq X'$, no vertex of $X$ that was not yet reachable in $X'$ becomes reachable due to this projection, so all the partial solutions remain valid.

\paragraph{Final remarks.} Note that since a solution to the token jumping reconfiguration problem is also a partial solution for the final bag of the decomposition, we can read off the desired answer from the dynamic programming table for the final bag.

The table for the starting bag can be computed using brute force. In each table, there are at most $(3^{2k^2+2k}+1)!$ distinct solution characteristics, and we can compute the possible new characteristics that can be reached from each in time $\mathcal{O}(3^{2k^2+2k}! n^{\mathcal{O}(1)})$, since the path decomposition has $\mathcal{O}(n)$ bags, this results in an $\mathcal{O}((3^{2k^2+2k}+2)! n^{\mathcal{O}(1)})$-time algorithm, which is fixed-parameter tractable.
\end{proof}

\section{No polynomial kernel for parameter $k$}
We start with some definitions and a refresher on the machinery required for proving kernelization lower bounds. We use the framework made possible by the work of Drucker~\cite{DBLP:journals/siamcomp/Drucker15}, Bodlaender et al.~\cite{DBLP:journals/jcss/BodlaenderDFH09, DBLP:journals/siamdm/BodlaenderJK14}, Dell and van Melkebeek~\cite{DBLP:journals/jacm/DellM14}, and Fortnow and Santhanam~\cite{DBLP:journals/jcss/FortnowS11}.

\begin{definition}
An equivalence relation $\mathcal{R}$ on $\Sigma^*$ is called a \emph{polynomial equivalence
relation} if the following two conditions hold.
\begin{enumerate}
    \item There is an algorithm that given two strings $x, y \in \Sigma^*$ decides whether $x$ and $y$ belong to the same equivalence class in time polynomial in $|x| + |y|$.
    \item For any finite set $S \subseteq \Sigma^*$ the equivalence relation $\mathcal{R}$ partitions the elements of $S$ into at most $(\max_{x \in S}{|x|})^{\mathcal{O}(1)}$ classes.
\end{enumerate}
\end{definition}

\begin{definition}
Let $L \subseteq \Sigma^*$ be a set and let $Q \subseteq \Sigma^* \times \mathbb{N}$ be a parameterized problem. We say that $L$ \emph{and-cross-composes} into $Q$ if there is a polynomial equivalence relation $\mathcal{R}$ and an algorithm that, given $r$ strings $x_1, x_2, \dots, x_r$  
belonging to the same equivalence class of $\mathcal{R}$, computes an instance $(x^*, \kappa^*) \in \Sigma^* \times \mathbb{N}$ in time polynomial in $\sum_{i = 1}^{r}{|x_i|}$ such that:
\begin{enumerate}
    \item $(x^*, \kappa^*) \in Q$ if and only if $x_i \in L$ for every $1 \leq i \leq r$, and
    \item $\kappa^*$ is bounded by a polynomial in $\max_{i = 1}^{r}{|x_i|} + \log r$.
\end{enumerate}
\end{definition}

\begin{theorem}
If some set $L \subseteq \Sigma^*$ is $\textsf{NP}$-hard under Karp reductions and $L$ and-cross-composes into the parameterized problem $Q$, then there is no polynomial kernel for $Q$ unless $\textsf{NP} \subseteq \textsf{coNP/poly}$.
\end{theorem}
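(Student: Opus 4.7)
The plan is to prove the contrapositive: assuming that $Q$ admits a polynomial kernel of size $p(\kappa)$ for some polynomial $p$, I will build an efficient compression procedure for the $\textsf{NP}$-hard language $L$ that shrinks the AND of many instances down to size polynomial in $\max_i |x_i| + \log r$, and then invoke the distillation machinery of Drucker~\cite{DBLP:journals/siamcomp/Drucker15} (extending the earlier framework of Fortnow and Santhanam~\cite{DBLP:journals/jcss/FortnowS11}) to conclude that $\textsf{NP} \subseteq \textsf{coNP/poly}$.

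Concretely, given a sequence of instances $y_1, \ldots, y_r$ of $L$ with $n := \max_i |y_i|$, I would first apply the polynomial equivalence relation $\mathcal{R}$ promised by the and-cross-composition: this partitions the $y_i$ into at most $n^{O(1)}$ classes in polynomial time. Within each class $C_j$ I would run the and-cross-composition algorithm to obtain an instance $(x_j^*, \kappa_j^*)$ of $Q$ with $\kappa_j^* \leq q(n + \log r)$ for some polynomial $q$, such that $(x_j^*, \kappa_j^*) \in Q$ iff every $y_i \in C_j$ lies in $L$. Feeding each $(x_j^*, \kappa_j^*)$ through the assumed polynomial kernel produces an equivalent instance of size at most $p(q(n + \log r))$, and the concatenation of these polynomially many kernelized outputs serves as a single encoding of the overall AND $\bigwedge_i [y_i \in L]$ whose total length is $n^{O(1)} \cdot p(q(n + \log r))$, still polynomial in $n + \log r$.

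This is exactly an AND-distillation of $L$: a polynomial-time algorithm that compresses $r$ instances of an $\textsf{NP}$-hard language into a single string of size polynomial in $n + \log r$, preserving the AND of membership. By the combined results of Bodlaender, Jansen, and Kratsch~\cite{DBLP:journals/siamdm/BodlaenderJK14} and Drucker~\cite{DBLP:journals/siamcomp/Drucker15}, the existence of such a distillation for an $\textsf{NP}$-hard language forces $\textsf{NP} \subseteq \textsf{coNP/poly}$, contradicting our hypothesis and thereby ruling out the polynomial kernel. Strictly speaking one must first bundle the per-class outputs into an instance of a single language; this is a standard step, e.g.\ by prepending a self-delimiting description of the class index and padding.

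The main obstacle is not any deep combinatorial argument but rather the careful bookkeeping of the equivalence-class decomposition: one has to verify that the dependence on $r$ enters only through $\log r$, and in particular that the polynomial overhead from having up to $n^{O(1)}$ classes does not interact badly with the polynomial blow-ups from the cross-composition and the kernel. Since all three bounds are polynomial in $n + \log r$, composing them preserves this dependence, and the distillation hypothesis is met. Beyond this accounting, the proof is a direct instantiation of the Drucker / Bodlaender–Jansen–Kratsch framework, with no problem-specific content needed.
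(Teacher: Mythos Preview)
The paper does not prove this theorem at all: it is stated as a black-box tool from the kernelization lower-bound literature, with the proof delegated entirely to the cited works of Drucker~\cite{DBLP:journals/siamcomp/Drucker15}, Bodlaender et al.~\cite{DBLP:journals/jcss/BodlaenderDFH09,DBLP:journals/siamdm/BodlaenderJK14}, Dell and van Melkebeek~\cite{DBLP:journals/jacm/DellM14}, and Fortnow and Santhanam~\cite{DBLP:journals/jcss/FortnowS11}. There is therefore nothing in the paper to compare your proposal against.

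That said, your sketch is the standard argument from~\cite{DBLP:journals/siamdm/BodlaenderJK14}: partition the inputs via the polynomial equivalence relation, cross-compose within each class, kernelize each resulting instance, and observe that the resulting collection encodes the AND of all original memberships in size polynomial in $\max_i |x_i| + \log r$; then invoke Drucker's AND-distillation theorem. Your accounting is correct, including the observation that the number of equivalence classes is $n^{O(1)}$ and hence absorbed into the polynomial bound. The only place one might tighten the exposition is where you bundle the per-class kernels into a single instance: formally one defines an auxiliary language ``AND of at most $n^{O(1)}$ instances of $Q$'' and checks it lies in a class to which Drucker's theorem applies, but you already flag this as a routine step, which it is.
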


We are now ready to prove the main result of this section:

\begin{theorem}
There exists an and-cross-composition from \textsc{Vertex Cover} into
\textsc{TJ Minimum Separator Reconfiguration}, parameterized by the minimum size $k$ of an \stsep.  Consequently, when parameterized by $k$, \textsc{TJ Minimum Separator Reconfiguration} does not admit a polynomial kernel unless $\textsf{NP} \subseteq \textsf{coNP/poly}$.  
\end{theorem}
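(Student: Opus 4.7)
The plan is to adapt the \textsf{NP}-hardness construction from Lemmas~\ref{lem-hard-1}~and~\ref{lem-hard-2} so that $r$ \textsc{Vertex Cover} instances are encoded into a single \textsc{TJ Minimum Separator Reconfiguration} instance whose minimum separator size remains polynomial in the size of one individual instance. First I would invoke a natural polynomial equivalence relation on \textsc{Vertex Cover} inputs that groups instances $(G_i,\kappa_i)$ sharing the same vertex count $n$ and the same parameter value $\kappa$; instances of unequal sizes or parameters can be padded by isolated vertices or pendant edges to fall into the same class, so this normalization is without loss of generality.

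Next I would construct the composite graph $G^*$ by merging the individual reductions along shared canonical paths. Concretely, introduce $n$ canonical paths indexed by $j \in [n]$, each of the form
\[
s \to s^1_j \to v^1_j \to t^1_j \to s^2_j \to v^2_j \to t^2_j \to \cdots \to s^r_j \to v^r_j \to t^r_j \to t,
\]
partitioned into $r$ consecutive ``phases'', one per input instance, and for each edge $\{v_a,v_b\} \in E(G_i)$ add the cross edge $\{v^i_a,v^i_b\}$. Take $A = \{s^1_j : j \in [n]\}$ and $B = \{t^r_j : j \in [n]\}$, so that $k = |A| = n$, comfortably polynomial in $\max_i |x_i| + \log r$ as the cross-composition definition requires. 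Reapplying the Menger-style argument from Lemma~\ref{lem-hard-1} inside each phase shows that any reconfiguration sequence must, for each instance $i$ and each edge $\{v_a,v_b\} \in E(G_i)$, halt a token on $v^i_a$ or $v^i_b$ at some intermediate step; hence the set of phase-$i$ middle vertices ever occupied by a token forms a vertex cover of $G_i$. The forward direction (if every $G_i$ has a cover of size at most $\kappa$ then a short sequence exists) follows exactly as in Lemma~\ref{lem-hard-2}, by scheduling each token to pause at each vertex of the chosen cover of $G_i$ while it traverses phase $i$.

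The hardest part will be the reverse direction: with a naive global budget $\ell^* = n + r\kappa$, the reverse implication only yields the averaged condition $\sum_i \mathrm{mvc}(G_i) \leq r\kappa$, which is strictly weaker than the per-instance condition $\mathrm{mvc}(G_i) \leq \kappa$ for every $i$ demanded by the \textsc{AND} semantics. To rule out this ``trading'' of stops between instances, I plan to enrich each phase with a local bottleneck gadget---for instance, an auxiliary set of $O(\kappa)$ canonical paths carrying a matching (or clique) whose minimum cover is concentrated inside phase $i$---so that the number of stops required within each phase is controlled independently, and the tight global budget is forced to split exactly evenly across phases. The size of these gadgets will keep $k = n + O(\kappa)$, preserving the parameter bound, and once the per-instance characterization is established, invoking the cross-composition framework together with the \textsf{NP}-hardness of \textsc{Vertex Cover} yields the claimed no-polynomial-kernel lower bound unless $\textsf{NP} \subseteq \textsf{coNP/poly}$.
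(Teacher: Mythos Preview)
Your high-level plan---chain the $r$ instances along shared canonical paths and reuse the argument of Lemmas~\ref{lem-hard-1}--\ref{lem-hard-2} inside each phase---matches the paper's construction, and you correctly isolate the crux: a naive global budget only yields $\sum_i \mathrm{mvc}(G_i)\le r\kappa$, not the per-instance bound required by the \textsc{AND} semantics. However, your proposed fix is where the gap lies. A ``bottleneck gadget'' on $O(\kappa)$ auxiliary paths that forces some number of stops \emph{within} phase~$i$ does not by itself stop trading: even if every phase contains a structure that needs $\kappa$ stops, nothing in your construction prevents token~$j$ from racing through phase~$1$ (spending nothing on $G_1$) and then spending those saved jumps inside phase~$2$, as long as the gadget stops in each phase are made by \emph{some} tokens. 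You would still need a mechanism that forces \emph{all} tokens to reach a common checkpoint between phases before any token enters the next phase, and your proposal does not provide one.

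The paper closes this gap with two concrete devices you are missing. First, it normalises the inputs by assuming each $(G_i,\kappa)$ arises from the textbook \textsc{3-SAT}$\to$\textsc{Vertex Cover} reduction, so that $\mathrm{mvc}(G_i)\ge\kappa$ for every~$i$; thus no phase can ever be traversed in fewer than $k+\kappa$ jumps, eliminating the possibility of one instance ``donating'' slack to another. Second, it adds a \emph{single} extra canonical path $P_{k}$ (so $k=\mu+1$) and, between every pair $G_i,G_{i+1}$, a synchronisation gadget: the $\mu$ gadget vertices on $P_1,\dots,P_\mu$ are each joined to three vertices on $P_k$, making them unskippable, and the wiring forces the token on $P_k$ to be the last to reach its gadget vertex. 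These two facts together imply that all $k$ tokens must sit simultaneously on the gadget before any can enter phase~$i+1$. With the budget $\ell=r(k+\kappa)$ and the lower bound $\mathrm{mvc}(G_i)\ge\kappa$, each phase must then use \emph{exactly} $k+\kappa$ jumps, and the per-instance conclusion follows. Your bottleneck idea, as stated, supplies neither the per-instance lower bound nor the synchronisation, so the reverse direction of the composition would not go through.
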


\begin{proof}
By choosing an appropriate polynomial
equivalence relation $\mathcal{R}$, we may assume that we are given a family of $r$ \textsc{Vertex Cover} instances $(G_i, \kappa_i)$, where $|V(G_i)| = \mu$ and $\kappa_i = \kappa$ for all $i$. Moreover, we assume, without loss of generality, that each instance of \textsc{Vertex Cover} was obtained by reducing an instance of \textsc{$3$-SAT} having $N$ variables and $M$ clauses (using the standard reduction that converts variables to edges and clauses to triangles). This implies that $\kappa = N + 2M$ and no graph $G_i$ admits a vertex cover of size less than $\kappa$. 

We now proceed to the construction of our \textsc{Minimum Separator Reconfiguration} instance $(H,s,t,A,B)$, where $k = \mu + 1$ will be the size of a minimum separator. The graph $H$ consists of $k$ pairwise internally vertex-disjoint paths from $s$ to $t$. The paths, i.e., $P_1, P_2, \dots, P_\mu, P_k$, consist of $4r + 1$ vertices each (including $s$ and $t$) and correspond to the canonical paths. Recall that we let $u_{i,1}, \dots, u_{i, L(i)}$ denote the vertices on the canonical path $P_i$ in the order in which they appear on it, with $u_{i,1} = s$ and $u_{i, L(i)} = t$. 

The $k$ vertices $\{u_{1,2}, u_{2,2}, \ldots, u_{k,2}\}$ (the vertices adjacent to $s$) correspond to the starting configuration $A$ and the vertices $\{u_{1,L(1) - 1}, u_{2,L(2) - 1}, \ldots, u_{k,L(k) - 1}\}$ (the $k$ vertices adjacent to $t$) correspond to the target configuration $B$. The $\mu$ vertices that come after $A$ (excluding the vertex $u_{k,3}$ in $P_k$) correspond to the vertices of $G_1$ and we use them to embed the edges of $G_1$. Note that the vertex $u_{k,3}$ will remain a vertex of degree two that joins its two neighbors on the canonical path $P_k$. We now proceed with the description of the synchronization gadget that will appear between every pair of embeddings of graphs $G_i$ and $G_{i+1}$. The first synchronization gadget consists of vertices $\{u_{1,5}, u_{2,5}, \ldots, u_{\mu,5}\} \cup \{u_{k,2}, u_{k,4}, u_{k,6}\}$. We add an edge from every vertex in $\{u_{1,5}, u_{2,5}, \ldots, u_{\mu,5}\}$ to every vertex in $\{u_{k,2}, u_{k,4}, u_{k,6}\}$. Note that the vertices $u_{1,4}$ to $u_{\mu,4}$ and $u_{1,6}$ to $u_{\mu,6}$ will remain vertices of degree two. We let $q = 7$. 

We proceed with the construction as follows. For each graph $G_i$, we embed the edges of the graph over the vertices $\{u_{1,q}, u_{2,q}, \ldots, u_{\mu,q}\}$. We then add a synchronization gadget (except when $i = r$) using vertices $\{u_{1,q+2}, \ldots, u_{\mu,q+2}\} \cup \{u_{k,q-1}, u_{k,q+1}, u_{k,q+3}\}$. We add edges from every vertex in $\{u_{1,q+2}, \ldots, u_{\mu,q+2}\}$ to every vertex in $\{u_{k,q-1}, u_{k,q+1}, u_{k,q+3}\}$. We then set $q = q + 4$ and proceed to the next graph $G_{i+1}$. For the final synchronization gadget which preceeds $G_r$, we add edge between the vertex $u_{k,q+3}$ and every vertex in $B \setminus V(P_k)$. This completes the construction of $H$. We ask for a reconfiguration sequence of length $\ell = r(k + \kappa)$ (see Figure~\ref{fig:cross} for more details). 

\begin{figure}[h!]
  \begin{center}
      \pgfdeclarelayer{background}
      \pgfsetlayers{background,main}
              
    \begin{tikzpicture}[decoration={snake,pre length=1mm,post length=1mm,segment length=4mm}]
      \node[draw=black,fill=white,circle,inner sep = 2pt] (s) at (0,0) {};
      \node[above=.1cm of s] {$s$};
      \node[draw=black,fill=white,circle,inner sep = 2pt] (t) at (12,0) {};
      \node[above=.1cm of t] {$t$};

      \foreach \y in {2,1,0,-1,-2}
         {\foreach \x in {1,2,...,10}
            {\draw (\x,\y) -- (\x+1,\y);
            }
          \draw (s) -- (1,\y) (11,\y) -- (t);
         }

      \foreach \y in {2,1,0,-1}
      \draw[mygreen!50!white, thick] (1,-2) edge[bend left=5] (4,\y) (4,\y) edge[bend right=5] (3,-2) (4,\y) edge[bend left=5] (5,-2) (5,-2) edge[bend left=5] (8,\y) (8,\y) edge[bend right=5] (7,-2) (8,\y) edge[bend left=5] (9,-2) (9,-2) edge[bend left=5] (11,\y);
      
      \foreach \y in {2,1,0,-1,-2}
      { \foreach \x in {1,2,...,11}
      \draw (\x,\y) node[draw=black,fill=white,circle,inner sep = 2pt] (u\x\y) {};}
      
      \draw[thick, myorange!70!white] (1,0) ellipse (0.3cm and 2.5cm) (11,0) ellipse (0.3cm and 2.5cm);
      \draw[thick, myblue!50!white] (2,0.5) ellipse (0.3cm and 2cm) (6,0.5) ellipse (0.3cm and 2cm) (10,0.5) ellipse (0.3cm and 2cm);
      
      \draw (1,-2) node[circle, fill=mygreen!50!white, inner sep=2pt] {} (3,-2) node[circle, fill=mygreen!50!white, inner sep=2pt] {} (5,-2) node[circle, fill=mygreen!50!white, inner sep=2pt] {} (7,-2) node[circle, fill=mygreen!50!white, inner sep=2pt] {} (9,-2) node[circle, fill=mygreen!50!white, inner sep=2pt] {};
      
      \foreach \y in {2,1,0,-1}
      \draw (4,\y) node[circle, fill=mygreen!50!white, inner sep=2pt] {} (8,\y) node[circle, fill=mygreen!50!white, inner sep=2pt] {};
    \end{tikzpicture}
  \end{center}
  \caption{An overview of the and-cross-composition with $r = 3$ and $k=5$. The orange blobs are $A$ and $B$, the blue blobs represent the graphs $G_i$, and the green vertices and edges represent the synchronization gadgets.}
  \label{fig:cross}
\end{figure}
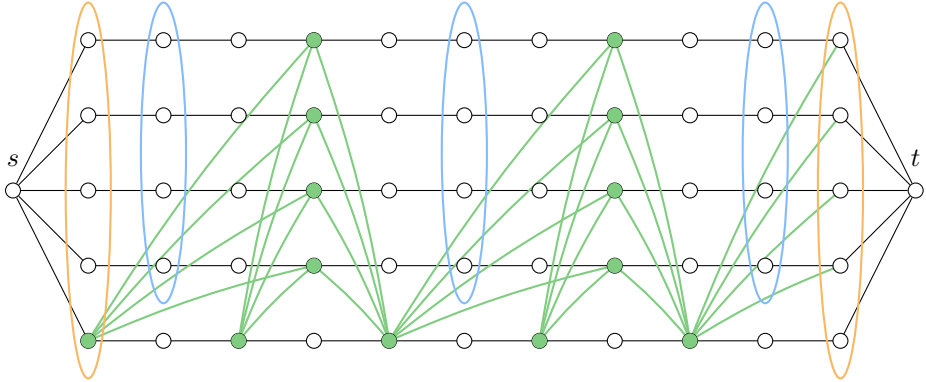 

The intuition behind the construction is similar to the construction showing \textsf{NP}-hardness in Theorem~\ref{thm-jumping-nphard}. The extra synchronization gadgets are required to guarantee the order in which the tokens will cross the vertices corresponding to each graph $G_i$. 

We now prove that we have a reconfiguration sequence of (exactly) $\ell$ jumps from $A$ to $B$ in $H$ if and only if each $G_i$ has a vertex cover $C_i$ of size (exactly) $\kappa$, $1 \leq i \leq r$. 

Assume that each $G_i$ has a vertex cover $C_i$ of size $\kappa$. Then, by Lemma~\ref{lem-hard-2}, we know that we can jump the tokens (over any $G_i$) from the preceeding synchronization gadget (or $A$) to the next synchronization gadget (or $B$) using $k + \kappa$ jumps (with the token on $P_k$ always jumping last to vertex $u_{k,q+3}$, for some $q$). Since we have exactly $r$ graphs to jump over it follows that we can reconfigure $A$ to $B$ via a sequence of $\ell = r(k + \kappa)$ token jumps. 

For the reverse direction, we know, from Lemma~\ref{lem-hard-1}, that whenever we are able to jump all tokens over a graph $G_i$ using at most $k + \kappa$ jumps, $G_i$ must have a vertex cover of size at most $\kappa$. Assume that before any token can jump to either a vertex belonging to some $G_i$ or to some vertex beyond $G_i$, it must be the case that all the tokens have reached a vertex beyond $G_{i-1}$ but before $G_i$. In other words, all the tokens must be on the synchronization gadget between $G_{i-1}$ and $G_i$ before any token can jump onto or beyond $G_i$. Combining this assumption with the fact that no graph $G_i$ can have a vertex cover of size less than $\kappa$, we know that exactly $k + \kappa$ jumps can be used to move the tokens from the gadget preceeding $G_i$ (or from $A$) to the gadget succeeding $G_i$ (or to $B$). Otherwise we need more than $\ell = r(k + \kappa)$ jumps to jump over all $r$ graphs.

Hence, it remains to show that the synchronization gadgets behave as required and force all the tokens to reach the same gadget before jumping onto or over the next graph $G_{i+1}$. First, observe that a synchronization gadget contains $k$ vertices each on a different canonical path and each having two neighbors on some canonical path other than its own. This implies that each such vertex must be occupied by a token at some point in any reconfiguration sequence, i.e., these are unskippable vertices (Lemma~\ref{lemma:unskip}). In addition, given the edges between $\{u_{1,q+2}, \ldots, u_{\mu,q+2}\}$ and $\{u_{k,q-1}\}$ we know that the token on $P_k$ must be the last to reach its vertex $u_{k,q+3}$ on the synchronization gadget; otherwise one of the edges between $\{u_{1,q+2}, \ldots, u_{\mu,q+2}\}$ and $\{u_{k,q-1}\}$ can be used as a crossing edge to obtain an \stpath. So, assume some token on $P_i$ ($i \neq k$) jumps forward beyond its corresponding vertex in the synchronization gadget before the token on $P_k$ gets to $u_{k,q+3}$. We can now construct a path from $t$ to $s$ by simply following $P_k$ to $u_{k,q+3}$, then $u_{k,q+3}$ to $u_{i,q+2}$, and then following the path $P_i$ to $s$; a contradiction. Putting it all together, we know that the tokens on $P_1, P_2, \dots, P_\mu$ will each jump to the unskippable vertex of a synchronization gadget (in some order and staying on their respective canonical paths) before the token on $P_k$ ever reaches its unskippable vertex in the gadget. Moreover, none of the tokens in the first $\mu$ canonical paths can further jump either backward or forward before the token on $P_k$ reaches its unskippable vertex on the gadget. Hence, only when all the tokens have reached a certain synchronization gadget,  we can start jumping over the next graph and exactly $k + \kappa$ jumps can be allocated to move the tokens from the synchronization gadget preceeding $G_i$ to the synchronization gadget succeeding $G_i$ (the same is true for jumping from $A$ to the first synchronization gadget and for jumping from the last synchronization gadget to $B$). 
\end{proof}

\section{Polynomial kernel for parameter $\ell$}
In this section, we consider the \textsc{minimum separator reconfiguration} problem parameterized by the number $\ell$ of allowed token jumps. Formally, given a graph $G$, minimum \stseps $A$ and $B$ of size $k$, and an integer $\ell$, the goal is to determine whether there exists a sequence of at most $\ell$ token jumps transforming $A$ to $B$. 

Recall that, by Lemma~\ref{lemma:preprocesspaths}, we can preprocess the graph so that all vertices are on the canonical paths and the vertices in $A$ and $B$ are adjacent to (at least one of) $s$ or $t$. Similarly, by Lemma~\ref{lemma:degthree}, each vertex in $G$ will have degree at least $3$ and at most $2k$ (with at
most two neighbors on each canonical path) and, by Lemma~\ref{lemma:disjoint}, we know that $A \cap B = \emptyset$. 
We refer to an instance satisfying all of the above as a reduced instance.

\begin{lemma}\label{lem-ell-1}
In a reduced yes-instance where $\ell$ is the parameter, all the following properties must be satisfied:
\begin{itemize}
    \item $A \cap B = \emptyset$ and $|A| = |B| \leq \ell$;
    \item all vertices are on the (at most $\ell$) canonical paths;
    \item vertices in $A$ and $B$ are adjacent to (at least one of) $s$ or $t$; and
    \item each vertex in $G$ has degree at least $3$, degree at most $2\ell$, and at most two neighbors on each canonical path.
\end{itemize} 
\end{lemma}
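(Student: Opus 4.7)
The plan is to observe that essentially all of these properties are already guaranteed by the preprocessing (Lemmas~\ref{lemma:preprocesspaths}, \ref{lemma:degthree}, and~\ref{lemma:disjoint}); the only genuinely new observation required is the inequality $k \le \ell$.

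First, by Lemma~\ref{lemma:disjoint} we may assume $A \cap B = \emptyset$. Since $A$ and $B$ are minimum \stseps of the same size $k$, every token must move at least once in any reconfiguration sequence from $A$ to $B$ (the token initially on a vertex $v \in A \setminus B$ cannot remain on $v$, as $v \notin B$). Hence any feasible reconfiguration sequence uses at least $k$ jumps, and so in a yes-instance with parameter $\ell$ we must have $|A| = |B| = k \le \ell$. In particular the number of canonical paths, which equals $k$ by Menger's theorem, is at most $\ell$.

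Next, by Lemma~\ref{lemma:preprocesspaths} we may assume every vertex of $G$ lies on one of the canonical paths and that every vertex in $A \cup B$ is adjacent to $s$ or $t$. Finally, by Lemma~\ref{lemma:degthree} (applied whenever $k \ge 3$; the case $k \le 2$ is handled in polynomial time directly, as noted before that lemma) every vertex has degree at least $3$ and at most $2k$, with at most two neighbors on each canonical path. Combining with $k \le \ell$ gives the degree upper bound of $2\ell$ stated in the lemma.

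There is no real obstacle here: the lemma is essentially a bookkeeping statement collecting the consequences of the earlier preprocessing under the new parameterization, with the single substantive ingredient being the one-line argument that $k \le \ell$ because each token must jump at least once.
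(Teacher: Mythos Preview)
Your proof is correct and follows essentially the same approach as the paper: the paper's proof is even terser, simply noting that if $|A|=|B|>\ell$ then at least $\ell+1$ jumps are needed (so we have a no-instance), and that all remaining properties follow from Lemmas~\ref{lemma:preprocesspaths}, \ref{lemma:degthree}, and~\ref{lemma:disjoint}. Your version just spells out the same reasoning in slightly more detail.
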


\begin{proof}
The lemma follows immediately from the fact that if after obtaining a reduced instance we have $|A| = |B| > \ell$ then we have a no-instance; as at least $\ell + 1$ jumps are needed. Consequently, the minimum separator size will be at most $\ell$ and all the remaining properties follow from Lemma~\ref{lemma:preprocesspaths}, Lemma~\ref{lemma:degthree}, and Lemma~\ref{lemma:disjoint}.  
\end{proof}

\begin{lemma}\label{lem-ell-2}
Assume that in a reduced instance one of the canonical paths contains more than $4(\ell + 1)^2 + 4$ vertices. Then, the instance is a no-instance. 
\end{lemma}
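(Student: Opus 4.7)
The plan is to prove the contrapositive: in any yes-instance every canonical path $P_i$ satisfies $L(i)\le O(\ell)$, which is certainly well below $4(\ell+1)^2+4$. Concretely, I will show $L(i)\le 4\ell$ whenever a reconfiguration sequence of at most $\ell$ jumps exists, which is more than enough to imply the lemma.

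First, assume such a sequence $\mathcal{S}$ exists and, by Lemma~\ref{lemma:only_forward}, that all its jumps are forward. For each canonical path $P_j$ let $\ell_j$ denote the number of jumps of $\mathcal{S}$ involving the token on $P_j$, so $\sum_j \ell_j\le \ell$; moreover $\ell_j\ge 1$ because $A\cap B=\emptyset$ (Lemma~\ref{lemma:disjoint}). Let $T_j\subseteq V(P_j)$ collect the vertices ever occupied by the token on $P_j$; the forward property gives $|T_j|\le \ell_j+1$, and $A\cap V(P_j),\,B\cap V(P_j)\subseteq T_j$ automatically.

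Now fix $i$ and let $U_i$ be the set of internal vertices of $P_i$ that avoid every separator of $\mathcal{S}$, so $|U_i|=L(i)-2-|T_i|$. The key claim is that each $u\in U_i$ has an off-path neighbor $v$ with $v\in \bigcup_{j\ne i}T_j$. Indeed, $u\notin A\cup B\cup\{s,t\}$; if $u$ had two neighbors on a single canonical path other than $P_i$, the first part of Lemma~\ref{lemma:unskip} would make $u$ unskippable, contradicting $u\notin T_i$. Combined with the degree bounds of Lemma~\ref{lemma:degthree} ($\deg(u)\ge 3$ and at most two neighbors on $P_i$), this forces $u$ to have an off-path neighbor $v$ on some $P_j$, $j\ne i$. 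Either $v\in A\cup B$, in which case $v\in T_j$ for trivial reasons, or $v\notin A\cup B\cup\{s,t\}$, in which case the second part of Lemma~\ref{lemma:unskip} applies to the unskippable pair $\{u,v\}$ and, since $u\notin T_i$, forces $v\in T_j$.

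Picking one such $v(u)$ for each $u\in U_i$ defines a map $U_i\to\bigcup_{j\ne i}T_j$ that is at most $2$-to-$1$, because by Lemma~\ref{lemma:degthree} every vertex has at most two neighbors on $P_i$. Hence $|U_i|\le 2\sum_{j\ne i}|T_j|\le 2\bigl((\ell-\ell_i)+(k-1)\bigr)$, and together with $|T_i|\le \ell_i+1$ and $k\le \ell$ (Lemma~\ref{lem-ell-1}) this yields $L(i)\le 1+2\ell+2k-\ell_i\le 4\ell$, as desired. I expect the only delicate point to be the key claim: Lemma~\ref{lemma:unskip} is stated only for vertices outside $A\cup B\cup\{s,t\}$, so I must separately remark that off-path neighbors in $A\cup B$ belong to $T_j$ at the start or end of $\mathcal{S}$, and that the preprocessing of Lemmas~\ref{lemma:preprocesspaths} and~\ref{lemma:degthree} guarantees no off-path neighbor of $u$ is $s$ or $t$ (since the chordlessness of the canonical paths forces $s$ and $t$ to be adjacent only to the endpoint vertices $u_{j,l_j}$ and $u_{j,r_j}$, which already lie in $A\cup B$). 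The remaining bookkeeping is routine.
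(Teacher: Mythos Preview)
Your proof is correct and in fact establishes a strictly stronger bound ($L(i)\le 4\ell$) than the paper claims. The approaches differ substantially. The paper argues locally: if some $P_i$ is too long, then by pigeonhole the token on $P_i$ must at some step jump over a block $Q$ of $\ge 2\ell+1$ consecutive vertices; looking at the separators $S_i, S_{i+1}$ immediately before and after that jump, every off-path neighbor of a vertex in $Q$ must lie in $S_{i+1}\setminus V(P_i)$ (size $\le k-1\le \ell-1$), and pigeonhole then forces three vertices of $Q$ to share an off-path neighbor, contradicting Lemma~\ref{lemma:degthree}. Your argument is global: you account for \emph{all} skipped vertices on $P_i$ simultaneously via the unskippable-pair half of Lemma~\ref{lemma:unskip}, mapping them $2$-to-$1$ into the set $\bigcup_{j\ne i}T_j$ of all positions ever visited on other paths during the whole sequence. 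This not only avoids the quadratic loss from the paper's two nested pigeonholes (one to find the long jump, one to find the triple), it also directly yields the $\mathcal{O}(\ell^2)$ total-vertex bound that the paper obtains only later, in the proof of Theorem~\ref{thm:kernelell}, by a separate refinement. The one place you were right to be careful---ruling out $s,t$ as off-path neighbors of $u\in U_i$---is handled exactly as you say: chordlessness of the canonical paths in a reduced instance forces $N(s)\cup N(t)\subseteq A\cup B$.
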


\begin{proof}
Let $P$ denote such a canonical path. We claim that at least $\ell + 1$ jumps are required for the token on $P$. We assume otherwise, i.e, that $\ell$ jumps or fewer are enough, and work towards a contradiction. 

First, recall that if a vertex $v$ on a canonical path is adjacent to two distinct vertices on another canonical path, then $v$ can never be jumped over, i.e., $v$ is unskippable (Lemma~\ref{lemma:unskip}).  
We decompose $P$ into $\ell + 1$ subpaths each consisting of at least $4\ell + 4$ vertices (excluding $s$, $t$, and the initial and target vertices of $A$ and $B$). For the token on $P$ to reach its final position in at most $\ell$ jumps, it must (at least once) jump over $2\ell + 1$ (consecutive) vertices of $P$ or more (landing on the vertex $2\ell + 2$ away or more). Let us denote those vertices that are jumped over by $Q$. 
Moreover, let $S_i$ denote the \stsep preceeding the jump and let $S_{i+1}$ denote the resulting \stsep after the jump. 

If $Q$ contains a vertex having two distinct neighbors on another canonical path, then the vertex is unskippable and we get a contradiction. Hence, every vertex $v$ of $Q$ (which has degree $3$ or more) can have at most one neighbor on every canonical path $P' \neq P$ (and we know that $v$ must have at least one neighbor not in $P$). Since $S_{i+1}$ is an \stsep, every vertex $v \in Q$ can only be adjacent to vertices in $S_{i+1} \setminus V(P)$; otherwise an \stpath can be easily constructed, contradicting the fact that $S_{i+1}$ is an \stsep. 
Now, given that $|Q| \geq 2\ell + 1$, we know that there exists at least $3$ distinct vertices in $Q$ all having the same neighbor in $S_{i+1} \setminus V(P)$. This contradicts the fact that after our reductions each vertex in $G$ can have at most two distinct neighbors on any canonical path. 
\end{proof}

\begin{theorem}\label{thm:kernelell}
The optimization version of {\sc TJ Minimum Separator Reconfiguration} admits a kernel with $\mathcal{O}(\ell^2)$ vertices when parameterized by $\ell$, the length of a reconfiguration sequence.
\end{theorem}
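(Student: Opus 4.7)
The plan is to combine the preprocessing rules of Lemmas~\ref{lemma:preprocesspaths}, \ref{lemma:degthree}, and \ref{lemma:disjoint} with a refined global counting argument that tightens the per-path estimate of Lemma~\ref{lem-ell-2}. First, I would apply all three preprocessing reductions in polynomial time to obtain an equivalent reduced instance in which every vertex lies on a canonical path $P_i$, $A \cap B = \emptyset$, each vertex has degree between $3$ and $2k \leq 2\ell$ with at most two neighbors on any single canonical path, and the vertices of $A \cup B$ are adjacent to $s$ or $t$. If $k > \ell$ at this stage, I immediately output a trivial no-instance, since each of the $k$ disjoint tokens requires at least one jump. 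Otherwise, Corollary~\ref{cor:bounded_region} lets me further truncate each $P_i$ to the subpath from $u_{i,l_i}$ to $u_{i,r_i}$ without changing the answer.

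The crucial step is to show that any yes-instance surviving the preprocessing has only $\mathcal{O}(\ell^2)$ vertices. Fix a shortest forward-only reconfiguration sequence $A = S_1, \ldots, S_{\ell'+1} = B$ (which exists by Lemma~\ref{lemma:only_forward}) of length $\ell' \leq \ell$, and let $j_i$ be the number of jumps made on $P_i$, so that $\sum_i j_i = \ell' \leq \ell$. I claim that each individual jump advances its token by at most $2\ell + 1$ positions on its canonical path. This is essentially the pigeonhole argument already present inside the proof of Lemma~\ref{lem-ell-2}: if a jump on $P_i$ skipped a set $Q$ of size at least $2\ell + 1$, then by Lemma~\ref{lemma:unskip} every $v \in Q$ must have a neighbor in $S_{i+1} \setminus V(P_i)$, a set of size at most $k-1 \leq \ell - 1$; the pigeonhole principle would then produce a vertex of $S_{i+1}$ with at least $3$ neighbors in $Q \subseteq V(P_i)$, contradicting Lemma~\ref{lemma:degthree}.

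Consequently $r_i - l_i \leq j_i(2\ell + 1)$, and summing over $i$ gives $\sum_i(r_i - l_i + 1) \leq (2\ell + 1)\ell + k = \mathcal{O}(\ell^2)$ vertices on the truncated paths, to which I add $s$ and $t$. The kernelization algorithm thus outputs the preprocessed-and-truncated instance if its vertex count meets the resulting $\mathcal{O}(\ell^2)$ bound, and otherwise returns a trivial no-instance, whose correctness is the contrapositive of the counting argument above.

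The main obstacle is really just the extraction of a \emph{per-jump} advancement bound of $2\ell + 1$ from what was originally formulated as a \emph{per-path} bound of $\mathcal{O}(\ell^2)$ vertices in Lemma~\ref{lem-ell-2}; once this sharpening is in hand, the arithmetic collapsing $\sum_i j_i \leq \ell$ individual jumps into a single $\mathcal{O}(\ell^2)$ global total is entirely routine.
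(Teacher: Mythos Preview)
Your approach is correct and in fact streamlines the paper's argument. Both proofs rest on the same core observation—already implicit in Lemma~\ref{lem-ell-2}—that no single forward jump can skip $2\ell+1$ or more consecutive vertices, because each skipped vertex of degree at least~$3$ would be forced to have a neighbor in the $(k-1)$-element set $S_{i+1}\setminus V(P_i)$ and pigeonhole would produce a vertex with three neighbors on one canonical path, contradicting Lemma~\ref{lemma:degthree}. Where you differ is in the aggregation: the paper partitions the canonical paths into short and long paths, chops the long ones into blocks of $4\ell+4$ vertices, and argues that $\ell+1$ such blocks force $\ell+1$ jumps; it then adds a separate maximum-matching argument on crossing edges to obtain explicit constants. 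You instead convert the block-level observation into a clean per-jump advancement bound of $2\ell+1$ and sum over the at most $\ell$ jumps to get $\sum_i(r_i-l_i)\le(2\ell+1)\ell$ directly. Your route is shorter and avoids both the long/short case split and the matching argument; the paper's route pays for the extra work with slightly smaller explicit constants.

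One small point of care: your appeal to Lemma~\ref{lemma:unskip} to conclude that every $v\in Q$ has a neighbor in $S_{i+1}\setminus V(P_i)$ is a shorthand. That lemma speaks about unskippability over the entire sequence, whereas what you actually need is the local two-separator argument (using both $S_i$ and $S_{i+1}$, between which only the token on $P_i$ moves) that appears inside its proof and inside the proof of Lemma~\ref{lem-ell-2}. The reasoning is the same, but in a final write-up you should either invoke the proof of Lemma~\ref{lemma:unskip} explicitly or spell out the two-case analysis on the position of the token on $P_j$ relative to $u$.
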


\begin{proof}
First, note that Lemmas~\ref{lem-ell-1} and~\ref{lem-ell-2} imply a kernel with $\mathcal{O}(\ell^3)$ vertices; a yes-instance consists of at most $\ell$ canonical paths each having $\mathcal{O}(\ell^2)$ vertices. 

To obtain the improved bound, we have to refine our analysis slightly and strengthen the result of Lemma~\ref{lem-ell-2}. We do so as follows. We partition the canonical paths into long paths and short paths, where a path is short if it has at most $4\ell + 4$ vertices and it is long otherwise. It is not hard to see that the number of vertices belonging to short paths is in  $\mathcal{O}(\ell^2)$. 

Now, assume that the total number of vertices on all long canonical paths is more than $2(2\ell + 2)^2 + 2\ell + 2$. Then, we can decompose the paths into subpaths each consisting of $4\ell + 4$ vertices except possibly the last subpath of each long path (we  exclude $s$, $t$, and the initial and target vertices of $A$ and $B$). The number of subpaths with $4\ell + 4$ vertices is therefore at least $\ell + 1$ and at least $\ell + 1$ jumps are required for the tokens on long paths to reach their destinations, which implies a no-instance (the proof being identical to that of Lemma~\ref{lem-ell-2}). Therefore, the total number of vertices belonging to long paths is also in $\mathcal{O}(\ell^2)$, which concludes the proof. 

We can further improve the bounds on the kernel size (for both the number of vertices and the number of edges), by combining the use of unskippable vertices with the use of unskippable edges. Let $H$ be the graph obtained from $G$ by deleting all the edges belonging to the canonical paths, and let $U_G$ denote the set of unskippable vertices in $V(G) \setminus (A \cup B \cup \{s,t\})$. Now, let $M$ be a maximum matching in the graph $\Gamma = H - (U_G \cup A \cup B)$. 
If $|M| > \ell$ we have a trivial no-instance, since at least one endpoint of each edge in $M$ must be occupied by a token at some point, i.e., every edge in $M$ is unskippable. Hence, every vertex of $\Gamma$ having at least one neighbor is either in $V(M)$ or is adjacent to some vertex of $V(M)$ (isolated vertices in $\Gamma$ correspond to those vertices that only have neighbors in $U_G \cup A \cup B$). Since $|U_G \cup A \cup B| \leq 3\ell$ and the maximum degree is $2\ell$, we get, $|V(\Gamma)| \leq |V(M)| + \sum_{v \in V(M)} |N_\Gamma(v)| + 6\ell^2 \leq 2\ell + \sum_{v \in V(M)} (\ell-1) + 6\ell^2 \leq 2\ell + 2\ell(\ell - 1) + 6\ell^2 \leq 2\ell + 8\ell^2$. Since $V(\Gamma) = V(H) \setminus (U_G \cup A \cup B)$ and $|U_G \cup A \cup B| \leq 3\ell$, we have $|V(G)| = |V(H)| = |U_G \cup A \cup B| + |V(\Gamma)| \leq 3\ell + 2\ell + 8\ell^2 = 5\ell + 8\ell^2$. In terms of edges, each edge of $\Gamma$ is incident to a vertex of $M$, so we have that $|E(\Gamma)| \leq |E(M)| + \sum_{v \in V(M)} |N_\Gamma(v)| \leq \ell + 2\ell(\ell - 1)$ and $|E(H)| = E(\Gamma) + \sum_{v \in U_G \cup A \cup B} |N_G(v)| \leq \ell + 2\ell(\ell - 1) + 6\ell^2 \leq 8\ell^2 + \ell$. Hence, since $|V(G)| \leq 2\ell^2 + 3\ell$, we get $|E(G)| \leq |E(H)| + 2(2\ell^2 + 3\ell) \leq 8\ell^2 + \ell + 2(2\ell^2 + 3\ell) \leq 12\ell^2 + 7\ell$.
\end{proof}


\section{Concluding Remarks}

We studied the minimum \stsep reconfiguration problem through several lenses.
First, we considered the token sliding and token jumping reconfiguration models, showing that the reachability question is answerable in polynomial time in both cases.
Afterwards, we considered the task of finding a shortest reconfiguration sequence; we proved that it is easy under the first model but \NP-complete under the second. To tackle this hardness, we studied the parameterized complexity of the token jumping version for the natural parameterizations $k$ (the number of tokens) and $\ell$ (the length of the sequence).
In this context, we designed an \FPT\ algorithm for the $k$ parameterization, a quadratic kernel when parameterized by $\ell$, and showed that no polynomial kernel exists for $k$ unless $\NP \subseteq \coNP/\textsf{poly}$.

In terms of future work on minimum \stsep reconfiguration itself, we are interested in understanding shortest token jumping sequences for some graph classes, in particular for planar graphs.
Other possibilities include the study of structural parameterizations, such as treewidth, feedback edge set, and vertex deletion distance metrics (e.g. distance to cluster, clique, etc). In a different spirit, studying the connectivity of the minimum \stsep reconfiguration graph might yield different insights than the ones presented in this paper.
Beyond minimum separators, the work of Gomes, Nogueira, and dos Santos~\cite{separator-recon} investigated arbitrary \stsep reconfiguration, but no research has yet been done on bounded size separators.
As such, we believe work on the complexity of reconfiguration of minimum + $r$ \stseps might be of independent interest.

\printbibliography

\newpage
\appendix
\section*{Appendices}
\section{$\textsf{W[1]}$-hardness reduction for parameter $\ell$}\label{appendix-hardness}

\begin{theorem}
The \textsc{Vertex Cover Reconfiguration} problem parameterized by $\ell$, the length of a reconfiguration sequence, is $\textsf{W[1]}$-hard in the token jumping model, even when restricted to bipartite graphs. 
\end{theorem}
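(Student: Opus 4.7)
The strategy is to adapt the $\textsf{W[1]}$-hardness reduction of Mouawad et al.~\cite{DBLP:journals/algorithms/MouawadNRS18} for \textsc{Vertex Cover Reconfiguration} from the token addition/removal (TAR) model to the token jumping (TJ) model, preserving bipartiteness and FPT-equivalence in the parameter $\ell$. Recall that in TAR each reconfiguration step adds or removes one token, subject to the cover's size remaining in a fixed interval $[k_{\min}, k_{\max}]$, whereas in TJ each step swaps one token for another so that the cover size stays fixed. The essential idea is to \emph{pad} each TJ configuration with freely movable dummy tokens parked on isolated vertices, so that a TAR addition or removal is simulated by a single TJ jump to or from the dummy pool.

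First, I would take a bipartite TAR instance $(G, C_1, C_2, k_{\min}, k_{\max}, \ell)$ produced by the existing reduction and construct the bipartite TJ instance $(G', C_1', C_2')$ as follows. Let $D$ be a set of $k_{\max}$ new isolated vertices, distributed arbitrarily between the two sides of the bipartition so that $G' = G \cup D$ remains bipartite. For $i \in \{1,2\}$, let $D_i \subseteq D$ have size $k_{\max} - |C_i|$, and set $C_i' := C_i \cup D_i$; then $|C_1'| = |C_2'| = k_{\max}$, and since vertices in $D$ are isolated in $G'$, both are valid vertex covers of $G'$.

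Next, I would verify the correspondence between TAR and TJ sequences. In the forward direction, a TAR sequence $S_0, S_1, \ldots, S_r$ of length $r \leq \ell$ is simulated step-by-step: each removal of a vertex $v$ becomes a jump from $v$ to a vacant dummy, and each addition of $v$ becomes a jump from an occupied dummy to $v$, while maintaining the invariant that the TJ cover at step $i$ is $S_i$ together with exactly $k_{\max} - |S_i|$ dummies. This yields a TJ sequence of the same length, so the TJ instance is a yes-instance whenever the TAR instance is. In the reverse direction, I would consider a shortest TJ sequence of length $r'$ from $C_1'$ to $C_2'$, discard any wasteful dummy-to-dummy jumps, and then project each jump to TAR operations: a dummy-to-$G$ jump becomes a single TAR addition, a $G$-to-dummy jump becomes a single TAR removal, and a $G$-to-$G$ swap becomes a TAR addition followed by a TAR removal (in that order, so that the intermediate is automatically a valid vertex cover of $G$, being a superset of one). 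The resulting TAR sequence has length at most $2r'$, so the TJ instance admits a reconfiguration sequence of length at most $\ell$ if and only if the original TAR instance admits one of length at most $2\ell$; setting the TJ parameter to $\ell$ gives a polynomial-time reduction that maps the parameter to a linear function of itself, which suffices for an FPT-reduction.

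The main obstacle is ensuring that the projection of a $G$-to-$G$ swap is a valid TAR step pair, specifically that the intermediate state after the ``addition'' stays within the size bound $k_{\max}$ of the original TAR instance. This is guaranteed provided that before the swap there is at least one vacant dummy in the TJ state, i.e., $|S \cap V(G)| < k_{\max}$. If the TAR reduction does not already enforce this invariant, it can be imposed by a cosmetic modification: enlarge $k_{\max}$ by one (which does not affect the TAR complexity) and add one extra vertex to $D$, so that $|S \cap V(G)|$ never attains the new upper bound. Beyond this technicality, bipartiteness is preserved trivially since the added vertices are isolated, and the parameter $\ell$ is preserved up to a factor of $2$, so the $\textsf{W[1]}$-hardness of TAR with parameter $\ell$ transfers directly to TJ with parameter $\ell$.
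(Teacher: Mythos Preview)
Your plan has a genuine gap: the claimed equivalence ``TJ $\leq \ell$ iff TAR $\leq 2\ell$'' is false in general, and without it the black-box transfer breaks down. The forward simulation gives only TAR $\leq r \Rightarrow$ TJ $\leq r$ (one jump per TAR step), while the reverse gives TJ $\leq r' \Rightarrow$ TAR $\leq 2r'$. These bounds are not inverses of one another, so no single choice of $\ell_{\text{TJ}}$ as a function of $\ell_{\text{TAR}}$ yields an if-and-only-if. Concretely, take $G$ a $4$-cycle on $\{a,b,c,d\}$, $C_1=\{a,b\}$, $C_2=\{c,d\}$, $k_{\min}=2$, $k_{\max}=4$. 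The optimal TAR length is $4$ (one must add both $c,d$ before removing either of $a,b$), while in your padded TJ instance (size $4$, two shared dummies) the optimal length is $3$ (e.g.\ dummy$\to c$, then $a\to d$, then $b\to$dummy). With $\ell=2$ your claimed biconditional asserts ``TJ $\leq 2$ iff TAR $\leq 4$,'' but the left side is false and the right side is true. So the reduction, as a black box, does not preserve yes/no answers.

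The paper sidesteps this entirely by giving a \emph{direct} reduction from \textsc{Clique} to TJ \textsc{Vertex Cover Reconfiguration} on a tailor-made bipartite graph (copies $L_V,R_V$ of $V(G)$ joined by a matching, edge-vertices $L_E$, and a biclique $K_{\binom{\kappa}{2}+1,\binom{\kappa}{2}+1}$ between $L_K$ and $R_K$), asking for a sequence of exactly $2\binom{\kappa}{2}+2\kappa+1$ jumps that flips the biclique side. The argument counts jumps precisely: crossing the biclique forces $\binom{\kappa}{2}$ tokens to vacate $L_E$, which is possible within the budget only if a $\kappa$-clique exists. Your approach could perhaps be rescued by opening up the specific TAR reduction of~\cite{DBLP:journals/algorithms/MouawadNRS18} and proving a gap property (yes-instances have TAR length $\leq \ell_0$, no-instances have TAR length $> 2\ell_0$), but that is additional work you have not done, and at that point a direct TJ reduction is simpler. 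Separately, your fix of ``enlarge $k_{\max}$ by one'' silently changes the TAR instance and could in principle flip a no-instance to yes; this too would need justification from the internals of the source reduction.
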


\begin{proof}
We reduce the \textsc{Clique} problem to the \textsc{Vertex Cover Reconfiguration} problem on bipartite graphs (under the token jumping model). Given an instance $(G, \kappa)$ of \textsc{Clique}, we create a bipartite graph $H$, with parts $L$ (left) and $R$ (right), as follows. We first add a copy of $V(G)$ to $L$ and another copy to $R$. We use $L_V = \{u_1, \ldots, u_n\}$ and $R_V = \{v
_1, \ldots, v_n\}$ to denote these two sets, respectively. Then, for each edge $e \in E(G)$, we add a vertex $u_e$ in $L$. We use $L_E$ to denote this set. Finally, we add ${\kappa \choose 2} + 1$ vertices to $L$ and ${\kappa \choose 2} + 1$ vertices to $R$, which we denote by $L_K$ and $R_K$, respectively. We proceed by describing the edges of $H$. We add all the edges $\{u_i, v_i\}$, $i \in [n]$, so that the graph induced by $L_V \cup R_V$ is a matching. We add edges between every vertex in $L_K$ and every vertex in $R_K$ such that the graph induced by $L_K \cup R_K$ is a biclique $K_{{\kappa \choose 2} + 1, {\kappa \choose 2} + 1}$. Finally, for every vertex $v_e \in L_E$, where $e = \{v_i, v_j\}$, we add the edges $\{u_e, v_i\}$ and $\{u_e, v_j\}$. This completes the construction of the bipartite graph $H$. 

The corresponding 
\textsc{Vertex Cover Reconfiguration} instance requires a starting vertex cover $A$, which we set as $A = L_V \cup L_E \cup L_K = L$, and a target vertex cover $B$, which we set as $B =  L_V \cup L_E \cup R_K$. In other words, we simply want to switch sides on the biclique so that the tokens occupy the right side instead of the left. Finally, we require a reconfiguration sequence of at most $\ell = 2{\kappa \choose 2} + 2\kappa + 1$ jumps.  

Assume that $G$ has a clique of size $\kappa$ and let $C = \{v_{i_1}, \ldots, v_{i_\kappa}\}$ denote the vertices of that clique. Then, we can reconfigure $A$ to $B$ as follows. 
\begin{enumerate}
    \item Jump the token on $u_{i_1}$ to $v_{i_1}$, then the token on $u_{i_2}$ to $v_{i_2}$, all the way to the token on $u_{i_\kappa}$ to $v_{i_\kappa}$ ($\kappa$ jumps in total). 
    \item Since $C$ is a clique, we now have ${\kappa \choose 2}$ tokens in $L_E$ whose neighborhood in $R$ is full of tokens. So we can jump those tokens to any ${\kappa \choose 2}$ vertices of $R_K$ (${\kappa \choose 2}$ jumps in total). 
    \item Next, we jump any token from $L_K$ to the remaining non-occupied vertex from $R_K$ ($1$ jump in total). 
    \item Now, the remaining tokens in $L_K$ will jump to $L_E$ to replace the lost tokens (${\kappa \choose 2}$ jumps in total). 
    \item Finally, the $\kappa$ tokens in $R_V$ will jump back to their corresponding vertex in $L_V$ ($\kappa$ jumps in total). 
\end{enumerate}

For the reverse direction, assume that we have a sequence of at most $\ell$ jumps transforming vertex cover $A$ to vertex cover $B$. We can assume, without loss of generality, that the tokens in $L_V \cup R_V$ will each remain on their corresponding matching edge and no token from $V(H) \setminus (L_V \cup R_V)$ will ever jump to some vertex in $L_V \cup R_V$; otherwise, if some token jumps to a matching edge so that the other token jumps away, we can simply switch the roles of both tokens and obtain an equivalent sequence. In other words, we can assume that every vertex cover in a reconfiguration sequence will have exactly (the same) $n$ tokens in $L_V \cup R_V$ (each token never leaving its edge). Now, since $H[L_K \cup R_K]$ is a biclique $K_{{\kappa \choose 2} + 1, {\kappa \choose 2} + 1}$, we know that any reconfiguration sequence from $A$ to $B$ must contain some vertex cover $S$ such that $S \cap (L_K \cup R_K) = 2{\kappa \choose 2} + 1$, with $S \cap L_K = {\kappa \choose 2} + 1$ and $S \cap R_K = {\kappa \choose 2}$. From our previous observation, these additional ${\kappa \choose 2}$ tokens must come from $L_E$. Hence, for any sequence of jumps transforming $A$ to $B$, at least $2{\kappa \choose 2} + 1$ jumps are required  for tokens not in $L_V \cup R_V$; at least ${\kappa \choose 2}$ jumps from $A$ to $S$ and at least ${\kappa \choose 2} +  1$ jumps from $S$ to $B$. Given that $\ell = 2{\kappa \choose 2} + 2\kappa + 1$, we know that at most $\kappa$ tokens from $L_V \cup R_V$ can move (and each move will cost two jumps). However, given that $S$ must exist, moving less than $\kappa$ tokens in $L_V \cup R_V$ is impossible since we can never free ${\kappa \choose 2}$ tokens in $L_E$; the neighborhood of any ${\kappa \choose 2}$ vertices in $L_E$ has size at least $\kappa$. Therefore, the corresponding $\kappa$ vertices from $R_V$ must form a clique in $G$, which completes the proof. 
\end{proof}

\begin{theorem}
If we allow multiple tokens to occupy the same vertex then the \textsc{Vertex Cover Reconfiguration} problem parameterized by $\ell$, the length of a reconfiguration sequence, is $\textsf{W[1]}$-hard in the token sliding, even when restricted to bipartite graphs. 
\end{theorem}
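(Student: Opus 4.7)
The plan is to adapt the \textsc{Clique} reduction used for the token jumping case to the multi-token sliding regime by replacing each jump with a short sequence of slides, while ensuring that every intermediate configuration remains a vertex cover. Given $(G, \kappa)$, I would build essentially the same bipartite graph $H$ as in the TJ reduction, consisting of a matching between $L_V$ and $R_V$, edge-gadgets $L_E$ adjacent to the corresponding $R_V$-vertices, a biclique $L_K \times R_K$ of size $\binom{\kappa}{2}+1$, and the same initial and target covers $A = L$ and $B = L_V \cup L_E \cup R_K$. To enable the required slides between vertices that are non-adjacent in the original construction (most notably $L_E \to R_K$ in step~2 of the TJ choreography and $L_K \to L_E$ in step~4), I would introduce a small auxiliary bridging structure: one or a few vertices in $R$ adjacent to both $L_E$ and $L_K$, and include them in both $A$ and $B$ so that they remain permanently in the cover along the entire sequence. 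The multi-token property is crucial here: a token can slide onto an already-occupied auxiliary vertex without that vertex leaving the support, so its covering role is preserved throughout.

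For the forward direction, given a clique $\{v_{i_1}, \dots, v_{i_\kappa}\}$ in $G$, I would exhibit an explicit sequence of $\ell = \mathcal{O}(\kappa^2)$ slides mirroring the TJ choreography: first, slide each $u_{i_j}$ along its matching edge to $v_{i_j}$ ($\kappa$ slides); then, for each of the $\binom{\kappa}{2}$ clique edges $e = \{v_{i_a}, v_{i_b}\}$, route the token on $u_e$ through the auxiliary bridge to an $R_K$ vertex (a constant number of slides per edge); perform the single biclique slide that places the last $L_K$ token on the remaining $R_K$ vertex; route the other $L_K$ tokens back to the vacated $L_E$ vertices via the auxiliary bridge; finally, slide each $v_{i_j}$ back to $u_{i_j}$ along the matching edge. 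Correctness of intermediate covers will rely on: matching slides are legal because the matching partner is temporarily occupied; $L_E$-slides are legal because their $R_V$ neighbours are held by the clique tokens; auxiliary-vertex slides are legal because the auxiliary vertex never leaves the support; and biclique slides leaving $L_K$ only happen after $R_K$ has accumulated enough tokens to cover the remaining $L_K \times R_K$ edges.

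For the reverse direction, I would argue that any reconfiguration sequence of length at most $\ell$ induces a $\kappa$-clique in $G$, mimicking the TJ argument. The biclique $K_{\binom{\kappa}{2}+1, \binom{\kappa}{2}+1}$ forces any valid sequence to pass through a configuration in which $R_K$ holds $\binom{\kappa}{2}+1$ tokens; populating $R_K$ up to size $\binom{\kappa}{2}+1$ without breaking the cover requires $\binom{\kappa}{2}$ of those tokens to originate from outside $L_K$, and the only feasible source is $L_E$. Freeing an $L_E$-token requires both of its $R_V$ neighbours to be simultaneously occupied, so every freed $L_E$-token witnesses one edge of $G$ whose both endpoints carry tokens in $R_V$. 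The slide budget $\ell = \mathcal{O}(\kappa^2)$ will allow at most $\kappa$ distinct vertices of $R_V$ to ever be occupied during the sequence (each requiring at least two slides for a round trip $L_V \to R_V \to L_V$ of one token, plus the bridging slides), and these $\kappa$ vertices of $R_V$ must then cover the $\binom{\kappa}{2}$ edges freed, i.e., form a clique of size $\kappa$ in $G$.

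The main obstacle will be designing the auxiliary bridging structure so that it supports all the required slide routes without inadvertently creating shortcuts that would allow reconfiguration in $\ell$ slides for graphs without a $\kappa$-clique. In particular, each bridge vertex must stay in the cover at all times (so that its non-trivial neighbourhood remains covered whenever tokens transit through it), its adjacency pattern must permit precisely the routes $L_E \leftrightarrow R_K$ and $L_K \leftrightarrow L_E$ required by the choreography, and the overall slide count per ``logical jump'' must remain a small constant so that $\ell$ stays $\mathcal{O}(\kappa^2)$. A careful case-by-case verification of cover preservation across all bridge slides, combined with an inflated but still polynomial-in-$\kappa$ budget $\ell$, will then yield the \textsf{FPT}-reduction and complete the proof.
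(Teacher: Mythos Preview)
Your high-level plan is sound, but the specific bridging gadget you propose does not work, and the gap is not merely a detail to be filled in later. You place the bridge vertices in $R$, adjacent to $L_E$ and $L_K$, and then claim they support the route $L_E \leftrightarrow R_K$. In a bipartite graph this is impossible: the bridge is in $R$ and so is $R_K$, so there is no edge between them, and no token can travel from $L_E$ to $R_K$ via your bridge. But this route is exactly the one that matters. The biclique forces $R_K$ to be completely filled \emph{before} any $L_K$ vertex can be safely vacated; the extra $\binom{\kappa}{2}$ tokens needed in $R_K$ must therefore arrive from $L_E$ directly, not by first emptying $L_K$. Your bridge only lets tokens shuttle between $L_E$ and $L_K$, which is useless until $R_K$ is already full.

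The paper's construction avoids auxiliary bridge vertices altogether: it makes each $z_i \in R_K$ adjacent to one designated block $L_{E_i}$ of edge-vertices, so an $L_E$ token can slide straight into $R_K$ in one step. The return route $L_K \to L_E$ then goes \emph{through} $R_K$ using the multi-token property (slide an $L_K$ token onto an already-occupied $z_i$, then slide one of the two tokens on $z_i$ out to $L_{E_i}$), costing two slides per token and giving the budget $\ell = 3\binom{\kappa}{2} + 2\kappa + 1$. To make this structured assignment between the $\binom{\kappa}{2}$ edge-blocks and the $\binom{\kappa}{2}$ usable $R_K$ vertices work cleanly in both directions, the paper reduces from \textsc{Multicolored Clique} rather than plain \textsc{Clique}; this guarantees exactly one clique edge per colour pair and hence exactly one freed token per $L_{E_i}$. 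Your reverse-direction counting argument is essentially right once the construction is fixed, but with a universal bridge vertex adjacent to all of $L_E$ you would also have to rule out unintended shortcuts, a complication the paper's direct $R_K$--$L_E$ edges sidestep.
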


\begin{proof}
We give a reduction from the \textsc{Multicolored Clique} problem. Given an instance $(G, \kappa)$ of \textsc{Multicolored Clique}, we assume, without loss of generality, that $V(G)$ is partitioned into $\kappa$ independent sets $V_1$, $V_2$, $\ldots$, $V_\kappa$, each consisting of $n$ vertices.  We create a bipartite graph $H$, with parts $L$ (left) and $R$ (right), as follows. We first add a copy of each $V_i$, $i \in [\kappa]$, to $L$ and another copy to $R$. We use $L_{V_i} = \{u^i_1, \ldots, u^i_n\}$ and $R_{V_i} = \{v^i
_1, \ldots, v^i_n\}$ to denote these two sets, respectively. We let $E_{i,j}$, $i < j$, denote the edges of $G$ connecting vertices in $V_i$ and $V_j$. Now, for each edge $e \in E_{i,j}$, we add a vertex $v_e$ in $L_{E_{i,j}}$, where $L_{E_{i,j}} \subset L$. Finally, we add ${\kappa \choose 2} + 1$ vertices to $L$ and ${\kappa \choose 2} + 1$ vertices to $R$, which we denote by $L_K = \{y_1, \ldots, y_{{\kappa \choose 2} + 1}\}$ and $R_K = \{z_1, \ldots, z_{{\kappa \choose 2} + 1}\}$, respectively. We proceed by describing the edges of $H$. We add all the edges $\{u^i_j, v^i_j\}$, $i \in [\kappa]$ and $j \in [n]$, so that the graph induced by $L_{V_i} \cup R_{V_i}$ is a matching. We add edges between every vertex in $L_K$ and every vertex in $R_K$ such that the graph induced by $L_K \cup R_K$ is a biclique $K_{{\kappa \choose 2} + 1, {\kappa \choose 2} + 1}$. For every vertex $v_e \in L_{E_{i,j}}$, where $e = \{v^i_p, v^j_q\}$, we add the edges $\{v_e, v^i_p\}$ and $\{v_e, v^j_q\}$. We order the sets $L_{E_{i,j}}$ arbitrarily so that we can refer to them as $L_{E_{1}}$, $L_{E_{2}}$, $\ldots$, $L_{E_{{\kappa \choose 2}}}$. Finally, we add all the edges between $z_i$ and (every vertex in) $L_{E_{i}}$, $i \in [{\kappa \choose 2}]$.
This completes the construction of the bipartite graph $H$. We let $L_V = \bigcup_{i \in [\kappa]}{{L_{V_i}}}$, $R_V = \bigcup_{i \in [\kappa]}{{L_{R_i}}}$, and $L_E = \bigcup_{i < j}{L_{E_{i,j}}}$. 

The corresponding 
\textsc{Vertex Cover Reconfiguration} instance requires a starting vertex cover $A$, which we set as $A = L_V \cup L_E \cup L_K$, and a target vertex cover $B$, which we set as $B =  L_V \cup L_E \cup R_K$. Finally, we require a reconfiguration sequence of at most $\ell = 3{\kappa \choose 2} + 2\kappa + 1$ slides.  

Assume that $G$ has a multicolored clique of size $\kappa$ and let $C = \{v^1_{j_1}, \ldots, v^\kappa_{j_\kappa}\}$ denote the vertices of that clique. Then, we can reconfigure $A$ to $B$ as follows. 
\begin{enumerate}
    \item Slide the token on $u^1_{j_1}$ to $v^1_{j_1}$, then the token on $u^2_{j_2}$ to $v^2_{j_2}$, all the way to the token on $u^\kappa_{j_\kappa}$ to $v^\kappa_{j_\kappa}$ ($\kappa$ slides in total). 
    \item Since $C$ is a multicolored clique, we now have ${\kappa \choose 2}$ tokens, one in each set $L_{E_{i}}$, $i \in [{\kappa \choose 2}]$, whose neighborhood in $R$ is full of tokens. So we can slide those tokens to the vertices $\{z_1, \ldots, z_{{\kappa \choose 2}}\}$ (${\kappa \choose 2}$ slides in total). 
    \item Next, we slide any token from $L_K$ to the remaining non-occupied vertex $z_{{\kappa \choose 2} + 1}$ from $R_K$ ($1$ slide in total). 
    \item Now, the remaining tokens in $L_K$ will slide to $R_K$, to $\{z_1, \ldots, z_{{\kappa \choose 2}}\}$, then to $L_E$ to replace the lost tokens ($2{\kappa \choose 2}$ slides in total). 
    \item Finally, the $\kappa$ tokens in $R_V$ will slide back to their corresponding vertex in $L_V$ ($\kappa$ slides in total). 
\end{enumerate}

For the reverse direction, assume that we have a sequence of at most $\ell$ slides transforming vertex cover $A$ to vertex cover $B$. We can assume, without loss of generality, that the tokens in $L_V \cup R_V$ will each remain on their corresponding matching edge; otherwise, if some token from outside the matching slides to a matching edge so that the matching token slides away, we can simply switch the roles of both tokens and obtain an equivalent sequence. In other words, we can assume that every vertex cover in a reconfiguration sequence will have at least $\kappa n$ tokens in $L_V \cup R_V$ (each token never leaving its edge). Now, since $H[L_K \cup R_K]$ is a $K_{{\kappa \choose 2} + 1, {\kappa \choose 2} + 1}$ biclique, we know that any reconfiguration sequence from $A$ to $B$ must contain some vertex cover $S$ such that $S \cap (L_K \cup R_K) = 2{\kappa \choose 2} + 1$, with $S \cap L_K = {\kappa \choose 2} + 1$ and $S \cap R_K = {\kappa \choose 2}$. From our previous observation, these additional ${\kappa \choose 2}$ tokens must come from $L_E$. Hence, for any sequence of slides transforming $A$ to $B$, at least $3{\kappa \choose 2} + 1$ slides are required  for tokens not in $L_V \cup R_V$; at least ${\kappa \choose 2}$ slides from $A$ to $S$ and at least $2{\kappa \choose 2} +  1$ slides from $S$ to $B$. Given that $\ell = 3{\kappa \choose 2} + 2\kappa + 1$, we know that at most $\kappa$ tokens can slide to $R_V$ (as each token will have to slide back). However, given that $S$ must exist, sliding less than $\kappa$ tokens to $R_V$ is impossible since we can never free ${\kappa \choose 2}$ tokens in $L_E$; the neighborhood of any ${\kappa \choose 2}$ vertices in $L_E$ has size at least $\kappa$. Therefore, to free ${\kappa \choose 2}$ tokens in $L_E$ by sliding $\kappa$ tokens to $R_V$, the corresponding vertices (and edges) must form a multicolored clique in $G$, which completes the proof. 
\end{proof}

\begin{theorem}
If we forbid multiple tokens to occupy the same vertex then the \textsc{Vertex Cover Reconfiguration} problem parameterized by $\ell$, the length of a reconfiguration sequence, is $\textsf{W[1]}$-hard in the token sliding, even when restricted to bipartite graphs. 
\end{theorem}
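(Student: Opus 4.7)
The plan is to reduce from \textsc{Multicolored Clique}, adapting the construction of the previous (multi-token) theorem so that it works under the strict token-sliding model on bipartite graphs. Given an instance $(G,\kappa)$ with color classes $V_1,\dots,V_\kappa$ of size $n$, I would build a bipartite graph $H = (L \cup R, E)$ with (i) a selection gadget per color class: a perfect matching between $L_{V_i} = \{u_1^i,\ldots,u_n^i\}$ and $R_{V_i} = \{v_1^i,\ldots,v_n^i\}$; (ii) an edge-vertex $v_e \in L_E$ for each $e \in E(G)$ joined to the two $R_V$-endpoints of $e$; and (iii) a \emph{bottleneck gadget} on $L_K \cup R_K$ of size $\Theta({\kappa \choose 2})$, with additional edges from certain $R_K$-vertices to the corresponding edge-vertex sets $L_{E_{i,j}}$. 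Set $A = L_V \cup L_E \cup L_K$, $B = L_V \cup L_E \cup R_K$, and $\ell$ polynomial in $\kappa$.

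The core combinatorial argument would mirror the previous proof. \emph{Forward:} given a multicolored clique $C$, slide $\kappa$ matching tokens $L_V \to R_V$ (covering the $R_V$-neighborhoods of the ${\kappa \choose 2}$ edge-vertices of $C$), slide these freed edge-tokens through the bottleneck into $R_K$, flip the bottleneck gadget, and slide the matching tokens back; this accounts for exactly $\ell$ slides. \emph{Reverse:} any valid sequence must traverse a vertex cover in which essentially all of $R_K$ is covered, which (by the counter property of the gadget) requires ${\kappa \choose 2}$ surplus tokens in $R$; the only source of these is $L_E$, and freeing ${\kappa \choose 2}$ edge-tokens within the tight budget of $\kappa$ matching round-trips is possible iff the chosen $R_V$-vertices induce ${\kappa \choose 2}$ edges, i.e., form a multicolored clique.

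The main obstacle, and the crucial departure from the previous proof, is the design of the bottleneck gadget. The biclique $K_{p,p}$ used in the multi-token proof is rigid under strict sliding: starting from ``$L_K$ fully occupied, $R_K$ empty'', no single slide $y_j \to z_i$ preserves the vertex-cover property, because every remaining biclique edge $\{y_j, z_{i'}\}$ with $i' \neq i$ becomes uncovered. I would therefore replace the biclique by a bipartite gadget tailored for strict sliding---for instance, a cascade of ${\kappa \choose 2}+1$ small flippable ``cells'' (each, say, a $C_4$ with pendant vertices in $L$ and $R$), each cell individually flippable using one auxiliary token routed in from the $L_E$ side, and the cells interlinked so that the entire gadget can only be flipped using a surplus of ${\kappa \choose 2}+1$ auxiliary tokens. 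The cells are arranged so that, at every intermediate step, each cell is either fully covered on its $L$-side, fully covered on its $R$-side, or in a transient state locally witnessed by a single visiting auxiliary token, ensuring vertex-cover validity throughout.

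The hardest part of the plan will be tuning this gadget so that both directions of the correctness proof go through: the forward slide schedule must avoid every collision and every vertex-cover violation, while the lower bound must still force $\kappa$ matching tokens to move (and no other ``shortcut'' to the bottleneck's counter can be exploited). Given the modular nature of the cascaded gadget, these checks become essentially local per cell, so after fixing the cell and its interconnections the verification reduces to routine bookkeeping. Since $\ell$ remains polynomial in $\kappa$ and bipartiteness is preserved by construction, this yields an $\textsf{FPT}$-reduction and establishes $\textsf{W[1]}$-hardness of \textsc{Vertex Cover Reconfiguration} under strict token sliding on bipartite graphs.
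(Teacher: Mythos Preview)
Your diagnosis of the obstacle is exactly right: under strict token sliding, the biclique $K_{p,p}$ cannot be flipped from the inside, so the multi-token construction breaks. However, your proposal does not actually resolve this obstacle---it only names it. The ``cascade of flippable cells'' is described at the level of intent (``each cell individually flippable using one auxiliary token'', ``interlinked so that the entire gadget can only be flipped using a surplus of ${\kappa\choose 2}+1$ tokens''), but no concrete gadget is given, and the two requirements you impose on it pull in opposite directions: making each cell locally flippable tends to create shortcuts that let the gadget be flipped with fewer surplus tokens, while enforcing the global surplus bound tends to reintroduce the rigidity you were trying to escape. Declaring that the verification ``reduces to routine bookkeeping'' once the cell is fixed is not a proof until the cell is actually fixed and the bookkeeping is done. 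As written, the reduction is missing its central component.

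The paper avoids this gadget-design problem altogether by changing the source problem. It reduces from \textsc{Multicolored Biclique} rather than \textsc{Multicolored Clique}, and takes the \emph{complement} of the bipartite edge relation inside $H$, so that a multicolored biclique in $G$ corresponds to $2\kappa$ vertices with \emph{no} edges among them in $H$. There are no edge-vertices $L_E$ at all. Two ordinary bicliques $K_{\kappa+1,\kappa+1}$ (on $L_K\cup R_K$ and on $L_Q\cup R_Q$) serve as the bottlenecks, and the rigidity issue is sidestepped because the tokens that fill $R_K$ slide in \emph{from outside} the biclique, along edges $\{v_i\}\times L_{X_i}$: first $\kappa$ tokens leave $L_X$ for $R_K$, then one token crosses the biclique, then the remaining $L_K$ tokens exit to $R_Y$ (symmetrically for the second biclique). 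At every step both sides of each biclique are covered. The budget is only $\ell = 4\kappa+2$, and the reverse direction is a direct counting argument on how many tokens can leave $L_X$ and $R_Y$. This is both simpler and tighter than what your cascade gadget would need to achieve.
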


\begin{proof}
We give a reduction from the \textsc{Multicolored Biclique} problem. An instance of the \textsc{Multicolored Biclique} problem consists of a bipartite graph $G = (X \cup Y, E)$ and a parameter $\kappa$, where $X$ and $Y$ are partitioned into $\kappa$ sets (each of size $n$), i.e., $X = X_1 \cup \ldots X_\kappa$, and $Y = Y_1 \cup \ldots \cup Y_\kappa$. We assume, without loss of generality, that each vertex in $X_i$ ($Y_i$), $i \in [\kappa]$, has at least two neighbors and at least two non-neighbors in each set of $Y_j$ ($X_j$), $j \in [\kappa]$. The goal is to decide if $G$ contains a subgraph isomorphic to the biclique $K_{\kappa,\kappa}$ containing exactly one vertex from each $X_i$ and exactly one vertex from each $Y_i$, $i \in [\kappa]$. We call such a biclique a multicolored biclique. 

Given an instance $(G = (X \cup Y, E), \kappa)$ of \textsc{Multicolored Biclique}, we create a bipartite graph $H$, with parts $L$ (left) and $R$ (right), as follows. We first add a copy of each $X_i$, $i \in [\kappa]$, to $L$ which we denote by $L_{X_i} = \{x^i_1, \ldots, x^i_n\}$. Similarly, we add a copy of each $Y_i$, $i \in [\kappa]$, to $R$ which we denote by $R_{Y_i} = \{y^i_1, \ldots, y^i_n\}$. We use $L_X$ and $R_Y$ to denote the union of the sets on each side, respectively. We add $\kappa + 1$ new vertices to $L$ and $\kappa + 1$ new vertices to $R$, which we denote by $L_K = \{u_1, \ldots, u_{\kappa + 1}\}$ and $R_K = \{v_1, \ldots, v_{\kappa + 1}\}$, respectively. Finally, we add $\kappa + 1$ new vertices to $L$ and $\kappa + 1$ new vertices to $R$, which we denote by $L_Q = \{w_1, \ldots, w_{\kappa + 1}\}$ and $R_Q = \{z_1, \ldots, z_{\kappa + 1}\}$, respectively. We proceed by describing the edges of $H$. We add a copy of each non-edge in $G$ between the corresponding vertices in $L_X \cup R_Y$, i.e., we add an edge between a vertex in $L_X$ and a vertex in $R_Y$ whenever the corresponding vertices are not adjacent in $G$. We add edges between every vertex in $L_K$ and every vertex in $R_K$ as well as between every vertex in $L_Q$ and every vertex in $R_Q$ such that the graphs induced by $L_K \cup R_K$ and $L_Q \cup R_Q$ are bicliques $K_{\kappa + 1, \kappa + 1}$. Finally, we connect every vertex $u_i$ with every vertex in $R_{Y_i}$, every vertex $v_i$ with every vertex in $L_{X_i}$, every vertex $w_i$ with every vertex in $R_{Y_i}$, and every vertex $z_i$ with every vertex in $L_{X_i}$, $i \in [\kappa]$. This completes the construction of $H$, which is clearly a  bipartite graph. 

The corresponding 
\textsc{Vertex Cover Reconfiguration} instance requires a starting vertex cover $A$, which we set as $A = L_X \cup R_Y \cup L_K \cup R_Q$, and a target vertex cover $B$, which we set as $B =  L_X \cup R_Y \cup R_K \cup L_Q$. Finally, we require a reconfiguration sequence of at most $\ell = 4\kappa + 2$ slides.  

Assume that $G$ has a multicolored biclique and let $\{x^1_{i_1}, \ldots, x^\kappa_{i_\kappa}, y^1_{j_1}, \ldots, y^\kappa_{j_\kappa}\}$ denote the vertices of that biclique. Then, we can reconfigure $A$ to $B$ as follows. 
\begin{enumerate}
    \item Slide the token on $x^1_{i_1}$ to $v_{1} \in R_K$, then the token on $x^2_{i_2}$ to $v_{2} \in R_K$, all the way to the token on $x^\kappa_{i_\kappa}$ to $v_{\kappa} \in R_K$ ($\kappa$ slides in total). 
    \item Slide the token on $y^1_{j_1}$ to $w_{1} \in L_Q$, then the token on $y^2_{j_2}$ to $w_{2} \in L_Q$, all the way to the token on $y^\kappa_{j_\kappa}$ to $w_{\kappa} \in L_Q$ ($\kappa$ slides in total). Note that this is possible since the vertices $\{x^1_{i_1}, \ldots, x^\kappa_{i_\kappa}\}$ and the vertices $\{y^1_{j_1}, \ldots, y^\kappa_{j_\kappa}\}$ are pairwise non-adjacent in $H$. 
    \item Next, we slide the token on $u_{\kappa + 1} \in L_K$ to the remaining non-occupied vertex $v_{\kappa + 1} \in R_K$ and we slide the token on $z_{\kappa + 1} \in R_Q$ to the remaining non-occupied vertex $w_{\kappa + 1} \in L_Q$ ($2$ slides in total). 
    \item Now, the remaining tokens in $L_K$ will slide to $R_Y$ to replace the lost tokens ($\kappa$ slides in total). 
    \item Finally, the remaining tokens in $R_Q$ will slide to $L_X$ to replace the lost tokens ($\kappa$ slides in total).
\end{enumerate}

For the reverse direction, assume that we have a sequence of at most $\ell$ slides transforming vertex cover $A$ to vertex cover $B$. Since $H[L_K \cup R_K]$ and $H[L_Q \cup R_Q]$ are $K_{\kappa + 1, \kappa + 1}$ bicliques, we know that any reconfiguration sequence from $A$ to $B$ must contain an  (earliest) vertex cover $S$ followed by som (earliest) vertex cover $T$ ($S$ and $T$ not consecutive but $S$ appears before $T$) such that one of the following is true:
\begin{enumerate}
    \item Either $S \cap (L_K \cup R_K) = 2\kappa + 1$, with $S \cap L_K = \kappa$ and $S \cap R_K = \kappa + 1$, no token will subsequently (after $S$) slide from $R_K$, $T \cap (L_Q \cup R_Q) = 2\kappa + 1$, with $T \cap L_Q = \kappa + 1$ and $T \cap R_Q = \kappa$, and no token will subsequently (after $T$) slide from $L_Q$; or
    \item $S \cap (L_Q \cup R_Q) = 2\kappa + 1$, with $S \cap L_Q = \kappa + 1$ and $S \cap R_Q = \kappa$, no token will subsequently (after $S$) slide from $L_Q$, $T \cap (L_K \cup R_K) = 2\kappa + 1$, with $T \cap L_K = \kappa$ and $T \cap R_K = \kappa + 1$,  and no token will subsequently (after $T$) slide from $R_K$.
\end{enumerate}
We assume, without loss of generality, that $S$ and $T$ are the earliest vertex covers in the sequence such that $S$ and $T$ satisfy the first condition (condition (1)). Note that, since we do not allow multiple tokens to occupy the same vertex, we must have $\kappa$ tokens that slide from $L_X$ to $R_K$ and $\kappa$ tokens that slide from $R_Y$ to $L_Q$. Combined with the fact that $\ell = 4\kappa + 2$ (and that $2$ slides are required for the tokens on $u_{\kappa + 1}$ and $z_{\kappa + 1}$), we know that exactly $\kappa$ tokens will leave $L_X$, exactly $\kappa$ tokens will enter $L_X$, exactly $\kappa$ tokens will leave $R_Y$, and exactly $\kappa$ tokens will enter $R_Y$. 

Let the sequence of slides be chosen (amongst all sequences of at most $\ell$ slides transforming $A$ to $B$) to be a sequence that minimizes the number of slides transforming $A$ to $S$. We claim that, in this case, $S$ must exclude at least one vertex from each $L_{X_i}$, $i \in [\kappa]$.  Assume otherwise. Then, since $S \cap (L_K \cup R_K) = 2\kappa + 1$, every token on $v_{i} \in S \cap R_K$ must come from a vertex in $L_{X_i}$, $i \in [\kappa]$ (recall that we do not allow multiple tokens to occupy the same vertex). Hence, in a sequence that minimizes the number of slides transforming $A$ to $S$, if $S$ includes all vertices from some $L_{X_i}$ then it must be the case that the sequence contains another vertex cover $S'$, prior to $S$, such that $S'$ is the last vertex cover in the sequence that slides some token to $L_{X_i}$ which is different from the token already on $v_i$ (the slide must be from $R_Y$ to $L_X$). We can assume, without loss of generality, that the sequence transforming $S'$ to $S$ consists only of slides from $L_X$ to $R_K$; the only other possible slides are between $R_Y$ and  $L_Q$ which can be safely deleted contradicting our choice of sequence. Now, consider the sequence that transforms $A$ to $S$ without including the slide resulting in $S'$. It is not hard to see that the sequence remains valid; $R_K \cup R_Q$ cover all the edges between $L_{X_i}$ and $R_K \cup R_Q$ which implies that the slide resulting in $S'$ corresponds to a slide from some vertex $y \in R_Y$ to some vertex $x \in L_X$ such that all other neighbors of $x$ are in $S'$ as well as the vertex cover preceeding $S'$. Therefore, the existence of $S'$ implies that the sequence can be made shorter and we get the required contradiction. Using similar arguments, we can show that $S$ must exclude at most one vertex from each $L_{X_i}$, $i \in [\kappa]$. Consequently, we assume in what follows that $S$ excludes exactly one vertex from each $L_{X_i}$, $i \in [\kappa]$. We denote those vertices in $L_X$ (which are not in $S$) by $W = \{x^1_{j_1}, x^2_{j_2}, \ldots, x^\kappa_{j_\kappa}\}$. 

We now consider the vertex covers $S_{q_1}$, $S_{q_2}$, $\ldots$, $S_{q_\kappa}$, appearing in the sequence (not necessarly consecutively) but in order and all succeeding $S$ (by one or more sets) and such that $S_{q_i}$ is the earliest vertex cover (after $S$) resulting from sliding a token from some vertex $y^{q_i}_j \in R_{Y_{q_i}}$, $q_i \in [\kappa]$ and $j \in [n]$, to vertex $w_{q_i} \in L_Q$. We claim that, for each $S_{q_i}$, 
we have $S_{q_i} \cap W = \emptyset$. Hence, after $S_{q_\kappa}$ we have a set that excludes exactly one vertex from each $L_{X_i}$ and each $R_{Y_i}$, $i \in [\kappa]$, and we are done since we have $2\kappa$ vertices corresponding to a multicolored biclique in $G$. Consider the first set where the claim is not true and denote this set by $S_{q'}$. That is, we have $S_{q'} \cap W \neq \emptyset$. This implies that some token slides occured from $R_Y$ to $W$ (tokens in $R_K$ are assumed to be fixed and tokens in $R_Q$ cannot yet slide). However, as previously noted, no more than $\kappa$ tokens can slide out of $R_Y$ and therefore $S_{q'}$ cannot exist. 
\end{proof}

\end{document}